\newtheorem{Theorem}{Theorem}[section]          
\newtheorem{Proposition}[Theorem]{Proposition}   \newtheorem{Corollary}[Theorem]{Corollary}      
\newtheorem{Lemma}[Theorem]{Lemma}               \newtheorem{Definition}[Theorem]{Definition}
\newcommand{\A}{\mathscr{A}}        \newcommand{\Bo}{\mathcal{B}}
\newcommand{\C}{\mathbb{C}}      \newcommand{\cont}{\mbox{cont}}
\newcommand{\der}{\mathcal{D}}   \newcommand{\De}{\mathscr{D}}
\newcommand{\G}{\mathbb{G}}      \newcommand{\Ge}{\mathscr{G}}    \newcommand{\g}{\mathfrak{g}}
\newcommand{\Ho}{\mathcal{H}}    \newcommand{\Int}{\mbox{Int}}    \newcommand{\J}{\mathfrak{I}}
\newcommand{\La}{\mathcal{L}}    \newcommand{\LB}{\mathcal{S}}        
\newcommand{\M}{\mathcal{M}}     \newcommand{\magn}{\mbox{mag}}    
\newcommand{\Mo}{\mathfrak{M}}   \newcommand{\mo}{\mathfrak{m}}       \newcommand{\N}{\mathbb{N}}	 
\newcommand{\Nei}{\mathcal{N}}   \newcommand{\og}{\overline{\gamma}}  \newcommand{\Pe}{\mathfrak{P}}
\newcommand{\Qa}{\mathcal{Q}}            \newcommand{\R}{\mathbb{R}}         \newcommand{\Rn}{\mathcal{R}}     
\newcommand{\U}{\mathcal{U}}             \newcommand{\supp}{\mbox{supp}}     \newcommand{\Sum}{\mbox{$\sum$}}
\newcommand{\suppi}{\mbox{\emph{supp}}}  \newcommand{\Tu}{\mathcal{T}}       \newcommand{\Ts}{\mathbb{T}}  
\newcommand{\Tm}[1]{\mbox{$#1$\textlbrackdbl$\Mo$\textrbrackdbl}}
\newcommand{\Tlm}[2]{\mbox{$#1$\textlbrackdbl$#2$\textrbrackdbl}}
\newcommand{\W}{\mathcal{W}}   \newcommand{\ti}[1]{\tilde{#1}}   \newcommand{\wt}[1]{\widetilde{#1}} \newcommand{\wg}{\widetilde{\gamma}}  
\newcommand{\X}{\mathcal{X}}   \newcommand{\Z}{\mathbb{Z}}
\newcommand{\hs}[1]{\hspace{#1cm} }
\newcommand{\graph}[5]{\hspace{#1cm} \raisebox{#2cm}{\includegraphics[scale=#3]{#4.pdf}} \hspace{#5cm} }
\numberwithin{equation}{section}
\begin{document}

\title{Resurgent transseries \& Dyson-Schwinger equations}

\author{Lutz Klaczynski}  

\address{Department of Physics, Humboldt University Berlin, 12489 Berlin, Germany}
\email{klacz@mathematik.hu-berlin.de}
\date{\today}

\begin{abstract}
We employ resurgent transseries as algebraic tools to investigate two self-consistent Dyson-Schwinger equations, one in Yukawa theory and one in quantum 
electrodynamics. After a brief but pedagogical review, we derive fixed point equations for the associated anomalous dimensions and insert a 
moderately generic log-free transseries ansatz to study the possible strictures imposed. While proceeding in various stages, we develop 
an algebraic method to keep track of the transseries' coefficients.  
We explore what conditions must be violated in order to stay clear of fixed point theorems to eschew a unique solution, if so desired, as we explain. 
An interesting finding is that the flow of data between the different sectors of the transseries shows a pattern typical of resurgence, ie 
the phenomenon that the perturbative sector of the transseries talks to the nonperturbative ones in a one-way fashion. 
However, our ansatz is not exotic enough as it leads to trivial solutions with vanishing nonperturbative sectors, even when logarithmic monomials are included. 
We see our result as a harbinger of what future work might reveal about the transseries representations of observables 
in fully renormalised four-dimensional quantum field theories and adduce a tentative yet to our mind weighty argument as to why one should not expect 
otherwise.      

This paper is considerably self-contained. Readers with little prior knowledge are let in on the basic reasons why perturbative series in quantum field 
theory eventually require an upgrade to transseries. Furthermore, in order to acquaint the reader with the language utilised extensively in this work, 
we also provide a concise mathematical introduction to grid-based transseries. 

\end{abstract}

\maketitle

\tableofcontents

\newpage

\section{Introduction}\label{sec:Intro}                                                   
\subsection{Divergent series}\label{divergs}
It is surely a strength of a method if it indicates its own weaknesses and limitations. Luckily, such is the case for perturbation theory in
quantum field theory (QFT): the growth in the number of Feynman diagrams with loop order makes it seem next to impossible for the perturbative expansions 
to have other than zero radius of convergence. For example, let 
\begin{equation}\label{pertex}
 \ti{F}(\alpha, Q ) = \sum_{n \geq 0} a_{n}(Q) \alpha^{n}
\end{equation}
be a renormalised perturbative expansion of an observable quantity $F(\alpha,Q)$ in a QFT with a single coupling parameter $\alpha$ and external 
kinematical data signified by $Q$ (momenta and scales). Then typically, for fixed $Q$, the coefficients behave asymptotically as 
\begin{equation}\label{divergi}
 a_{n} \sim A^{n} n^{b} n! \ , 
\end{equation}
where $A,b \in \R$ are constants \cite{GuZi90,Sti02}. And there is yet another known source of such growth behaviour leading to the divergence 
of (\ref{pertex}), namely so-called \emph{renormalons}, caused by factorially growing coefficients of subseries due to the 
integration over high or low momenta in certain renormalised Feynman integrals (aptly called UV or IR renormalons, respectively). 
In other words, it is in this case the value of the Feynman integrals themselves rather than their abundance which brings about this 
phenomenon \cite{Ben99}. Because our models exhibit renormalons as well, we shall say some more on them in the main text.  

However, the issue with such series is that 
one can never be sure how good an approximation to the actual observable their truncation at a given positive coupling $\alpha$ really is, especially
in situations where one is interested in certain kinematical regimes like low energies in quantum chromodynamics or very high energies in quantum 
electrodynamics, because there, in these more or less extreme regimes, the coupling needs to assume large values. 

This is the problem of \emph{optimal truncation} and its severity can easily be understood: factorials will always eventually win out over powers of 
the coupling so that, for a fixed coupling, one can drive the value of the truncated series astray as much as one desires by including more and more 
terms until the result has absolutely nothing to do with the actual function (see \cite{Mar14} for a nice illustration of optimal truncation). 
This basic truth is already sufficient to understand that perturbative computations \emph{alone} can never capture the physics of (renormalisable) QFTs 
completely, no matter how far they reach in terms of loop order.  

The fact that perturbation theory generally leads to divergent power series was actually noticed and studied quite early in the 1950s by 
Hurst \cite{Hu52a,Hu52b} and Petermann \cite{Pe53a, Pe53b}, while Dyson came up with a physical interpretation\footnote{The argument was vehemently opposed by 
Simon in \cite{Si69}, p.201: ''It cannot be emphasized too greatly that this argument must be 
considered unacceptable for the problem at hand ... the uncritical use of physical arguments will lead one astray ...''} in the case of quantum 
electrodynamics in \cite{Dys51}. Further systematic studies for scalar theories were then conducted by Thirring \cite{Th53} 
and Lipatov \cite{Li77}. Beyond growth estimates, Jaffe proved that in two spacetime dimensions, perturbation theory must produce divergent series 
in superrenormalisable scalar theories with nonderivative boson self-interactions, including $(\varphi^{4})_{2}$ theory \cite{Ja65} (see also the review 
\cite{Hu06} by Hurst). Furthermore, we mention the work of Lautrup \cite{Lau77}, Itzykson, Parisi and Zuber \cite{IPZu77} who obtained estimates of the 
type (\ref{divergi}) for (scalar) quantum electrodynamics which extended previous work by Bender and Wu on the anharmonic oscillator in 
quantum mechanics \cite{BeWu71}. 

Note that this so-called 'divergence of perturbation theory' is not directly related to (UV) renormalisation: even theories without any need to 
be renormalised like the genus expansion in string theory or examples from quantum mechanics exhibit this phenomenon \cite{CESVo16,GraMaZ15,Zi81}. 
What these examples teach us at the very least is that renormalisation is not a necessary condition for the divergence of perturbation theory. 

But renormalisation does indeed play an important role for the divergence of perturbative series in quantum field theory. 
Simon studied in \cite{Si69} the renormalised perturbative expansions of the fermion's disconnected Green's functions in 2-dimensional Yukawa theory 
($Y_{2}$) and found that they had a finite radius of convergence at 
least as long as they were regularised by cutoffs in both momentum and position space. While at the time he could not tell what would happen upon 
removal of those cutoffs, another author, Parisi, reported later that according to his estimates, these series had a finite radius of convergence due to ''strong 
cancellations ... among Feynman diagrams with different topologies'', which he put down to Pauli's exclusion principle \cite{Pa77}. 
For the fermions in $Y_{2}$ theory, this implies that the abundance of Feynman diagrams is effectively switched off by these cancellations. 
Not so for the bosonic sector, where he found 
\begin{equation}
 a_{n} \sim A^{n}  n^{b} \Gamma(n(d-2)/d) \cos(2\pi n/d) 
\end{equation}
for spacetime dimension $d>2$ and $a_{n} \sim (\log n)^{n}$ in case $d=2$ (in our notation). 

The impact of renormalisation can best be seen in the \emph{Gross-Neveu model}, a renormalisable Fermi theory in two spacetime 
dimensions \cite{GroN74}. Similar to $Y_{2}$ theory, it shows perturbative 
series with a nonzero radius of convergence as long as a UV cutoff is in place. But here, the model acquires renormalons in the UV limit leading
to an asymptotic behaviour of the coefficients of type $\sim n!$ \cite{FeMaRiS85,FeMaRiS86}. 

Hence, as the UV cutoff is increased in this model, an infinite number of coefficients of the renormalised series grow in magnitude in
such a way that in the limit, the series are forced to pass the threshold from Gevrey 0 to Gevrey 1 (for a definition, see §\ref{subsec:Bor}). 

Renormalisation has in this case indeed a severe impact and, as we believe, even more so in four-dimensional renormalisable theories, where it for 
one thing \emph{introduces a nontrivial coupling dependence} into the theory's Lagrangian and hence to its correlation functions. 
For another, it is very likely to \emph{affect the exponential size of the Borel transforms}, as explained in §\ref{sec:RenorTra}.

However, these ideas unavoidably come with some measure of speculation, and the author is very well aware of
it, especially when considering the nonperturbative status of quantum electrodynamics (QED), Yukawa theory and renormalisable QFTs in general. The 
issue is that in order to incorporate renormalisation into the narrative of fully quantised Lagrangian 
field theories and their equations of motion, one needs to include the so-called Z factors whose nonperturbative existence can to this day at best only be
assumed and is far from being well-defined \cite{Ost86}. Moreover, given the fact that QED can only be an effective field theory and the 
possible existence of a Landau pole, it takes a fair amount of optimism to believe that renormalised 
QED exists as a mathematically consistent theory. Of course, being a QED-like toy model, the same goes for Yukawa theory as well. 

\subsection{Analysable functions \& resurgent transseries}\label{AnalyF}
The puzzle posed by divergent series and the question as to what to make of it had been tackled by our forebears more than a century ago. One answer 
of interest to us here is a procedure named \emph{Borel summation} which roughly works as follows \cite{Ha49, BeO91}: 
first, the divergent series is turned into a convergent one. This convergent series is then, in a second step, 
subjected to an integration procedure that essentially reverts the change brought about by the first step.
In benign cases, this resummation scheme, say implemented by an operator $\LB$, produces a function which has an asymptotic expansion that coincides with the original series. For example, if we assume of the series $\ti{F}(\alpha)$ in (\ref{pertex}) to fall under this rubric of well-behavedness, then, 
suppressing the kinematical data in the notation, the Borel summation operator $\LB$ produces a function 
$\LB [\ti{F}](\alpha)$ to which the series $\ti{F}$ is asymptotic \cite{Co09}, ie in our case
\begin{equation}
 \left|\LB[\ti{F}](\alpha) - \Sum_{n=0}^{N-1}a_{n}\alpha^{n} \right| \sim A^{N} N^{b} N! \alpha^{N} 
 \hs{1} \text{as \ $\alpha \rightarrow 0$ \ for all $N \geq 1$},
\end{equation}
where in even more benign circumstances, the operator $\LB$ may produce the actual function sought after and we have $\LB[\ti{F}](\alpha) = F(\alpha)$.

In order to handle less friendly cases, this summation method has since been further developed by \'Ecalle to the powerful machinery of 
\emph{accelero-summation}, designed to be applied to what he called \emph{analysable functions} \cite{Eca81,Eca93}. 

To get a flavour of what these functions are, we consider the three elementary functions $z, \exp z$ and $\log z$. If we take these together with 
constants in $\C$ and allow for all field operations (addition, multiplication, algebraic inversion) and composition including functional inversion
when possible, then it is clear that already some highly singular players will have entered the game. If we furthermore demand our set of functions be stable under 
differentiation and integration, then, for instance,
\begin{equation}\label{anal}
\int z^{-1} e^{z}  , \hs{1} \int \frac{1}{\log z} 
\end{equation}
must be added as primitives because they can neither be obtained from consecutive application of the field operations nor by functional composition. The 
resulting 'field with no escape' \cite{vH06} is that of analysable functions \cite{Co09}. Both functions in (\ref{anal}) can be expanded into series
\begin{equation}\label{transen}
\int z^{-1} e^{z} = e^{z} \sum_{k \geq 0}k! z^{-k-1} , 
\hs{1} \int \frac{1}{\log z} = z \sum_{k \geq 0}\frac{k!}{(\log z)^{k+1}},
\end{equation}
formally obtained by an infinite number of partial integrations. Both are perfect examples of \emph{resurgent transseries} whose basic building blocks 
are referred to as \emph{(resurgent) transmonomials}. Although often failing to converge, such series nonetheless carry asymptotic information \cite{Co09}. 

As already mentioned, accelero-summation is a tool designed to establish a connection between the world of divergent transseries and analysable functions.
The most commonly used name for the corresponding theory is \emph{resurgence theory}, referring to the idea that these functions 'resurge' or 'resurrect' from 
their perturbative series \cite{Sa14}.

These developments in mathematics have been parallelled by considerable progress on the physics side, the big question being what subclass of analysable
functions the observables of (quantum) physics fall into. On the transseries side, this boils down to seeking the necessary transmonomials and the construction 
of the corresponding transseries needed to capture the whole nonperturbative physical picture. Of particular interest is, moreover, their physical meaning.

Lately, pertinent results have been obtained in quantum mechanics in connection with so-called instantons \cite{Zi81, Zi02, ZiJ04} and, most recently, 
from promising investigations concerning a nonlinear sigma model, albeit of toy character \cite{DunU12, DunU13}. 
Adding to it are results for the free energy in minimal (super)string theory \cite{ASVo12,SchiVa14} and topological string 
theory \cite{CESVo15,CESVo16}, for which transseries ans\"atze proved viable. 
Furthermore, we mention the results for the transseries representations of the cusp anomalous dimension in $\mathcal{N}=4$ SUSY Yang-Mills theory which have 
been explored in \cite{An15,DoH15}. 

We glean from these latest developments that the current approaches towards a complete (nonperturbative) characterisation 
aim at upgrading the formal series $\ti{F}(\alpha)$ in (\ref{pertex}) to a resurgent transseries of the form 
\begin{equation}\label{pertexT}
\wt{F}(\alpha) = \sum_{\sigma \in \N_{0}^{r}} \alpha^{c \cdot \sigma} e^{-(b \cdot \sigma )/\alpha} P_{\sigma}(\log \alpha) \sum_{n \geq 0} a_{(\sigma,n)} \alpha^{n}
\end{equation}
where $b,c \in \C^{r}$ with Re$(b) \neq 0$ are fixed parameters and the symbol $P_{\sigma}(\log \alpha) \in \C[\log \alpha]$ is a 
polynomial in logs, one for each 'sector' $\sigma \in \N^{r}_{0}$. 
The series (\ref{pertexT}) is a so-called $r$-parameter transseries \cite{CESVo15}. 
For convenience, we will work with $r=1$, that is, a 'one-parameter' transseries\footnote{Our transseries ansatz will have more than one parameter, this denomination
is therefore unfortunate.} which will not affect the results of our work. 

Note that the original series $\ti{F}(\alpha)$ is still part of the expression (\ref{pertexT}) as its zeroth sector ($\sigma=0$) with coefficients $a_{(0,n)}=a_{n}$ and 
a trivial log polynomial $P_{0}=1$. The replacement 
\begin{equation}
 \ti{F}(\alpha) \in \C[[\alpha]] \hs{0.5} \longrightarrow \hs{0.5}
 \wt{F}(\alpha) \in \bigoplus_{\sigma \in \N_{0}^{r}} \alpha^{c \cdot \sigma} e^{-(b \cdot \sigma )/\alpha} \C[[\alpha]][\log \alpha]
\end{equation}
is often referred to as \emph{nonperturbative completion} of the formal series\footnote{Note the size of the hovering tilde!}. 
The next step is then to accelero-sum the transseries $\wt{F}(\alpha)$ sector-wise, which leads to a convergent series of the form 
\begin{equation}\label{pertexTa}
\LB[\wt{F}](\alpha) = \sum_{\sigma \in \N_{0}^{r}} \alpha^{c \cdot \sigma} e^{-(b \cdot \sigma )/\alpha} 
\LB \left[ P_{\sigma}(\log \alpha) \Sum_{n \geq 0} a_{(\sigma,n)} \alpha^{n} \right],
\end{equation}
where the summation operator $\LB$ has now turned each sector's perturbation series together with its log polynomials into a function. 
The idea here is that depending on the choice of the operator $\LB$, this resulting expression converges somewhere in the complex plane and thereby 
defines an analysable function which is asymptotic to the sought-after function $F(\alpha)$. However, we will not 
employ accelero-summation in our work. In fact, transseries of the type (\ref{pertexT}) are perfectly suited for a special instance of accelero-summation 
called \emph{\'Ecalle-Borel (BE) summation}. Although we continue this discussion to some extent in §\ref{sec:NPC} to give some more background on 
resurgence and this type of accelero-summation, we have to refer the interested reader at this point to the literature. 
BE summation is explained in many places. We recommend \cite{Sa07,Sa14} and \cite{Co09} for a mathematical introduction and \cite{Do14,Mar14}
for the physics-oriented reader. In \cite{CoSVa15}, the authors perform BE summation on the gauge-theoretic large $N$ expansion in the quartic 
matrix model and give an account of their results regarding analytic continuation, Stokes phenomena and monodromies.  
However, for the general scheme of things, we refer to \'Ecalle's lecture notes \cite{Eca93} and \cite{vH07} for a concise review on more general 
accelero-summation.   
 
\subsection{Nonperturbative equations \& resurgence}
Part of the story about resurgence is that the coefficients in the various sectors of a transseries depend 
on each other. For the most extreme form of resurgence this means in particular that the perturbative 
part has all the information necessary to construct the nonperturbative part, albeit in encoded form.    
'Encoded' means that in order to construct the full transseries, one needs to know \emph{how} to extract the nonperturbative information from the 
perturbative sector, that is, the coefficients of the perturbative series alone do not suffice and rules to compute the nonperturbative coefficients are 
needed. 

It is in canonical cases an ordinary differential equation that prescribes how the sectors communicate, as for example stated in Theorem 31.5 of \cite{Sa14}. 
In quantum mechanics, this has been demonstrated by Dunne and \"Unsal for the energy eigenvalues of the double-well 
potential and likewise for the sine-Gordon potential \cite{DunU14}, where the authors utilised a boundary condition for the corresponding wave functions 
to derive the necessary differential equation relating the sectors. This so-called \emph{Dunne-\"Unsal relation} has been checked in \cite{GaT15} for 
cubic, quartic and higher-degree potentials. While cubic and quartic potentials stood the test, the authors found that for quintic ones and beyond, it 
does not hold and is yet to be generalised.    

For QFTs, one might expect Dyson-Schwinger equations (DSEs) to harbour the corresponding relations. 
These equations, however, canonically derived from path integrals, make for a nice narrative but are plagued 
with more than just blemishes \cite{RuVeX09}: firstly, they are an infinite tower of coupled integral equations.
This fact already calls into question their aptness to define a QFT nonperturbatively. Secondly, when the 
need for renormalisation arises, one is required to smuggle in the aforementioned Z factors.
As already alluded to, these factors are to this day only well-defined perturbatively (with a finite cutoff) and therefore bring in a perturbative 
feature. To still achieve a nonperturbative interpretation of DSEs, one simply has to assume that these factors  
exist and do their job properly when taken as nonperturbative objects.

Assuming all this, one expects DSEs of a QFT to capture everything the theory can possibly 
describe \cite{CurP90}. But because the infinite tower must for practical purposes be truncated, one 
certainly loses part of the phenomena potentially captured by the tower. 

The very least we have to concede about QED and Yukawa theory in four spacetime dimensions is that 
in the light of these concerns, and given their tentative nonperturbative status as mentioned at the end
of §\ref{divergs}, the two theories are not well-defined nonperturbatively.

We nevertheless deemed it worthwhile to study two approximative DSEs in these theories and investigate 
their nonperturbative features by means of transseries, not least because we wanted to present an 
interesting new method.

\subsection{Scope of this work} 
In this paper, we present our results pertaining to approximations for the anomalous dimension of 
\begin{itemize}
 \item the fermion field in massless Yukawa theory and 
 \item the photon field in massless quantum electrodynamics (QED),
\end{itemize}
both in four dimensions of spacetime and in momentum subtraction scheme. 
We show that the associated self-consistent \emph{Dyson-Schwinger equations} (DSEs) should provide at least in principle the nonperturbative conditions needed 
to relate the various sectors of the anomalous dimension's transseries. 
This is realised in the form of a \emph{fixed point equation} derived from a DSE which 
\begin{enumerate}
 \item [(i)]  prescribes how the perturbative sector relates to the nonperturbative ones and
 \item [(ii)] shows that if the anomalous dimension can ever be characterised by a resurgent transseries, then (\ref{pertexT}) regrettably is \emph{not} 
              the answer, as its entire nonperturbative part must vanish to satisfy the DSE; our ansatz is simply not elaborate enough a transseries.   
\end{enumerate}
We achieve this by, in a nutshell, plugging a slightly generalised version of the transseries ansatz (\ref{pertexT}) into these equations and prove the two
statements (i) and (ii), the latter by induction. For simplicity, we let all log polynomials be trivial, ie $P_{\sigma}=1$ for all $\sigma \geq 0$, a choice which does not alter the results, 
as will be explained.

However, the reader be warned that we use the terminology of transseries theory in the spirit of \cite{Ed09}. The reason we go about this task 
equipped with this seemingly abstract lingo is that the fixed point equations are highly nontrivial; to get a foretaste of how horrendous these equations
are, consider this: let $\M=(s\alpha \partial_{\alpha}-1)$ be a linear differential operator ($s=1,2$), let $\gamma$ the anomalous dimension and 
$\gamma_{n+1} = (\gamma \M)^{n}\gamma$ define a family of functions obtained from applying the operator $\Rn= \gamma \M$ multiple times to $\gamma$, 
ie $\gamma_{1}=\gamma$. We call these functions 'RG functions' and the relation $\gamma_{n+1}=\Rn\gamma_{n}=\gamma \M \gamma_{n}$ 'RG recursion', 
so called because its origin lies in the renormalisation group (RG) equation. Then the fixed point equation for $\gamma$ is of the form 
\begin{equation}\label{flavdse}
 \gamma_{1}(\alpha) = \sum_{i=1}^{N} \alpha^{i} \sum_{j=1}^{\infty}  X_{ij}(\gamma_{1}(\alpha), \gamma_{2}(\alpha), \dotsc ) \hs{1} 
 (\mbox{'DSE for the anomalous dimension'})  
\end{equation}
where the $X_{ij}$'s are polynomials of various degrees and numbers of variables into which the RG functions are plugged. Because the rhs of 
(\ref{flavdse}) has an infinite number of terms, it is surely not a differential equation. $N \geq 1$ enumerates the skeletons 
inherited from the skeleton diagrams of the DSE for the self-energy. While on the one hand the limit $N \rightarrow \infty$ is fishy, it may on the other
be understood as a sequence of DSEs. The so-defined sequence of nonperturbative solutions may have a limit which then gives us the anomalous dimension.     

What we then do in this work is to replace all RG functions by their transseries representations $\wg_{1}, \wg_{2}=\Rn \wg_{1}, \wg_{3}=\Rn^{2} \wg_{1}, \dotsc $ in this equation and subsequently investigate sector-wise 
how the coefficients of the anomalous dimension's transseries $\wg_{1}$ are related. Since this must be conducted with prudence, we have taken care in 
decomposing the task into smaller feasible units. 

And here is an uncomfortable obstruction we are facing: in the algebra of transseries, the fixed point equation (\ref{flavdse}) is known to have a 
solution if the rhs represents a contractive operator in a sense to be explained \cite{Ed09}. There are two salient aspects to this. 
Firstly, it is currently not known whether a solution still can be related to an analysable function even if the rhs of (\ref{flavdse}) is contractive 
as a nonlinear operator in some Banach space: the set of analysable functions may not be closed with respect to limits of this kind \cite{Co09}. 
Secondly, we know that the perturbative series alone satisfies this equation. Because we seek a transseries solution serving as a nonperturbative 
completion, here is something that adds to the intricacy of the situation: a fixed point theorem that offers a unique fixed point must either be 
crossed off our wish list or a subset of transseries be found that does not contain the perturbative 
solution and on which the DSEs admit a unique solution.  

This is a typical situation in which physics musters the blitheness to carry on assuming that there is such a 
function to work out the transseries, preferably armed with physical arguments. 

On the downside, our analysis is by its very nature algebraic, technical and almost completely void of physical considerations.
It offers no adhoc physical explanation as to why (\ref{pertexT}) is not the correct transseries, an aspect being especially unsatisfactory because this 
type of transseries would at least nicely account for and take care of the Stokes effect which we expect to play a role (explained in §\ref{sec:NPC}).

On the upside, our investigation does not exclude the possibility of there being a solution of (\ref{flavdse}) and presents a method which may also be 
useful in other contexts, or at least complementary to the conventional approaches. 
However, we  
\begin{itemize}
 \item [(iii)] present a weighty argument as to why a transseries ansatz of the form (\ref{pertexT}) cannot be expected to capture the physics 
              of a fully renormalised theory,
 \item [(iv)] discuss an oddity incurred by the infinite skeleton expansion DSE in the case of the photon 
        corresponding to the limit $N \rightarrow \infty$ in (\ref{flavdse}).
\end{itemize}
Regarding (iii), we acknowledge the speculative and tentative character of our ideas and that the case of nonabelian gauge 
theories may be entirely different.

\subsection{Outline} §\ref{sec:NPC} introduces the reader to the basics of Borel summation and muses some more on why it is that the observables of QFT 
need (at the very least) a transseries representation. Already here will the reader be confronted with an argument as to why the author believes that 
the resurgence question will be much harder to tackle (in future projects) as one enters the realm of renormalised quantum field theories. 

The subsequent two sections (§§\ref{sec:DySch},\ref{sec:ReTra}) prepare the ground for the main body of our work which commences in later sections. 
In particular, §\ref{sec:DySch} has a review on the Dyson-Schwinger equations (DSEs) studied in this work and shows that the renormalisation group 
(RG) equation enables us to formulate a DSE solely for the anomalous dimension, albeit in terms of a formal expansion of the form (\ref{flavdse}). 
In order to derive this equation, we use a method based on meromorphic functions known as \emph{Mellin transforms} which we will explain alongside 
the above-mentioned RG recursion \cite{Y11}. 

For a gentle pedagogical start, we first go through the derivation in the case of the well-known \emph{rainbow approximation} for the fermion self-energy 
in massless Yukawa theory, the most trivial DSE available in four-dimensional QFT. Next, we gear up and turn to the DSE of 
the \emph{Kilroy approximation} in massless Yukawa theory and play the analogous game there. The same procedure is then gone through for the photon's 
self-energy and its anomalous dimension in QED, considerably less trivial than the two Yukawa cases. In contrast to the situation we face in QED, the 
Yukawa model implies a nonlinear ODE in a straightforward manner. Because it is particularly amenable to a transseries investigation, we have included
its derivation.

Up to this point, the material is not entirely new and goes back to the work of Broadhurst, Kreimer and Yeats \cite{BroK01,Krei06,KrY06,Y11}. The work of 
Bellon and Clavier concerning the Wess-Zumino model in \cite{Bel10, BeC15} is to some extent related as the authors also make use of the RG 
recursion and Mellin transforms to obtain ODEs for the anomalous dimension in the spirit of \cite{KrY06}. But in contrast to their investigations, we neither look for 
singularities in the Borel plane nor aim at finding the asymptotics of the perturbative coefficients. 
We present the material here partly for the convenience of the reader but also because the DSEs for the anomalous dimension (\ref{flavdse}) 
cannot be found explicitly anywhere which is why these equations are in some sense novel. This is true in a strict sense for the QED case which 
confronts us with an equation that becomes most interesting in the seeming limit to the full theory. 
We contend that the limit of an infinite-skeleton DSE cannot per se be considered as a nonperturbative equation for the full theory although the 
combinatorics of Feynman diagrams suggests so. However, as alluded to above, we propose to understand it as a sequence of DSEs which in 
turn defines a sequence of nonperturbative solutions whose limit is what we are after.  

Since we do not expect the reader to be familiar with resurgent transseries, we have devoted §\ref{sec:ReTra} to a concise introduction to this 
topic. The main sources we have drawn on and whose (to our mind) apt lingo we use in our work are \cite{Ed09,vH06}. 

In §\ref{sec:RenorTra} we explain why we believe renormalisation to be a game changer when it comes to the class of transseries that might have to 
be employed for renormalised quantum field theories. We then describe our transseries ansatz in §\ref{sec:Transatz}.    

Building on the preliminary material covered in the preceding sections, we will in §\ref{sec:DynSys} treat the RG recursion as a discrete dynamical 
system in the algebra of our transseries and study whether it may converge in a sense to be expounded. To this end, we analyse how the support of the 
RG functions' transseries changes along the orbit (explained there in detail).

§\ref{sec:DSE} explores whether the fixed point equation for the anomalous dimension (\ref{flavdse}) may allow for more than one solution. 
We review a pertinent fixed point theorem and seek for conditions imposed on the transseries ansatz. 
Because it is very hard to find a subset on which the associated 'Dyson-Schwinger (DS) operator' is contractive, the most convenient stance is to
consider a rock-solid fixed point theorem as the last thing we can possibly want: it would spoil the game by decreeing that there be only one transseries 
solution, namely the perturbative series.  

In §\ref{sec:RGflow} we introduce an algebraic method suitable to analyse the flow of data along the orbit of the RG recursion. 
To keep track of the flow of perturbative and nonperturbative information, we employ transseries with coefficients in a graded free algebra. 
The flow of the RG recursion turns out to preserve one key feature of the transseries which we call \emph{sector homogeneity}. Because 
sector-homogeneous transseries form a subalgebra stable under the RG operator, a certain degree of orderliness in which information is 
being passed on along the orbit is warranted. 

We come in §\ref{sec:PinNP} to the first main result pertaining to the resurgence of the anomalous dimension.
Although the DSE prescribes the perturbative sector to communicate with all nonperturbative ones in a manner clearly one-way and characteristic of 
resurgence, we prove in §\ref{sec:ansatz} that the nonperturbative sector vanishes, our second main result. The ODE for the Yukawa model enables us to 
extend our investigation to a wider class of transseries, albeit also with a negative outcome.

However, we believe that there is no contradiction to the result on intersectorial communication, as the general pattern of sector crosstalk should still be 
valid as long as exponentials are involved in defining nonperturbative sectors. It only means we have conducted our investigation in the wrong subclass 
of transseries.
Finally, §\ref{sec:Conclu} briefly summarises and discusses the obtained results.

\section{The need for a nonperturbative completion}\label{sec:NPC}			  
We continue the discussion started in the introductory section on transseries in quantum mechanics and QFT to provide a little more background on 
Borel summation and how transseries arise. Before we properly justify the idea that one should not expect a transseries 
like (\ref{pertexT}) to be appropriate for a renormalised QFT in §\ref{subsec:Recha}, we will in this section already present the main argument, suited 
to the technical level at which we have so far expounded transseries.   

\subsection{Borel summation}\label{subsec:Bor}
We mentioned in §\ref{sec:Intro} that to obtain the sought-after function $F(\alpha)$, its asymptotic (that is, divergent) series $\ti{F}(\alpha)$ may in 
'benign' cases be put through the \emph{Borel summation} procedure, a mathematical machinery devised to construct a function to which the series is asymptotic.  
To get the general idea, suppose we are given an asymptotic series $\tilde{\varphi}=\sum_{k \geq 0} a_{k}\alpha^{k} \in \R[[\alpha]]$, then the 
formal computation 
\begin{equation}\label{Borelmach}
 \sum_{k \geq 0} a_{k}\alpha^{k} = \sum_{k \geq 0} \frac{a_{k}}{k!} \alpha^{k} \int_{0}^{\infty} \zeta^{k} e^{-\zeta} d\zeta 
 = \sum_{k \geq 0} \frac{a_{k}}{k!} \int_{0}^{\infty} (\alpha \zeta)^{k} e^{-\zeta} d\zeta 
 = \frac{1}{\alpha} \int_{0}^{\infty} e^{-\zeta/\alpha} (\sum_{k\geq 0} \frac{a_{k}}{k!} \zeta^{k}) d\zeta    
\end{equation}
nicely captures the essence of Borel summation. A benign case, for instance, is given if the following conditions are met: 
\begin{itemize}
 \item [(i)] the formal series $\sum_{k \geq 0} a_{k}\alpha^{k}$ is \emph{Gevrey 1}, which implies in particular that the series inside the integral, 
        the so-called \emph{formal Borel transform} 
        \[
          \hat{\varphi}(\zeta):=(\Bo \tilde{\varphi})(\zeta) := \sum_{k\geq 0} \frac{a_{k}}{k!} \zeta^{k}
        \]
       has nonzero radius of convergence in what is called its \emph{Borel plane}, just another name for the complex plane $\C$. The '1' in 
       'Gevrey 1' stands for the fact that the original series' coefficients need to be divided by at least one power of $k!$ to yield a convergent 
       series\footnote{If no power of $k!$ is necessary because the series is convergent from the start one calls it \emph{Gevrey 0}.};   
 \item [(ii)] the formal Borel transform $\hat{\varphi}$ possesses an analytic continuation $\cont_{\Nei} \hat{\varphi}$ to a neighbourhood $\Nei$ of the
       positive real axis $\R^{+} \subset \C$ and  
 \item [(iii)] $\cont_{\Nei} \hat{\varphi}$, known as \emph{Borel transform}, is of exponentially bounded 
       type, that is, there exist constants $c,A >0$ such that 
\begin{equation}
 |\cont_{\Nei} \hat{\varphi}(\zeta)| \leq A e^{c |\zeta|} 
\end{equation}
      for all $\zeta \in \C$ in that neighbourhood. 
\end{itemize}
In this admittedly very restricted case, the \emph{Borel-Laplace transform} $\La \hat{\varphi}$ defined by
\begin{equation}\label{LaBo}
 (\La \hat{\varphi})(\alpha) :=  \frac{1}{\alpha} \int_{0}^{\infty} e^{-\zeta/\alpha} \cont_{\Nei} \hat{\varphi}(\zeta) d\zeta
\end{equation}
is a function called the 'Borel sum' to which the series $\tilde{\varphi}$ we started out with is asymptotic. In summary, the Borel machine is an 
operator $\LB:=\La \circ \Bo$ which takes a 'reasonable enough' asymptotic series $\tilde{\varphi}$ and processes it into a function 
$\LB \tilde{\varphi} = (\La \circ \Bo )\tilde{\varphi} = \La \hat{\varphi}$ 
to which the divergent series $\tilde{\varphi}$ is asymptotic. The series $\tilde{\varphi}$ is then called \emph{Borel summable}, or, more precisely in 
the modern terminology of resurgence, \emph{fine-summable in the direction} $\R^{+}$ \cite{Sa14}. 
This is in short what is referred to as \emph{classical Borel summation}. 

It is known that Yukawa theory in dimensions $d=2,3$ and $(\varphi^{4})_{d}$ theory for $d=1,2,3$ are Borel 
summable in this sense \cite{Ri91, GliJa81}.  
But, as shall be elaborated in a moment, we have to expect that condition (ii) is in general not satisfied 
because the Borel transform \emph{cannot} be analytically continued to a neighbourhood of $\R^{+}$ due to 
singularities sitting there. In these cases the described 'Borel machine' (\ref{Borelmach}) is clearly not 
an apt tool and must be modified.

Moreover, the three conditions (i) to (iii), which are not entirely unrelated, may be violated altogether. In these cases, it will be necessary to 
enter the realm of \emph{multisummability} \cite{Ba09}, a topic which might have to be put on the agenda in future projects. The line in (\ref{Borelmach})
takes for $m$-summability a different form and reads 
\begin{equation}\label{Borelkmach}
\sum_{k \geq 0} a_{k}\alpha^{k} = \sum_{k \geq 0} \frac{a_{k}}{\Gamma(1+k/m)} \int_{0}^{\infty} (\alpha^{m}\zeta)^{k/m} e^{-\zeta} d\zeta  
 = \alpha^{-m} \int_{0}^{\infty} e^{-\zeta/\alpha^{m}} (\Bo_{m}\ti{\varphi})(\zeta^{1/m}) d\zeta,    
\end{equation}
where the assignment
\begin{equation}
\ti{\varphi}(\alpha) = \sum_{k \geq 0} a_{k}\alpha^{k} \mapsto (\Bo_{m}\ti{\varphi})(\zeta) := \sum_{k \geq 0} \frac{a_{k}}{\Gamma(1+k/m)}\zeta^{k}
\end{equation}
is the so-called \emph{formal Borel transform} with index $m \geq 1$. The Borel-Laplace transform, 
given by the integral in the rhs of (\ref{Borelkmach}) must then carry an 
index, ie $\La_{m}$, and the exponential growth bound (iii) be replaced by
\begin{equation}
 |\cont_{\Nei} (\Bo_{m}\ti{\varphi})(\zeta)| \leq A e^{c |\zeta|^{m}} .
\end{equation}
In fact, our transseries ansatz, to be introduced in §\ref{sec:Transatz}, accounts for this more general form of 
Borel summability ($m \geq 1$). We will briefly come back to multisummability in §\ref{subsec:Recha}.   

However, because we think speculations on multisummability are premature, we note only that currently, the best educated guess (or  belief) about the observables 
of a QFT floated among experts is the situation depicted in \textsc{Figure} \ref{Bplane}: the Borel plane is punctured by an infinite number of 
singularities of the Borel transform in such a way that there are a finite number of singular rays called 'Stokes rays' (or 'Stokes lines') which 
emanate from the origin and carry a countable number of singularities \cite{BaDU13}. 

\begin{figure}[ht]
\begin{center} \includegraphics[height=4cm]{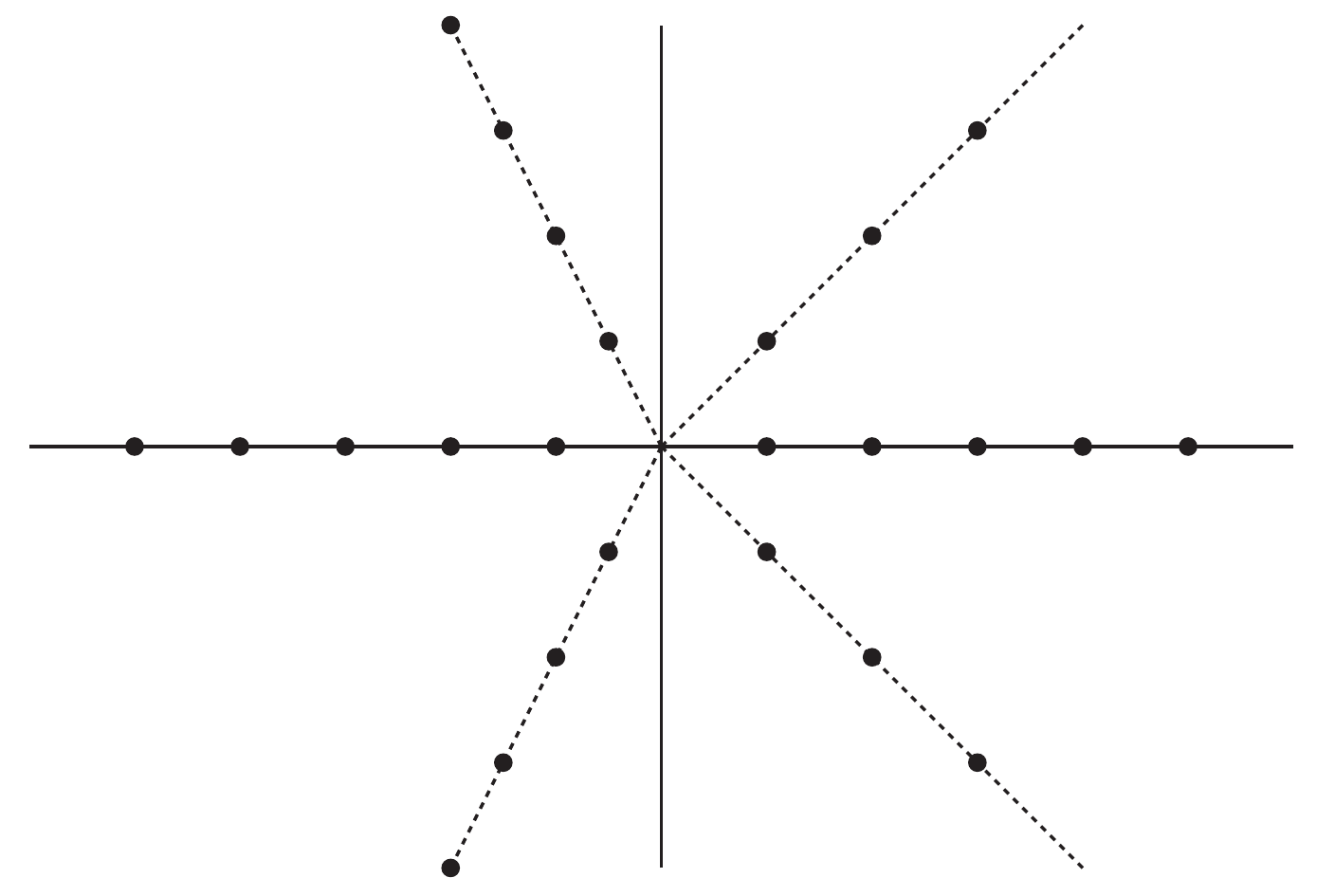} \end{center} 
\caption{\small Illustration of the conjectured Borel plane structure for QFTs with (in this case) six Stokes rays. 
Note that complex singularities always have complex conjugate partners (from \cite{BaDU13}, see discussion there). }
\label{Bplane}
\end{figure}

From the arguments laid out by Beneke in \cite{Ben99}, we know that amplitudes in quantum electrodynamics (QED) and quantum chromodynamics (QCD) are 
expected to exhibit \emph{UV} and \emph{IR renormalons}, ie singularities in the Borel plane on the real line. Since the arguments presented in \cite{Ben99} 
about QED easily carry over to Yukawa theory one finds renormalons there as well. In particular, we shall see in §\ref{subsec:Kil} that our Yukawa model has
UV renormalons on the negative real half-line $\R^{-}$ incurred by fermion line corrections. 

Unfortunately, we are unable to ascribe a nonperturbative meaning to these renormalons. Since they are of the UV type, one is 
reluctant to relate them to bound states, ie low-energy phenomena, and is more inclined to think of 
higher-energy effects like the so-called \emph{Schwinger effect}, usually modelled 
by means of a strong external electric field \cite{Dun08,RuVeX09}.   

\subsection{Resurgence \& resurgent transseries} 
As alluded to, Borel summation cannot be applied to these cases and must be altered. To make sense of the Borel-Laplace transform in (\ref{LaBo}),
the integration contour must be diverted around the singularities on $\R^{+}$ which brings about what is known as the \emph{Stokes effect}. 
In view of the Laplace transform (\ref{LaBo}) and Cauchy's theorem, it is not hard to imagine that this effect leads in particular to terms proportional to 
\begin{equation}
 \pm 2 \pi i e^{-\omega/\alpha}  \hs{2} 
 ( \ \pm 2 \pi i e^{-\omega/\alpha^{m}} \text{ for $m$-summability}\footnote{One speaks in this case of a 'higher-level Stokes effect' \cite{Ba00}.}  \ )  
\end{equation}
where $\omega \in \R^{+}$ is a singularity of $\cont_{\Nei} \hat{\varphi}$ in the Borel plane and the sign depends on whether the singularity has been circumvented to the right or to the left of the 
real axis. This introduces an ambiguity because both choices of circumventing the singularity are permissible and yield Borel sums perturbatively represented 
by the same asymptotic expansion. 

Formal power series are therefore no longer sufficient objects to contain all information needed. This is the starting point of the above-mentioned 
theory of resurgence and accelero-summation \cite{Eca81,Eca93}. It informs us about how to deal with situations like the one of \textsc{Figure} \ref{Bplane}. 
Depending on the problem at hand (eg a differential equation), the perturbation series must be replaced by a transseries. 
The ambiguity issue incurred by the Stokes effect may be resolved by an ambiguity popping up in the same way from Borel summing the 
formal power series associated with a higher nonperturbative sector. 
Given an infinite number of singularities on a Stokes ray as in \textsc{Figure} \ref{Bplane}, one clearly needs an infinite number of such higher-sector
power series. However, all this must rest on an extensive degree of interconnectedness and hence communication between the various sectors 
of the transseries. Without it, there would be no cancellations of ambiguous imaginary parts. This intersectorial communication, in canonical cases
mediated by a differential equation or simply imposed by sheer will to eradicate Stokes factors, is what one often refers to as 
\emph{resurgence} \cite{Sa14}. In \cite{AS14}, Aniceto and Schiappa have found constraints which follow if one demands that the transseries be real. 
Since Stokes factors bring in imaginary parts, they must cancel if these conditions are to be met. 

However, we will not directly make use of resurgence theory in this paper and have therefore no intention of expounding it 
here\footnote{We refer the interested reader to the literature described at the end of §\ref{AnalyF}.}. 
The purpose of mentioning resurgence theory and devoting some space to it here is to motivate our use of \emph{resurgent transseries}. 
Such series are, in fact, interesting mathematical objects in their own right. The associated theory is heavily algebraic in flavour and has developed 
a life of its own, as in particular a quick look into the monograph \cite{vH06} reveals.  
And because our analysis makes extensive use of transseries theory formulated in the abstract lingo of \cite{Ed09, vH06}, we expound some of it 
in §\ref{sec:ReTra} to the extent we deemed absolutely necessary.   

For this informal introduction, we content ourselves with a concrete example of a transseries from quantum mechanics. 
Consider a (non-relativistic) quantum particle in a one-dimensional double-well potential
\begin{equation}\label{dw}
 V(x) = \frac{1}{2}x^{2}(1-\sqrt{g}x)^{2} ,
\end{equation}
viewed as a 'perturbed' single harmonic well with anharmonicity parameter $g$, ie the double well becomes a single well with a standard 
harmonic-oscillator solution in the limit $g \rightarrow 0$.

The ground state energy can be obtained by means of the WKB or a path integral approach \cite{JenZ04, ZiJ04} in the form of a resurgent transseries 
given by 
\begin{equation}\label{grou}
E_{0}(g) = \underbrace{\sum_{m \geq 0} c_{(0,m,0)} g^{m}}_{\mbox{\tiny perturbative sector}}
+ \underbrace{\sum_{l_{1} \geq 1} \sum_{l_{3} = 1}^{l_{1}-1} \sum_{l_{2} \geq 0} 
c_{(l_{1},l_{2},l_{3})} \left(\frac{e^{-S/g}}{\sqrt{g}}\right)^{l_{1}}  g^{l_{2}}  [\log(-2/g)]^{l_{3}} }_{\mbox{\tiny nonperturbative sectors}} ,
\end{equation}
and likewise for the excited energies $E_{1}(g), E_{2}(g), \dotsc$
The first piece, the \emph{perturbative sector}, is the usual perturbative expansion composed solely of the usual monomials $g^{m}$, while the 
\emph{nonperturbative} ones sport powers of the 3 transmonomials 
\begin{equation}\label{tramon}
 \mathfrak{t}_{1}= g^{-1/2} e^{-S/g}, \hs{1} \mathfrak{t}_{2}=g, \hs{1} \mathfrak{t}_{3}= \log(- 2/g),
\end{equation}
each having a physical meaning as part of the transseries. For example, the exponentially 'flat' function $\mathfrak{t}_{1}$, so called because its Taylor series
around zero vanishes, represents an \emph{instanton (event)}. An instanton event occurs when the quantum particle tunnels from one well into the other,
while higher powers of $\mathfrak{t}_{1}$ describe $n$-instantons, several back-and-forth tunnelling events \cite{Zi02}. Notice that these effects must 
necessarily go completely unnoticed in perturbation theory.

In the path integral, such events appear as critical points and the real number $S>0$ is the action of this event which then contributes to the ground state 
energy, albeit exponentially suppressed as described by (\ref{grou}). 
The associated power series in $g$ at the $n$-instanton level of the nonperturbative part corresponds to perturbative corrections around this 
$n$-instanton's critical point. For more, see \cite{Zi02}.  

\subsection{Transseries in quantum field theory}\label{subsec:transQFT}
The situation seems to be no different in QFT, as the toy model QFTs studied in \cite{DunU12, DunU13} suggest. 
The argument as to how resurgent transseries arise in QFT runs, according to \cite{BaDU13}, roughly as follows.
Let in Euclidean formulation 
\begin{equation}\label{saddd}
 \mathcal{Z} = \int \De \phi \  e^{-S(\phi)}
\end{equation}
be the partition function in terms of a path integral for a scalar field $\phi$, where the action is given by the integral
$S(\phi)=\alpha^{-1} \int \La(\phi,\partial \phi)$ whose Lagrangian is independent of the coupling $\alpha$ (the field is assumed to be 
normalised accordingly). 
If we perform perturbation theory around each critical point of the partition function, then the so-called \emph{semi-classical expansion}
\begin{equation}\label{sadd}
 \mathcal{Z} = \int \De \phi \  e^{-S(\phi)} \approx \sum_{k \in \{\mbox{\tiny saddles, ...}\}} e^{-S_{k}/\alpha} F_{k}(\alpha)
\end{equation}
is a transseries. The expression $F_{k}(\alpha)$ is the perturbative expansion around the $k$-th critical point $\phi_{k}$ with action $S(\phi_{k})=S_{k}/\alpha$, 
where $S_{0}=0$ is the vacuum action for $\phi_{0}=0$. 

Of course, (\ref{sadd}) has to be understood schematically and is in general not feasible. 
Nevertheless, the prospect is not so bleak, as there has been considerable progress with toy models: the various critical points of a two-dimensional 
nonlinear sigma model were shown to be associated with renormalons and also instantons, where instantons in QFT are field configurations that arise in 
nonabelian gauge theories for topological reasons \cite{DunU12}. Both configurations lead to factors of exponential flatness like the 
transmonomial $\mathfrak{t}_{1}$ in (\ref{tramon}). For an introduction to instantons in QFT, we refer the reader to \cite{Zi02} or the (classical) 
review \cite{VaZNoSh82}, as well as to the nice textbooks \cite{Shi12, Mar15}.

Not all QFTs exhibit instantons. Although QED and Yukawa theory lack such configurations, one might expect
renormalons to be associated to critical points in the spirit of (\ref{sadd}). But it is not at all
clear what field configurations and nonperturbative phenomena they correspond to. Both theories are 
asymptotically infrared-free and are therefore weakly coupled for bound states like positronium.  
Because exponentials are in this regime (exponentially) suppressed, such states can be considered less 
likely to manifest themselves in the form of transmonomials with exponentials.  

Old results extracted from (nonrelativistic) Bethe-Salpeter equations concerning the decay rate and 
hyperfine splitting of positronium which revealed nonperturbative contributions free of exponentials 
and proportional to $\alpha^{2} \ln \alpha$ support this view \cite{CasLe79,APSaW15} and are to some 
extent in agreement with experimental data \cite{ARG94}. 

The renormalons are therefore more likely to be related to higher-energy states for which we know QED
does not describe what actually happens. 

However, within the scope of our two toy models, this work suggests that the anomalous dimension cannot 
be represented by a resurgent transseries of the form (\ref{pertexT}) so that the above ideas surrounding 
the semi-classical expansion (\ref{sadd}) seem no longer viable for fully-fledged four-dimensional QFTs. 

And here is what we personally believe to be the reason why things may be much more complicated in this context than currently conceivable: (\ref{sadd}) is 
\emph{not true} for a renormalised action. First, the coupling dependence cannot be scaled away and brought to the
front of the action integral to emerge in this manner. Second, if we assume a finite cut-off, the renormalisation Z factors are strictly speaking 
only given as an asymptotic expansion. 

So far, there is no nonperturbative way to determine the Z factors any more accurately than perturbatively \cite{Ost86}. Here and there, assumptions are made and 
connections to other parts of the formalism (LSZ formula, K\"allen-Lehmann spectral representation) proposed (see any textbook on QFT). 
But that does not define them nonperturbatively. 
Even if they were given in this way, the very recipe along which canonical perturbation theory is performed makes one thing very clear: the Euclidean 
damping factor in (\ref{saddd}) is treated as a formal power series plugged into an exponential. The discussion of this point will be picked up again 
in §\ref{subsec:Recha} once we have introduced transseries in their full generality. 

Of course, when renormalisation is less severe, these arguments do not apply, as eg in superrenormalisable or supersymmetric (SUSY) QFTs where the 
Z factors are more or less trivial \cite{ARuS15, BaD15}.
Basar and Dunne describe in \cite{BaD15} an interesting SU(2) SUSY gauge theory whose correspondence to quantum-mechanical systems guarantees that it 
possesses transseries representations of the 'usual' form (\ref{pertexT}), an aspect first worked out by Krefl in \cite{Kre14a,Kre14b}.

\section{Dyson-Schwinger equations \& RG recursion}\label{sec:DySch}  	                  
We review the two Dyson-Schwinger equations (DSEs) investigated in this paper and introduce the reader to an approach which makes use of 
meromorphic functions called \emph{Mellin transforms}. Each Dyson-Schwinger skeleton is associated with one such function, as we will explain in 
due course\footnote{The reader will then see the reason for the denomination 'Mellin transform'.}. 
This method has been introduced and first employed by Kreimer and Yeats to bring DSEs into a convenient form \cite{Krei06,KrY06,Y11}.  
We will make use of it to derive the equations for the associated anomalous dimensions. We will explicate it in detail as we derive the pertinent formulae. 
The level of rigour throughout this section is deliberately kept low to cut an otherwise unreasonably long 
story short\footnote{Mathematically minded readers must fill in the gaps for themselves, ie making an assumption here an there. To our mind, this is 
more than futile: no one knows whether these conditions are satisfied.}. 

DSEs first appeared in the seminal publications by Dyson \cite{Dys49} and Schwinger \cite{Schwi51} and since then have found many 
applications in high-energy physics \cite{AlS01,RoWi94}. The approximative DSEs we focus on in this work can only be understood 
combinatorially and not deduced from path integrals. We will thus motivate their formulation based on the self-similarity of Feynman diagram series. 

\subsection{Rainbow approximation} In Yukawa theory, let us consider the diagrammatic expansion 
\begin{equation}\label{rba}
\graph{0}{-0.35}{0.45}{RBFerSE}{0.1} :=  \graph{0.02}{0.05}{0.3}{1yup1}{0.1} 
+ \graph{0.02}{0.05}{0.3}{1yup2}{0.1} + \graph{0.02}{0.05}{0.3}{1yup3}{0.1}  + \dotsc 
\end{equation}
of the fermion's self-energy in which only one-particle irreducible (1PI) \emph{rainbow diagrams} are taken into account. A shaded box will in this paper
generally stand for a 1PI series. 

This admittedly rather crude approximation will concern us here for two reasons. Firstly, it can be solved exactly in the massless case so that we have 
the luxury of being able to directly eye the transseries of its anomalous dimension \cite{Krei06, DeKaTh97}. It secondly makes for a nice preliminary 
exercise to understand the Mellin transform method when employed to derive the DSEs for anomalous dimensions in the more intricate cases. 
 
Let us translate (\ref{rba}) into Feynman integrals by the schematic prescription 
\begin{equation}
\int K =  \graph{0.02}{0.05}{0.3}{1yup1}{0.1} , \hs{2} -i\Sigma_{RB} =  \graph{0}{-0.36}{0.45}{RBFerSE}{0.1} 
\end{equation}
with the appropriate integral kernel $K$ and the self-energy $-i\Sigma_{RB}$ of the rainbow to obtain
\begin{equation}\label{1aps}
 -i\Sigma_{RB} = \int K + \int K \int K + \int K \int K \int K + \dotsc = \int K +  \int K ( \int K + \int K \int K + \dotsc ) .
\end{equation}
Since the term in brackets is again the perturbation series for the self-energy, we can rewrite everything as an integral equation 
\begin{equation}\label{rb}
 -i\Sigma_{RB} = \int K -i \int K \  \Sigma_{RB} \ .
\end{equation}  
In terms of blob diagrams, this takes the form
\begin{equation}\label{RBY}
 \graph{0}{-0.35}{0.45}{RBFerSE}{0.1} =  \graph{0}{-0.1}{0.4}{RBYu1}{0.2} + \graph{0.1}{-0.45}{0.45}{B+RBYu}{0.2} .
\end{equation}
Explicitly, with external Minkowski momentum $q \in \R^{4}$, the Feynman integral of the skeleton graph $\graph{0}{0}{0.1}{Yu}{0}$ prescribed by 
the Feynman rules of Yukawa theory reads\footnote{See \cite{PeSch95} for the Feynman rules of Yukawa theory.}
\begin{equation}
\graph{0}{-0.1}{0.4}{RBYu}{0.1} = (-ig)^{2} \int \frac{d^{4}k}{(2\pi)^{4}} \frac{i}{\slashed{k}+i\epsilon} 
\frac{i}{(q-k)^{2}+i\epsilon} = i \slashed{q} A_{1}(q^{2}) 
\end{equation}
and we are led to the concrete analytic form of the rainbow DSE (\ref{RBY}), given by 
\begin{equation}\label{RBY1}
-i\Sigma_{RB}(q) = (-ig)^{2} \int \frac{d^{4}k}{(2\pi)^{4}} \frac{i}{\slashed{k} + i \epsilon} \left[ 1 
- i \Sigma_{RB}(k)\frac{i}{\slashed{k} + i \epsilon } \right] \frac{i}{(q-k)^{2} + i \epsilon}  .
\end{equation}
We write $-i\Sigma_{RB}(q)= i \slashed{q}A(q^{2})$ and pass over into Euclidean space to get 
\begin{equation}\label{RBY2}
i A(-q^{2}_{E}) = i a \int \frac{d^{4}k_{E}}{2 \pi^{2}} \frac{1}{k_{E}^{2}(q_{E}-k_{E})^{2}} \left[ 1 - A(-q^{2}_{E}) \right],
\end{equation}
where the new coupling $a=g^{2}/(4\pi)^{2}$ is a convenient choice \cite{BroK01}. When renormalised in momentum scheme, this equation morphs into
\begin{equation}\label{RBYR}
A_{R}(q^{2}_{E},\mu^{2}) = a \int \frac{d^{4}k_{E}}{2 \pi^{2}} \left\{ \frac{1}{k_{E}^{2}(q_{E}-k_{E})^{2}} 
-  \frac{1}{k_{E}^{2}(\ti{q}_{E}-k_{E})^{2}} \right\}
\left[ 1 - A_{R}(k^{2}_{E},\mu^{2}) \right]
\end{equation}
in which $A_{R}(q^{2}_{E},\mu^{2})$ is the renormalised self-energy's form factor with $A_{R}(\mu^{2},\mu^{2})=0$ and $\ti{q}_{E}$ is 
the Euclidean reference momentum with reference (renormalisation) scale $\mu > 0$, ie $\ti{q}_{E}^{2}=\mu^{2}$. One can solve this now by means of a
\emph{scaling ansatz} 
\begin{equation}\label{ansatz}
1 - A_{R}(q^{2}_{E},\mu^{2}) = \left( \frac{q_{E}^{2}}{\mu^{2}} \right)^{-\gamma(a)} 
\end{equation}
giving
\begin{equation}\label{RBYR1}
\begin{split}
\left( \frac{q_{E}^{2}}{\mu^{2}} \right)^{-\gamma(a)} &= 1 - a \int \frac{d^{4}k_{E}}{2 \pi^{2}} \left\{ \frac{1}{k_{E}^{2}(q_{E}-k_{E})^{2}} 
-  \frac{1}{k_{E}^{2}(\ti{q}_{E}-k_{E})^{2}} \right\} \left( \frac{k_{E}^{2}}{\mu^{2}} \right)^{-\gamma(a)} \\
&= 1 - a \left\{ \left( \frac{q_{E}^{2}}{\mu^{2}} \right)^{-\gamma(a)} - 1 \right\} F(\gamma(a)),
\end{split}
\end{equation}
where the meromorphic function $F(\rho)$ is the \emph{Mellin transform} of the skeleton $\graph{0}{0}{0.1}{Yu}{0}$ defined by
\begin{equation}\label{MellinT}
 F(\rho):= \left. \int \frac{d^{4}k_{E}}{2\pi^{2}} \frac{(k_{E}^{2})^{-\rho}}{k_{E}^{2}(k_{E}-q_{E})^{2}} \right|_{q_{E}^{2}=1} 
 = \dotsc (\mbox{Gamma functions}) \dotsc = \frac{1}{2\rho(1-\rho)}.
\end{equation}
Note that (\ref{RBYR1}) implies
\begin{equation}\label{Mellineq}
 1=-aF(\gamma(a)) = \frac{a}{2 \gamma(a)(\gamma(a)-1)}.
\end{equation}
We require $\gamma(0)=0$ as a physical condition imposed on $\gamma(a)$ because we interprete this function as the anomalous dimension. (\ref{Mellineq})
is an algebraic equation whose solution is an algebraic function,
\begin{equation}\label{ad}
 \gamma(a) = \frac{1 - \sqrt{1+2a}}{2} = \sum_{n \geq 1} (-1)^{n} \frac{(2n-3)!!}{2 \cdot n!} a^{n},
\end{equation}
from which we can see that its transseries is 'trivial' in the sense that it is a convergent power series, void of any nonperturbative pieces. 
Its Borel transform yields an entire function with no poles in the Borel plane. The reason is obvious: for one thing, in (\ref{rba}) there is only one graph at each loop order which 
entails that the growth in the number of Feynman graphs with loop order is precisely zero since the number of contributing diagrams does not increase.
For another, this model has no renormalons (explained in §\ref{subsec:Kil}).

Imagine we were not able to guess the ansatz (\ref{ansatz}). There is another way of deriving the result (\ref{ad}). It is well-known that renormalised
perturbative contributions evaluate in the single-scale case to polynomials in the kinematical variable $L_{q}:=\log(q_{E}^{2}/\mu^{2})$. One might 
therefore formally expand 
\begin{equation}\label{ansatz1}
1 - A_{R}(q^{2}_{E},\mu^{2}) = 1 - \sum_{n \geq 1} \frac{\gamma_{n}(a)}{n!}L_{q}^{n} =: G(a,L_{q}) \hs{2} \mbox{('log expansion')},   
\end{equation}
in momentum logarithms, where $\gamma_{n}(a)$ are what we call the \emph{RG functions}, ie the derivatives of the self-energy with respect to $L_{q}$ 
evaluated at the reference point $L_{q}=0$. This so-called \emph{log expansion} is central to our approach and will be employed in both the Yukawa and 
the QED model. We identify the function $\gamma_{1}(a)=\gamma(a)$ with the anomalous dimension. The log expansion is then inserted into (\ref{RBYR}) and, 
applying the 'trick' 
$\lim_{\rho \rightarrow 0}(-\partial_{\rho})^{n}(k_{E}^{2}/\mu^{2})^{-\rho} = L_{k}^{n}$, we get 
\begin{equation}\label{RBYR2}
G(a,L_{q}) = 1 - a \lim_{\rho \rightarrow 0} G(a,-\partial_{\rho}) \int \frac{d^{4}k_{E}}{2 \pi^{2}} \left\{ \frac{1}{k_{E}^{2}(q_{E}-k_{E})^{2}} 
-  \frac{1}{k_{E}^{2}(\ti{q}_{E}-k_{E})^{2}} \right\} \left( \frac{k_{E}^{2}}{\mu^{2}} \right)^{-\rho} ,
\end{equation}
where $G(a,-\partial_{\rho})=1 - \sum_{n \geq 1} \frac{\gamma_{n}(a)}{n!}(-\partial_{\rho})^{n}$ is a formal differential operator. 

Using the Mellin transform (\ref{MellinT}), this DSE is now rewritten to yield \cite{KrY06} 
\begin{equation}\label{RBMell}
G(a,L_{q}) = 1 -  a \lim_{\rho \rightarrow 0} G(a,-\partial_{\rho})  ( e^{-\rho L_{q}} - 1 ) F(\rho) \hs{0.5}  (\mbox{rainbow DSE in Mellin guise}).
\end{equation}
By differentiating this equation with respect to $L_{q}$ and then setting $L_{q}=0$ we find  
\begin{equation}\label{RBMellad}
\gamma_{1}(a) = - a \lim_{\rho \rightarrow 0} G(a,-\partial_{\rho}) \rho F(\rho) 
= -\frac{a}{2} \left[ 1 - \sum_{n=1}^{\infty} (-1)^{n} \ \gamma_{n}(a) \right] \hs{1} (\mbox{rainbow}).
\end{equation}
This is almost the fixed point equation for the anomalous dimension because, as we will see now, the renormalisation group (RG) equation 
(also known as Callan-Symanzik equation) relates the \emph{RG functions} $\gamma_{n}(a)$ to the anomalous dimension. For the Yukawa fermion 
the RG equation reads\footnote{Notice that this is the RG equation for the form factor of the inverse propagator and that our convention for 
the anomalous dimension differs by a factor of $1/2$ for convenience, ie the conventional one is $1/2\gamma(a)$.} 
\begin{equation}\label{RGeq}
 \left[-\frac{\partial}{\partial L_{q}} + \Rn \right]G(a,L_{q})=0  \hs{2} (\mbox{RG equation})
\end{equation}
with RG operator $\Rn:=\beta(a)\partial_{a}-\gamma(a)$, where $\beta(a)$ is the beta function. The two operators $\partial_{L_{q}}$ 
and $\Rn$ commute so that the RG equation (\ref{RGeq}) implies the recursion 
\begin{equation}\label{RGre}
\left. \partial^{n+1}_{L_{q}} G(a,L_{q}) \right|_{L_{q}=0} = \left. \partial^{n}_{L_{q}} \Rn G(a,L_{q}) \right|_{L_{q}=0}
 = \left. \Rn \partial^{n}_{L_{q}} G(a,L_{q}) \right|_{L_{q}=0} = \left. \Rn^{n} \partial_{L_{q}}G(a,L_{q}) \right|_{L_{q}=0} 
\end{equation}
for the derivatives of the function $G(a,L_{q})$, where $\Rn^{n} = \Rn \circ \dotsc \circ \Rn$ is the $n$-fold application of the RG operator \cite{KrY06,Y11}. 
In terms of the RG functions $\gamma_{n}(a)=-\partial^{n}_{L_{q}}G(a,L_{q})|_{L_{q}=0}$, this recursion reads 
\begin{equation}\label{RGrec}
 \gamma_{n+1}(a) = [\beta(a)\partial_{a}-\gamma(a)]^{n}\gamma(a) = \Rn^{n}\gamma(a) \hs{2} \mbox{('RG recursion')}
\end{equation}
and will be referred to as \emph{renormalisation group (RG) recursion} throughout this work. It takes the same form in QED for the photon's 
anomalous dimension. 

Because the rainbow approximation's beta function vanishes, $\beta(a)=0$ \cite{Krei06}, the RG operator is a simple multiplication 
operator $\Rn=-\gamma(a)$ and the RG recursion yields
\begin{equation}\label{rec}
\gamma_{n}(a)=(-1)^{n-1}\gamma(a)^{n},
\end{equation}
which, when plugged into (\ref{RBMellad}), produces the fixed point equation, ie the DSE for the anomalous dimension 
\begin{equation}\label{RBMellad'}
\gamma(a) = -\frac{a}{2} \left[ 1 - \sum_{n=1}^{\infty} (-1)^{n} \ \gamma_{n}(a) \right] 
= -\frac{a}{2} \left[ 1 + \sum_{n=1}^{\infty} \ \gamma(a)^{n} \right] \hs{1} (\mbox{rainbow})
\end{equation}
and hence (\ref{Mellineq}). The scaling solution (\ref{ansatz}) is then obtained by directly integrating the RG equation (\ref{RGeq}) with $\Rn=-\gamma(a)$. 

\subsection{Nonlinear DSE: Kilroy approximation}\label{subsec:Kil}
The above rainbow DSE (\ref{RBY}) falls into the class of so-called \emph{linear} DSEs. This denomination comes from the fact that the self-energy does not appear in 
squared form or in higher powers. As soon as nontrivial powers are involved, we speak of \emph{nonlinear} DSEs. An example is the so-called
\emph{Kilroy Dyson-Schwinger equation}\footnote{This was coined by David Broadhurst (see internet for the phrase 'Kilroy was here' to get the idea).} 
which in blob-diagrammatical form reads
\begin{equation}\label{KiloyYb}
 \graph{0}{-0.35}{0.45}{Kiloyl}{0.1} =  \graph{0.1}{-0.35}{0.45}{B+Kiloyl}{0.2} .
\end{equation}
The round blob on the right represents the full propagator. This DSE describes an approximation for the self-energy in which graphs of the form
\begin{equation}
 \Gamma = \graph{0}{-0.2}{0.3}{Kiloy}{0.2}  \hs{2} n \geq 0
\end{equation}
emerge. In it, any subgraph $\gamma_{j}$ is either a rainbow graph or a graph one obtains by consecutively inserting any sequence of rainbow graphs into 
a rainbow graph, which means that any chainings and nestings of rainbow subgraphs are involved, eg 
\begin{equation}
\graph{0}{-0.3}{0.3}{Kilrex}{0.1} \hs{2} (\mbox{Kilroy graph}).
\end{equation}
See \cite{BroK01} for some more examples. Notice that with this approximation we already enter the realm of nontrivial DSEs: the Kilroy approximation 
harbours UV renormalons, the first of which is brought about by the subseries
\begin{equation}
\left(\graph{0}{-0.35}{0.45}{Kiloyl}{0}\right)_{\mbox{\tiny ren}} :=  \graph{0.02}{0}{0.4}{1yup1}{0.1}  
+ \sum_{n \geq 1} \graph{0.02}{-0.35}{0.4}{renorm}{0.1} .
\end{equation}
We will compute it in a moment, when it suits us. In its analytical form, the Kilroy DSE in Yukawa theory is given by
\begin{equation}\label{KilroyDSE}
-i\Sigma_{K}(q) = (-ig)^{2} \int \frac{d^{4}k}{(2\pi)^{4}} \frac{i}{\slashed{k} + i \epsilon} \left[ 1 
+ i \Sigma_{K}(k)\frac{i}{\slashed{k} + i \epsilon } \right]^{-1} \frac{i}{(q-k)^{2} + i \epsilon} \ .
\end{equation}
After Wick rotating and some algebra, one finds that the analogue of (\ref{RBYR2}) is given by 
\begin{equation}\label{KilroyDSE'}
G(a,L) = 1 + \frac{a}{2} \lim_{\rho \rightarrow 0} G(a,-\partial_{\rho})^{-1} \int \frac{d^{4}k_{E}}{\pi^{2}} 
\left\{ \frac{1}{k_{E}^{2}(q_{E}-k_{E})^{2}} - \frac{1}{k_{E}^{2}(\ti{q}_{E}-k_{E})^{2}} \right\} \left( \frac{k^{2}_{E}}{\mu^{2}} \right)^{-\rho} ,
\end{equation}
which presents a real jump in complexity as compared to the rainbow case. 
We use the Mellin transform of the Kilroy DSE to cast this in the form \cite{KrY06}
\begin{equation}\label{KilroyMell}
G(a,L_{q}) = 1 +  a \lim_{\rho \rightarrow 0} G(a,-\partial_{\rho})^{-1}  ( e^{-\rho L_{q}} - 1 ) F(\rho) \hs{0.8}  (\mbox{Kilroy DSE in Mellin guise}).
\end{equation}
Not surprisingly, this equation cannot be solved by a simple scaling ansatz like in the rainbow case.
However, taking the first derivative of (\ref{KilroyMell}) with respect to $L_{q}$ and then setting $L_{q}=0$ gives the desired DSE for the anomalous 
dimension:
\begin{equation}\label{Kilan}
 \gamma_{1}(a) = -a  \lim_{\rho \rightarrow 0} G(a,-\partial_{\rho})^{-1} (-\rho) F(\rho) 
 = a  \lim_{\rho \rightarrow 0} G(a,-\partial_{\rho})^{-1} \frac{1}{2(1-\rho)}, 
\end{equation}
where we have used the Mellin transform (\ref{MellinT}). 
The crucial difference to the rainbow case is that $G(a,-\partial_{\rho})$ shows up inverted making the differential operator extra nasty:
\begin{equation}\label{GreenD}
\G_{\rho}:= G(a,-\partial_{\rho})^{-1} = 1 + \sum_{r \geq 1} \left(\sum_{n \geq 1} \frac{\gamma_{n}(a)}{n!}(-\partial_{\rho})^{n} \right)^{r}
= 1 + \sum_{r \geq 1} \sum_{n \geq r} (\gamma_{\bullet}^{\star r})_{n} (-\partial_{\rho})^{n} 
\end{equation}
where the expression 
\begin{equation}\label{conv2}
 (\gamma^{\star r}_{\bullet})_{n} := \sum_{n_{1} + \dotsc + n_{r}=n} \frac{\gamma_{n_{1}}(a)}{n_{1}!} \dotsc \frac{\gamma_{n_{r}}(a)}{n_{r}!}  
 \hs{2} r,n \geq 1
\end{equation}
is a shorthand notation (the sum is only over $n_{1}, \dotsc, n_{r} \geq 1$). We set $(\gamma^{\star 0}_{\bullet})_{0} := 1$ and 
$(\gamma^{\star 0}_{\bullet})_{n}=0$ for $n \geq 1$. The motivation for this way of writing the sum is that for a fixed coupling, 
$n \mapsto \gamma_{n}(a)$ is a real-valued function on $\N$. 
The sum in (\ref{conv2}) can then rightfully be seen as the value of an $r$-fold convolution product at the argument $n \in \N$.

The task of computing the rhs of (\ref{Kilan}) is now less trivial, but still a nice exercise which lets us arrive at
\begin{equation}\label{Kilanom1}
 \gamma(a) = a \sum_{r \geq 0,n \geq r} C_{n} (\gamma^{\star r}_{\bullet})_{n}  
 = C_{0} a + a \sum_{r \geq 1,n \geq r} C_{n} (\gamma^{\star r}_{\bullet})_{n}  \hs{2} (\mbox{Kilroy DSE}) ,
\end{equation}
with coefficients $C_{n}:=(-1)^{n}\frac{n!}{2}$. The rainbow DSE (\ref{RBMellad}) can also be shoehorned into this form, 
where the sum in this case just extends over two index values of $r$,
\begin{equation}\label{Rain}
 \gamma(a) = a \sum_{r=0}^{1} \sum_{n \geq r} C_{n} (\gamma^{\star r}_{\bullet})_{n} 
 = C_{0} a + a \sum_{n \geq 1} C_{n} (\gamma^{\star 1}_{\bullet})_{n}  = C_{0} a + a \sum_{n \geq 1} C_{n} \gamma_{n}   \hs{0.2} (\mbox{rainbow DSE}),
\end{equation}
with in this case $C_{0}=-\frac{1}{2}$ and $C_{n}=(-1)^{n}\frac{1}{2}$ for $n\geq 1$. However, in full glory, (\ref{Kilanom1}) reads
\begin{equation}\label{Kilanom}
 \gamma(a) = \frac{a}{2} \left[ 1
 + \sum_{r \geq 1} \sum_{n \geq r} (-1)^{n} \ \binom{n}{n_{1}, \dotsc ,n_{r} } \sum_{n_{1} + \dotsc + n_{r}=n} 
 \gamma_{n_{1}}(a) \dotsc \gamma_{n_{r}}(a)\right]  \hs{1} 
(\mbox{Kilroy}) ,
\end{equation}
where $\gamma(a)=\gamma_{1}(a)$ and the innermost sum is over all $n_{1}\geq 1, \dotsc, n_{r} \geq 1$.
Combining this with the RG recursion $\gamma_{n}(a)=\Rn^{n-1}\gamma(a)$, 
\begin{equation}\label{KiladDSE}
 \gamma(a) = \frac{a}{2} \left[ 1
 + \sum_{r \geq 1} \sum_{n \geq r} (-1)^{n} \ \binom{n}{n_{1}, \dotsc ,n_{r}} \sum_{n_{1} + \dotsc + n_{r}=n} 
 \Rn^{n_{1}-1}(\gamma(a)) \dotsc \Rn^{n_{r}-1}(\gamma(a)) \right]  ,
\end{equation}
brings out this equation's character as a fixed point equation for the anomalous dimension. If we now compare this with the rainbow case in
(\ref{RBMellad}) and (\ref{RBMellad'}), we see the dramatic change: the rhs of (\ref{KiladDSE}) has an infinite number of differential operators 
hidden in the powers of the RG operator $\Rn$, whereas the rainbow DSE has none. (\ref{KiladDSE}) will be the key equation studied in the transseries 
setting. 

To compute the first UV renormalon, we truncate the differential operator in (\ref{GreenD}) 
\begin{equation}
1 + \sum_{r \geq 1} (-c_{1}a \partial_{\rho})^{r}, 
\end{equation}
$c_{1}$ being the first-order coefficient in $\gamma(a)=c_{1}a + \mathcal{O}(a^{2})$, and replace it in (\ref{Kilan}):
\begin{equation}\label{KilanR}
 \gamma_{\text{ren}}(a) =  a  \lim_{\rho \rightarrow 0} \left( 1 + \sum_{r \geq 1} (-c_{1} a \partial_{\rho})^{r} \right) \rho F(\rho) 
 = a + \sum_{r \geq 1}(- c_{1})^{r} r! a^{r}.
\end{equation}
The Borel transform of this series is given by $\hat{\gamma}_{\text{ren}}(\zeta)=(1+c_{1}\zeta)^{-1}$ and has a pole at $\zeta_{*}=-c_{1}^{-1}$. 
This pole is the first UV renormalon of the Kilroy series. Unless this divergence is cancelled in some mysterious way by the remainder of the
anomalous dimension's perturbation series, the anomalous dimension of the Kilroy approximation has a divergent perturbation series! 
This sets it apart from the results of the rainbow approximation and makes it 'more physical'. But this first and all higher renormalons aside, 
considering the \emph{growth} in the number of Kilroy diagrams with loop order alone \cite{BroK00}, this is exactly what one would expect.

\subsection{ODE for Kilroy} However, the good news is that one can do some more to tackle the Kilroy case \cite{KrY06}: differentiating (\ref{KilroyMell}) 
twice with respect to the parameter $L_{q}$ and then setting it to zero leads to 
\begin{equation}
  \gamma_{2}(a) = -a  \lim_{\rho \rightarrow 0} \G_{\rho} \rho^{2} F(\rho) = a  \lim_{\rho \rightarrow 0} \G_{\rho} \frac{- \rho}{2(1-\rho)}
\end{equation}
where we have used the explicit expression of the Mellin transform in (\ref{MellinT}). Comparing this to (\ref{Kilan}) urges us to add both expressions which produces 
$\gamma_{1}(a) + \gamma_{2}(a) = \frac{a}{2}$. Combined with the RG recursion $\gamma_{2}(a)=\Rn \gamma_{1}(a)$, this gives an ODE,
\begin{equation}\label{Kilode}
 \frac{a}{2} = \gamma(a) + \Rn \gamma(a) = \gamma(a) + \gamma(a)(2a \partial_{a}-1)\gamma(a),  
\end{equation}
where $\Rn = 2 a\gamma(a)\partial_{a} - \gamma(a)$ is the RG operator of the Kilroy approximation whose beta function is given by 
$\beta(a)=2a \gamma(a)$ \cite{Y11}. 

Broadhurst and Kreimer have investigated the Kilroy DSE (\ref{KilroyDSE}) for both Yukawa and $(\varphi^{3})_{6}$ theory in \cite{BroK01}, 
where this ODE has also been derived, albeit down a completely different route and slightly differing conventions. The authors find that the anomalous 
dimension $\gamma(a)$ of the Kilroy approximation (\ref{KilroyDSE}) satisfies the implicit 
equation\footnote{Nowadays, there is modern computer algebra software like Maple 16 which turns (\ref{Kilode}) into (\ref{David}).}
\begin{equation}\label{David}
 \sqrt{\frac{a}{\pi}} e^{-Z(a)} = 1 + \mbox{erf}(Z(a)),
\end{equation}
where $Z(a)=(\gamma(a)-1)/\sqrt{a}$ and $\mbox{erf}(x)$ is the famous error function. They solved this
equation\footnote{The authors used different conventions, one has to replace $a \rightarrow \frac{a}{2}$ and 
$\gamma(a) \rightarrow -\frac{1}{2} \gamma(a)$ to find agreement.} numerically for $\gamma(a)$ by an algorithm of the Newton-Raphson type \cite{BroK01}. 
The Kilroy model may therefore be seen as exactly solved. 

Our contention that its perturbation series must be divergent is supported by Broadhurst and Kreimer 
in \cite{BroK00} by the growth of the Kilroy model's coefficients up to 30 loops, which turned out to behave as
\begin{equation}
 c_{n} \sim 2^{n-1} \Gamma(n+1/2).
\end{equation}

\subsection{Approximative DSE for the photon} 
In terms of blob diagrams, the DSE for the photon's self-energy in quantum electrodynamics (QED) is given by 
\begin{equation}\label{phoDSE}
  \graph{0}{-0.1}{0.17}{DSEphol}{0.2} =   \graph{0.2}{-0.5}{0.15}{DSEphor}{0.2}   \hs{2.7} (\mbox{photon self-energy}),
\end{equation}
in which the self-energy is hidden on the rhs inside the fermion and vertex blobs. Their DSEs are 
\begin{equation}\label{ferDSE}
 \graph{0}{-0.12}{0.16}{DSEferl}{0.2} =   \graph{0.2}{-0.12}{0.16}{DSEferr}{0.2}   \hs{2} (\mbox{fermion self-energy})
\end{equation}
and 
\begin{equation}\label{verDSE}
\graph{0}{-0.5}{0.2}{DSEverl}{0}  =   \graph{0.1}{-0.4}{0.2}{DSEver}{0.1}  +  \graph{0.1}{-0.6}{0.2}{DSEverr}{0.1}  +  \graph{0.1}{-0.6}{0.2}{DSEverr2}{0.2}
+  \ \ \dotsc      \hs{0.5} (\mbox{vertex function}) .                                                                  
\end{equation}
We abstain from writing these out in their analytical form, the interested reader is referred to the classical source \cite{BjoDre65}. The DSE for 
the vertex (\ref{verDSE}) cannot be read as a nonperturbative equation unless it is truncated. This is the price to pay if one tries to cut the three top 
pieces off the infinite tower that DSEs in fact are \cite{BjoDre65}. The trouble with the infinite skeleton expansion (\ref{verDSE}) is that it comprises
by itself a divergent series even without inserted blobs.     

However, it is moreover possible to decouple the photon DSE (\ref{phoDSE}) from all the other DSEs courtesy 
of the Ward identity for the renormalisation constants, ie $Z_{1}=Z_{2}$ (charge renormalisation = electron wave function renormalisation)\cite{Wa50}. 
Before we elaborate on this aspect, let us first note that the decoupling is achieved combinatorially by constructing the entire perturbation 
series using only photon propagator corrections as follows. 

Again, as explained in the case of the vertex DSE (\ref{verDSE}), the price to pay is that
the single skeleton in (\ref{phoDSE}) is replaced by an infinite sum of skeletons, where the skeleton diagrams are the photon diagrams 
of \emph{quenched QED},
\begin{equation}\label{pho}
\graph{0}{-0.25}{0.3}{ph0}{0} + \graph{0}{-0.25}{0.3}{ph1}{0} + \graph{0}{-0.25}{0.3}{ph1o}{0} + \graph{0}{-0.25}{0.3}{ph1u}{0}
+  \graph{0}{-0.25}{0.3}{ph2}{0} + \graph{0}{-0.25}{0.3}{ph2o}{0} + \graph{0}{-0.4}{0.3}{ph2u}{0} + \  \dotsc 
\end{equation}
with the defining property of featuring only bare photon lines. These photon lines are then dressed with full photon propagators and one arrives at 
\begin{equation}\label{phoDSE1}
\graph{0}{-0.1}{0.17}{DSEphol}{0} = \graph{0}{-0.25}{0.3}{ph0}{0} + \graph{0}{-0.25}{0.3}{ph1d}{0} + \graph{0}{-0.25}{0.3}{ph1od}{0} 
+ \graph{0}{-0.25}{0.3}{ph1ud}{0} +  \graph{0}{-0.25}{0.3}{ph2d}{0} + \graph{0}{-0.25}{0.3}{ph2od}{0}  +  \  \dotsc 
\end{equation}
which combinatorially provides all contributions to the photon's self-energy. If we truncate this skeleton series, we can rightfully interprete it as 
a nonperturbative equation for the self-energy. But this comes at the price of having to acknowledge that it is an approximation. 

As with the vertex series in (\ref{verDSE}), things become somewhat fuzzy when we pass over to the limit of an infinite number of skeletons because the 
skeleton expansion (\ref{pho}) is itself a divergent series! 
Let us have a closer look at this and write (\ref{phoDSE1}) in the form
\begin{equation}\label{Ske}
\sum_{\ell \geq 1} S_{\ell}[\Pi(\alpha)](Q) \alpha^{\ell} = \graph{0}{-0.25}{0.3}{ph0}{0} + \graph{0}{-0.25}{0.3}{ph1d}{0} 
+ \graph{0}{-0.25}{0.3}{ph1od}{0} 
+ \graph{0}{-0.25}{0.3}{ph1ud}{0} +  \graph{0}{-0.25}{0.3}{ph2d}{0} + \graph{0}{-0.25}{0.3}{ph2od}{0}  +  \  \dotsc
\end{equation}
where $S_{\ell}[\Pi(\alpha)](Q)$ is the nonlinear (!) integral operator of the $\ell$-th skeleton graph in (\ref{phoDSE1}) with external kinematics $Q$, 
mapping the self-energy $\Pi(\alpha)$ of the photon to the 'blobbed' expression. If we set $\Pi(\alpha)=0$ in all these terms, we get the perturbation series of the self-energy in 
quenched QED:
\begin{equation}\label{Ske0}
\sum_{\ell \geq 1} S_{\ell}[0](Q) \alpha^{\ell} = \graph{0}{-0.25}{0.3}{ph0}{0} + \graph{0}{-0.25}{0.3}{ph1}{0} + \graph{0}{-0.25}{0.3}{ph1o}{0} 
+ \graph{0}{-0.25}{0.3}{ph1u}{0} + \graph{0}{-0.25}{0.3}{ph2}{0} + \dotsc
\end{equation}
And here is the obstruction: this perturbation series is a divergent series itself! At the very least, it has renormalons from fermion propagator corrections. It therefore needs a nonperturbative completion, say some transseries representation. 
The trouble is that there is no self-consistent equation and cannot be any to help fix such representation: any attempt to find a single Dyson-Schwinger equation for 
the quenched photon is nipped in the bud by the taboo to insert photon blobs into bare photon lines. 

However, that is not to say (\ref{phoDSE1}) is a hopeless case and can never be given a meaning. Not quite so, in fact, the theory of multisummability is
actually more ambitious than what one might think: \cite{Ba00} treats formal power series with coefficients in a Banach algebra! Alas, these 
coefficients do not depend on the coupling whereas those in the DSE (\ref{phoDSE1}) do.
So, the truth of the matter is that this equation acquires a perturbative character in the limit of infinitely many skeletons. And to retain its 
nonperturbative value, we will think of it as truncated at some arbitrary large skeleton loop order and acknowledge it to be an interesting approximation. 
But, as explained before, we may think of (\ref{phoDSE1}) as a sequence of DSEs in which
\begin{equation}
 \graph{0}{-0.1}{0.17}{DSEphol}{0} = \sum_{\ell = 1}^{N} S_{\ell}[\Pi(\alpha)](Q) \alpha^{\ell} 
\end{equation}
is the $N$-th element. Then, presuming that each DSE in this sequence has a solution, we might view the limit of this sequence of solutions as the 
sought-after self-energy.

Besides, the combinatorics of Feynman diagrams in (\ref{phoDSE1}) is fine: the decoupled photon series produces all contributions and is 
therefore, let us say, combinatorially self-consistent. 
The Ward identity is crucial in this decoupling procedure. The photon DSE (\ref{phoDSE1}) would make little sense without it, even when truncated. 
The point is that if we choose a renormalisation scheme such that $Z_{1}=Z_{2}$ holds true, the sum of all contributions in quenched QED 
at each given loop order is primitive, ie needs only one single subtraction. For example, at two-loop level, the sum of graphs
\begin{equation}\label{phot}
\Gamma_{2} = \graph{0}{-0.25}{0.3}{ph1}{0} + \graph{0}{-0.25}{0.3}{ph1o}{0} + \graph{0}{-0.25}{0.3}{ph1u}{0} 
\end{equation}
requires only one counterterm. Although each individual Feynman graph needs more than one subtraction on account of its subdivergence, all individual 
counterterms curing these subdivergences cancel each other out and only a single counterterm for the overall divergence is 
necessary\footnote{As regards the origin of the photon DSE (\ref{phoDSE1}), the author tried to track down the person who first came up with this idea, but failed, although
a number of people made it onto the shortlist with Donald Yennie on top. The author himself learnt it from his supervisor Dirk Kreimer (see also the 
acknowledgement section).}.

However, in what follows, we will denote the sum of all skeletons at loop order $\ell$ by $\Gamma_{\ell}$. 
To write (\ref{phoDSE1}) in terms of Mellin transforms, first note that all skeleton graphs at loop order $\ell \geq 1$ have $(\ell-1)$ internal photon 
lines. For $\ell \geq 2$, let us augment each internal photon propagator with a convergence factor
\begin{equation}
 \left( \frac{k^{2}_{j}}{\mu^{2}} \right)^{-\rho_{j}} = e^{-\rho_{j} L_{k_{j}}},
\end{equation}
where $k_{j}$ is the Euclidean momentum flowing through the $j$-th photon line. From now on, we tacitly take all momenta to be Euclidean. 
Next, we denote by 
\begin{equation}
\int \Int_{\ell}(\rho_{1}, \dotsc, \rho_{\ell-1};q)
\end{equation}
the so-regularised scalar Feynman integral associated to the skeleton sum $\Gamma_{\ell}$ with external momentum $q$, ie all contributing Feynman 
integrands are brought under one integral sign (the coupling constant is excluded from this skeleton Feynman integral). The skeleton is 
renormalised by the subtraction
\begin{equation}
\int \Int_{\ell}(\rho_{1}, \dotsc, \rho_{\ell-1};q) - \int \Int_{\ell}(\rho_{1}, \dotsc, \rho_{\ell-1};\ti{q}) 
\end{equation}
with reference momentum $\ti{q}$ such that $\ti{q}^{2}=\mu^{2}$. Acting $G(\alpha,-\partial_{\rho_{j}})^{-1}$ on this expression inserts a full 
renormalised propagator, by means of the 'mechanism' 
\begin{equation}\label{insert}
G(\alpha,-\partial_{\rho_{j}})^{-1} \left( \frac{k^{2}_{j}}{\mu^{2}} \right)^{-\rho_{j}} 
= \left( \frac{k^{2}_{j}}{\mu^{2}} \right)^{-\rho_{j}} G(\alpha,L_{k_{j}})^{-1} ,
\end{equation}
where $\alpha$ is the fine-structure constant of QED which we naturally choose as coupling parameter and $G(\alpha,L_{k_{j}})^{-1}$ is the analogue of 
(\ref{GreenD}). To be more precise, what we mean is the form factor of the photon's self-energy: let 
\begin{equation}\label{Pro}
\Pi_{\mu \nu}(q) = \frac{g_{\mu \nu} - q_{\mu}q_{\nu}/q^{2}}{q^{2}[1-\Pi(\alpha,q^{2}/\mu^{2})]} 
\end{equation}
be the transversal part of the full renormalised photon propagator in massless QED. Then we have $G(\alpha,L_{q})=1-\Pi(\alpha,q^{2}/\mu^{2})$.
However, to make the crucial step towards a formulation using Mellin transforms, we define the Mellin transform $F_{\ell}$ of the skeleton 
$\Gamma_{\ell}$ by
\begin{equation}
F_{\ell}(\rho_{1}, \dotsc, \rho_{\ell-1}) := \left. \int \Int_{\ell}(\rho_{1}, \dotsc, \rho_{\ell-1};q) \right|_{q^{2}=1},
\end{equation}
analogously to (\ref{MellinT}). One can then write \cite{Y11}
\begin{equation}
[e^{-(\rho_{1}+\dotsc+\rho_{\ell-1})L_{q}} - 1 ]F_{\ell}(\rho_{1}, \dotsc, \rho_{\ell-1}) 
= \int \Int_{\ell}(\rho_{1}, \dotsc, \rho_{\ell-1};q) - \int \Int_{\ell}(\rho_{1}, \dotsc, \rho_{\ell-1};\ti{q}) 
\end{equation}
and finally produce the rhs of (\ref{phoDSE1}) by applying the differential operators (\ref{insert}) to dress the bare internal photon lines and obtain
\begin{equation}\label{phodse}
G(\alpha,L_{q})=1 - \alpha A_{0}L_{q} + \lim_{\{ \rho \} \rightarrow 0} \sum_{\ell \geq 1} \alpha^{\ell+1}  \G_{\rho_{1}} \dotsc 
\G_{\rho_{\ell}}
\left[e^{-(\rho_{1}+\dotsc+\rho_{\ell})L_{q}} - 1 \right]F_{\ell+1}(\rho_{1}, \dotsc, \rho_{\ell}) ,
\end{equation}
where the second term on the rhs is the first-loop contribution to the self-energy in momentum scheme ($A_{0}=1/3\pi$, see \cite{GoKLaS91}) and 
$\G_{\rho_{j}}$ are the formal differential operators 
\begin{equation}
 \G_{\rho_{j}}:=  G(\alpha,-\partial_{\rho_{j}})^{-1} = 1 + \sum_{r \geq 1} \sum_{n \geq r} \left( \sum_{n_{1} + \dotsc + n_{r}=n} 
 \frac{\gamma_{n_{1}}(\alpha)}{n_{1}!} \dotsc \frac{\gamma_{n_{r}}(\alpha)}{n_{r}!} \right) (-\partial_{\rho_{j}})^{n} ,
\end{equation}
defined as in (\ref{GreenD}). The shorthand notation of (\ref{conv2}) puts us in a position to write the above product of differential operators in 
the more compact form  
\begin{equation}
\G_{\rho_{1}} \dotsc \G_{\rho_{\ell}} = \sum_{r_{1}\geq 0,n_{1} \geq r_{1}} \dotsc \sum_{r_{\ell}\geq 0,n_{\ell} \geq r_{\ell}}
(\gamma^{\star r_{1}}_{\bullet})_{n_{1}}  \dotsc (\gamma^{\star r_{\ell}}_{\bullet})_{n_{\ell}} 
(-\partial_{\rho_{1}})^{n_{1}}\dotsc (-\partial_{\rho_{\ell}})^{n_{\ell}} 
\end{equation}
and extract the DSE for the anomalous dimension out of (\ref{phodse}): comparing the first-order 
term in $L_{q}$ on both sides, we find
\begin{equation}\label{phodse1}
 \gamma(\alpha) = \alpha A_{0} + \sum_{\ell \geq 1} \alpha^{\ell+1} \sum_{r_{1}\geq 0, n_{1} \geq r_{1}} \dotsc \sum_{r_{\ell}\geq 0, n_{\ell} \geq r_{\ell}} 
  C_{(n_{1}, \dotsc, n_{\ell})} (\gamma^{\star r_{1}}_{\bullet})_{n_{1}}  \dotsc (\gamma^{\star r_{\ell}}_{\bullet})_{n_{\ell}}  \hs{0.2} (\mbox{photon}) ,
\end{equation}
where $\gamma(\alpha)=\gamma_{1}(\alpha)$ and the numbers
\begin{equation}
 C_{(n_{1}, \dotsc, n_{\ell})} :=\lim_{\{\rho\} \rightarrow 0} (-\partial_{\rho_{1}})^{n_{1}} (-\partial_{\rho_{\ell}})^{n_{\ell}} 
 (\rho_{1} + \dotsc + \rho_{\ell})F_{\ell +1}(\rho_{1}, \dotsc, \rho_{\ell}) 
\end{equation}
can be assumed to exist because the Mellin transforms $F_{\ell +1}$ have only simple first order poles. The limit may have to be taken with extra care, 
though.

\section{Brief introduction to grid-based transseries}\label{sec:ReTra} 		  
Before we apply the framework of transseries to the nonperturbative formulae derived in the previous section, we will give a brief mathematical 
introduction to transseries in the spirit of \cite{Ed09, vH06}, which have been our main sources for the material garnered here. 
Readers will find the details including proofs there. 
For convenience we adopt the habit of mathematicians to replace the coupling constant $x \in \{a,\alpha\}$ by its inverse $z=x^{-1}$ which allows us to 
make the string of signs in (\ref{Borelmach}) a trifle more appealing, ie
\begin{equation}
 \sum_{k \geq 0} a_{k}z^{-k-1} = \int_{0}^{\infty} e^{-z \zeta} (\sum_{k\geq 0} \frac{a_{k}}{k!} \zeta^{k}) d\zeta . 
\end{equation}
But appearance aside, the reason for our choice is that we have no intention to the change the entire terminology of an established theory (namely 
that of transseries). The field for the transseries' coefficients we first use is $\R$. We will then explain when complex numbers are fine. 
We would like to warn the reader that the following material is extremely concise. It may thus be useful to take notes.  

\subsection{Grid-based Hahn series}\label{subsec:Hahn}
Let $(\Mo, \preccurlyeq )$ be a totally ordered abelian group with the neutral element denoted by $1$. 
For $\g,\g' \in \Mo$ with $\g \neq \g'$ and $\g \preccurlyeq \g'$, we write $\g \prec \g'$. 
If $\g \prec 1$, then we refer to $\g$ as \emph{small}, if $\g \succ 1$, we call it \emph{large}. 
The elements of the \emph{monomial group} $\Mo$ are called \emph{Hahn monomials}. A \emph{Hahn series} is a map $T \colon \Mo \rightarrow \R$ written in the 
form of a formal series,
\begin{equation}
 T = \sum_{\g \in \Mo} T_{\g} \g,  \hs{2} (\mbox{'Hahn series'})
\end{equation}
where $T_{\g} := T(\g) \in \R$ is the image of $\g$ under $T$. Whenever we speak of the \emph{support} of a Hahn series $T$, we mean the 
subset $\supp(T):= \{ \g \in \Mo \, |\, T_{\g} \neq 0\}$.

An example of a monomial group is $\Mo_{0}:=\{ z^{a} \, |\, a \in \R \}$ with group law $z^{a} \cdot z^{b} := z^{a+b}$ and order relation $z^{a} \preccurlyeq z^{b}$ if $a \leq b$.
The elements of $\Mo_{0}$ are called \emph{log-free transmonomials of height 0} (zero). One may take them either as 
abstract symbols or as functions. In this latter interpretation, small monomials vanish in the limit $z \rightarrow \infty$, while large ones diverge and
the order relation tells us what their quotients are doing. Hahn series built from these transmonomials are called \emph{log-free transseries of height 0}. 

For a Hahn series $T$, the element $\magn(T) \in \supp(T)$ defined through the property 
\begin{equation}\label{state}
 \g \preccurlyeq \magn(T) \hs{2} \text{for all } \g \in \supp (T)
\end{equation}
is called \emph{magnitude} or \emph{dominating monomial} of $T$. We write $T \prec 1$ if $\magn(T) \prec 1$ and say that $T$ is \emph{small}.
If $\magn(T) \succ 1$, we say that $T$ is \emph{large} and write $T \succ 1$. If, moreover, $\g \succ 1$ for all $\g \in \supp(T)$ and not just the large 
ones, then $T$ is called \emph{purely large}. By convention, the zero transseries is also purely large because the property is trivially satisfied by 
the statement in (\ref{state}) not being false since $\supp(0)=\emptyset$. As an example, consider 
\begin{equation}\label{trex}
 T = z^{2} + 1 + \sum_{n \geq 1} z^{-\sqrt{n}}
\end{equation}
which has dominating monomial $\magn(T)=z^{2} \succ 1$ and is therefore large but not purely large on account of its small and constant part. 
It can nonetheless be decomposed into a purely large, constant and small piece, respectively given by $T_{\succ}=z^{2}$, $T_{\asymp}=1$ and $T_{\prec}=  \sum_{n \geq 1} z^{-\sqrt{n}}$. 

If the leading coefficient is positive (nonnegative), in signs $T(\magn(T)) > 0$ ($\geq 0$), we say that $T$ is positive (nonnegative) and write 
$T > 0$ ($\geq 0$), mutatis mutandis for negative Hahn series. Together with addition, this allows us to impose an ordering on the set of transseries 
by defining the ordering relation through $T \geq S : \Leftrightarrow T-S \geq 0$.

We pick a finite subset $\mo  = \{ \mo_{1} , \dotsc , \mo_{n} \} \subset \{ \g \in \Mo \, |\, \g \prec 1\}$
of small transmonomials and define for every multi-index $l=(l_{1}, \dotsc ,l_{n}) \in \Z^{n}$ the group element 
$\mo^{l} := \mo_{1}^{l_{1}} \dotsc \mo_{n}^{l_{n}} \in \Mo$. 
These Hahn monomials generate the subgroup $\J^{\mo}:=\{ \mo^{l} \, |\, l \in \Z^{n} \} \subset \Mo$. 
Recall that the integer lattice $\Z^{n}$ is partially ordered: given $k,l \in \Z^{n}$, we write $k \geq l$ if $k_{j}\geq l_{j}$ for all 
$j \in \{1,...,n\}$. With this partial ordering, we define \emph{grids} to be subsets of the form 
\begin{equation}
 \J^{\mo}_{k} := \{ \mo^{l} \, |\, l \geq k \} \subsetneq \J^{\mo}.  
\end{equation}
It will be useful to visualise the subgroup $\J^{\mo}$ as $\Z^{|\mo|}$ and the grids $\J^{\mo}_{k}$ as the corresponding 
subsets of $\Z^{|\mo|}$ as illustrated in \textsc{Figure} \ref{grid} for the grid $\J^{\mo}_{(3,2)}$ with $\, |\,\mo\, |\,=2$.

\begin{figure}[ht]
\begin{center} \includegraphics[height=4cm]{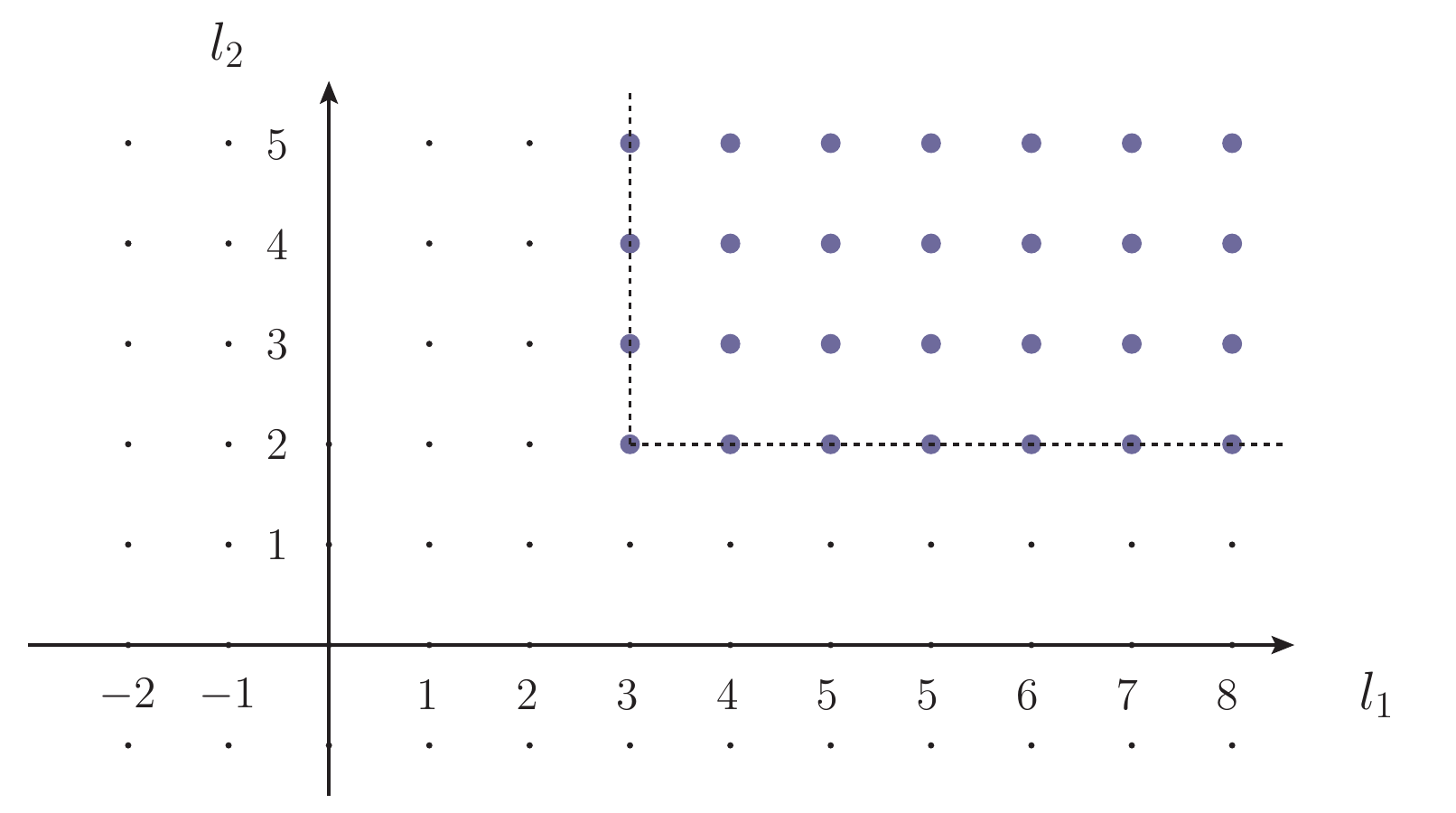} \end{center} 
\caption{\small An illustration of the subgroup $\J^{\mo} \subset \Mo$ for $\, |\,\mo\, |\,=2$, identified with $\Z^{2} \cong \J^{\mo}$ and the 
grid $\J^{\mo}_{(3,2)}$ (big dots). The thin dotted line indicates the boundary of the grid. }
\label{grid}
\end{figure}

A Hahn series $T$ is called \emph{grid-based} if there exists a generator set $\mo= \{ \mo_{1} , ..., \mo_{n} \}$ and a multi-index $k \in \Z^{n}$ such 
that $\supp(T) \subset \J^{\mo}_{k}$. One says that $T$ is supported by the grid $\J^{\mo}_{k}$ or that this grid is \emph{supportive} for $T$. 
Subsets of a grid are referred to as \emph{subgrids}. 

The set of grid-based Hahn series with monomial group $\Mo$ is denoted by $\Tm{\R}$. The nice thing about such series is that their product is well-defined:  
\begin{equation}
 T \cdot S =  (\sum_{\mathfrak{a} \in \Mo}T_{\mathfrak{a}}\mathfrak{a} ) \cdot (\sum_{\mathfrak{b} \in \Mo}S_{\mathfrak{b}}\mathfrak{b} )
 := \sum_{\mathfrak{g} \in \Mo} (\sum_{\mathfrak{a} \mathfrak{b} = \g} T_{\mathfrak{a}} S_{\mathfrak{b}} ) \g = 
 \sum_{\g \in \Mo} (T \cdot S)_{\g} \g, 
\end{equation}
which is due to the fact that the set $\{ (\mathfrak{a}, \mathfrak{b}) \in \supp(T) \times \supp(S) : \mathfrak{a} \mathfrak{b} = \mathfrak{g} \}$ is 
finite. It promotes the vector space of grid-based Hahn series to an algebra. Moreover, grid-based series can be written in the form
\begin{equation}\label{HS}
 T = \sum_{l \in \Z^{n} } t_{l} \mo^{l}    
\end{equation}
and the product of two Hahn series $T, S$ in this notation reads
\begin{equation}
T \cdot S = (\sum_{l' \in \Z^{n} } t_{l'} \mo^{l'} ) \cdot ( \sum_{l'' \in \Z^{n} } s_{l''} \mo^{l''} ) 
=  \sum_{l \in \Z^{n}} (\sum_{l'+l''=l} t_{l'}s_{l''})\mo^{l}.
\end{equation}

\subsection{Log-free transseries}
We now dispose of the notions necessary to define grid-based log-free transseries. First, we denote by $\Ts_{0}:=\Tlm{\R}{\Mo_{0}}$ the algebra of 
grid-based log-free transseries of height 0, ie grid-based Hahn series supported by grids in the group $\Mo_{0}=\{ z^{a} \, | \, a \in \R \}$. 
Notice that these series are already generalisations of formal power series in $\R[[z^{-1}]]$ but that the example (\ref{trex}) is not grid-based. 

Purely large transseries will play a special role in what follows. It is therefore vital to understand them. Consider the zero-height transseries
\begin{equation}\label{pula}
 \sum_{n \geq -\ell} a_{n} z^{-n} = a_{-\ell}z^{\ell} + \dotsc + a_{-2}z^{2} + a_{-1}z + a_{0} + \sum_{n \geq 1}a_{n}z^{-n}.
\end{equation}
It is large but purely large only if $a_{n}=0$ for all $n \geq 0$. This is of course a very special and certainly not the most general transseries of zero height,
we could have chosen any finite number of transmonomials in $\Mo_{0}$. But it teaches us something: since the supportive grids of transseries 
in $\Ts_{0}$ all have a maximal element by the very definition of grids, a purely large transseries of height zero can only be made up of a finite 
number of large transmonomials and no constant!  

Let us denote the subalgebra of purely large transseries by $\Ts_{0}^{\succ} \subsetneq \Ts_{0}$ and define 
\begin{equation}
 \Mo_{1}:=\left\{ z^{a} e^{A} \, |\, a \in \R, A \in \Ts_{0}^{\succ} \right\}
\end{equation}
to be an ordered abelian group with the obvious group law given by $(z^{a} e^{A})\cdot (z^{b} e^{B}):=z^{a+b} e^{A+B}$ and 
ordering $z^{a} e^{A} \preccurlyeq z^{b} e^{B}$ if  $A \leq B$ or $A=B$ but then $a \leq b$. The elements of $\Mo_{1}$ are called 
\emph{log-free transmonomials of height 1}. Note that because $0 \in \Ts_{0}$ is by definition also purely large, $\Mo_{0} \subsetneq \Mo_{1}$ is a
proper subgroup and $\Ts_{0} \subsetneq \Ts_{1}$ a proper subalgebra. A transmonomial $\mathfrak{t} \in \Mo_{1}$ is said to be of \emph{exact} height 1, if 
$\mathfrak{t} \notin \Mo_{0}$.

The reader may guess it, $\Ts_{1}:=\Tlm{\R}{\Mo_{1}}$ denotes the algebra of \emph{grid-based log-free transseries of height 1}. An example is the 
small transseries given by 
\begin{equation}
 T =  4 z^{-2} e^{-3z} + z^{-1} e^{-z + 2z^{2} - z^{3}}.
\end{equation}
In this example, we have $-z + 2z^{2} - z^{3} < -3z$ and therefore $z^{-1} e^{-z + 2z^{2} - z^{3}} \prec 4 z^{-2} e^{-3z}$. Note that now, with nontrivial
exponential height, purely large transseries, ie the elements of $\Ts_{1}^{\succ} \subsetneq \Ts_{1}$, need no longer have a finite number of terms. If we modify the above example (\ref{pula}), then 
\begin{equation}\label{pula1}
\sum_{n \geq -\ell} a_{n} z^{-n} e^{z} 
\end{equation}
is a purely large transseries of exact height 1 even in case it is an infinite series.

To proceed towards general grid-based log-free transseries, the game is now played inductively on height: the algebra of \emph{grid-based log-free transseries of height $N$} is given by
$\Ts_{N}=\Tlm{\R}{\Mo_{N}}$ where 
\begin{equation}
 \Mo_{N} := \left\{z^{a} e^{A} \,|\, a \in \R, A \in \Ts^{\succ}_{N-1} \right\}
\end{equation}
is the monomial group of log-free transmonomials of height $N$ and, finally, the monomial group of \emph{grid-based log-free transmonomials} 
$\Mo_{\bullet}:=\bigcup_{N \in \N}\Mo_{N}$ then gives rise to the set of \emph{grid-based log-free transseries} $\Ts_{\bullet}=\Tlm{\R}{\Mo_{\bullet}}$. 

\subsection{Transseries with logarithms} As alluded to in §\ref{sec:Intro}, we shall not use transseries with logarithms. Besides
simplicity, adding logs in the form of log polynomials does not affect our results, as we will explain when the need arises in later sections. 
Nonetheless, for the sake of completeness, we shall give a lightening introduction.
 
One defines additional Hahn monomials involving logarithms as follows. Let 
\begin{equation}
 \log_{M}z:= (\log \circ \dotsc \circ \log)(z)
\end{equation}
be the $M$-fold composition of the natural logarithm, either seen as a symbol or as a function. 
For a zero height transmonomial $\g = z^{a} \in \Mo_{0}$, we define the composition with a logarithm by 
$\g \circ \log_{M}= z^{a} \circ \log_{M}:=(\log_{M} z)^{a}$ and for a transseries of zero 
height, we set
\[
 T \circ \log_{M} = (\sum_{\g \in \Mo_{0}} T_{\g} \g )\circ \log_{M}:= \sum_{\g \in \Mo_{0}} T_{\g} (\g \circ \log_{M}),
\]
ie a termwise composition. Then inductively on height, one sets
\[
 (z^{a}e^{A}) \circ \log_{M}:= (z^{a} \circ \log_{M}) e^{A \circ \log_{M}}
\]
for a monomial $z^{a}e^{A} \in \Mo_{N}$. Hahn monomials obtained this way are called \emph{transmonomials} of height $N$ and depth $M$, where depth 
refers to 'logarithmic depth', while height - the reader has probably by now realised it - refers to 'exponential height'. The set of such monomials is 
denoted by $\Mo_{NM} := \{ \g \circ \log_{M} \, |\, \g \in \Mo_{N} \}$ and the set of all transmonomials is given 
by $\Mo_{\bullet \bullet} = \bigcup_{N,M \in \N }\Mo_{N M}$. Finally, $\Ts_{\bullet \bullet}=\Tlm{\R}{\Mo_{\bullet \bullet}}$ is the set of all 
grid-based transseries. As regards the order relation for these new transmonomials, let $Q,P \in \Mo_{N}$ be both log-free, then one sets
\begin{equation}
 Q \circ \log_{M} \prec P \circ \log_{M} :\Leftrightarrow Q \prec P
\end{equation}
and $Q \circ \log_{M} \prec Q \circ \log_{N}$ if $N < M$. For example, by $z \prec e^{z}$ we find $\log z \prec z$ due to 
\begin{equation}
 \log z = z \circ \log \prec e^{z} \circ \log = e^{\log z} = \exp (\log z) = z \circ (\log_{-1} \circ \log) = z ,
\end{equation}
where $\log_{-N}z$ for negative $-N$ is defined to be an iterated exponential, ie in this case the $N$-fold composition 
$\exp_{N}z = (\exp \circ \dotsc \circ \exp)(z)$. 
The reader can see now what class the transseries in (\ref{pertexT}) belongs to: 
it is a \emph{logarithmic transseries} of height 1 and logarithmic depth 1.

\section{Renormalisation and resurgent transseries in quantum field theory}\label{sec:RenorTra}       
As alluded to in the introduction, the correct type of transseries needed to characterise the observables of a given fully fledged 
renormalisable (or better: renormalised) QFT are as yet totally unknown. We shall present in this section our tentative view on this 
issue and argue why the negative results of this work are not entirely unexpected, the root cause being renormalisation. 

\subsection{Resurgent transseries for quantum field theory}\label{subsec:trans} 
The ideas currently being floated suggest that the transseries of QFTs should be of exponential height 1 and logarithmic depth 1, that is, it is 
expected to be of the form
\begin{equation}\label{transa}
 f(z) = \sum_{\omega \in \Omega \cup \{ 0\} } e^{-\omega z} f_{\omega}(z),
\end{equation}
with $f_{\omega}(z) \in \C[[z^{-1}]][\log z]$, ie polynomials in the variable $u=\log z$ with formal power series in $z^{-1}$ as coefficients. 
The set $\Omega$ contains all singularities of the Borel transform of the perturbative series $f_{0}(z) \in \C[[z^{-1}]]$, where the assumption is
that $\omega \in \Omega$ implies $\text{Re}(\omega)\neq 0$, ie none of the singularities are purely imaginary. 

This is an important point since otherwise, there would be an ordering problem for the support of the transseries in (\ref{transa}) when comparing two 
exponentials. Generally, in the construction of transseries, the order relation $>$ which compares the leading
coefficients of two transseries is altered to compare their real parts when the field is changed from $\R$ to $\C$. Complications only arise when 
transseries with purely imaginary leading coefficients are exponentiated \cite{Co09}. 

Notice that (\ref{transa}) is at face value not in general grid-based but only \emph{well-based}, which for practical purposes means that the support 
has a maximum. The problem with well-based but not grid-based transseries is that it is not known whether and how they can be treated with \'Ecalle's 
formalism of accelero-summation \cite{Ed09}. Hence only transseries of the grid-based type can so far safely be called \emph{resurgent}. 
In a worst-case scenario, the observables of a renormalsied QFT cannot be represented by resurgent transseries and are therefore not analysable functions.

However, such bleak view on things is unfounded. The evidence unearthed so far suggests that grid-based transseries may actually be sufficient: 
several quantum-mechanical examples \cite{DunU14}, the genus expansions in topological string theory \cite{CESVo15, CESVo16}, 
minimal (super)string theory \cite{ASVo12,SchiVa14}, the coupling expansions of the SUSY Yang-Mills theories studied in \cite{ARuS15}
and the model dealt with in \cite{DunU13} \emph{all} exhibit singularities in the Borel plane at places of the form $\omega_{n}=n \sigma $, where 
$\sigma \in \C$ and $n \in \N^{*}$, ensuring that (\ref{transa}) is grid-based. 
Moreover, the renormalon analysis of \cite{Ben99} shows that the renormalons of QCD are situated on the real line $\R$ at 
$\omega^{\pm}_{n} = \pm n/\beta_{0}$, where $\beta_{0}$ is the one-loop contribution to the beta function. 

\subsection{Renormalisation as a game changer}\label{subsec:Recha}
But as asserted in §\ref{subsec:transQFT}, the idea that the semi-classical expansion of the partition function can be written in terms
of an action of the form
\begin{equation}\label{action}
 S(\phi) = \frac{1}{\alpha} \int d^{d}x \  \La(\phi,\partial \phi)
\end{equation}
is not true for the \emph{renormalised version of this theory}. Let us quickly have a look at it in the case of Euclidean $(\phi^{4})_{4}$ theory and see 
why we may have to expect the transseries of a renormalised QFT to be of a fundamentally different nature. 

Despite the fact that the following argument is completely heuristic, we still think it of value, especially in the light of the negative result 
obtained in this work. We begin with the bare action 
\begin{equation}\label{bare}
 S(\varphi)= \int d^{4}x \left( \frac{1}{2}\varphi [-\nabla^{2}+m_{0}^{2}]\varphi + \frac{\lambda_{0}}{4!}\varphi^{4} \right),
\end{equation}
where $\lambda_{0}$ and $m_{0}$ are the bare coupling and mass, respectively. The field $\varphi$ can in this case indeed be
scaled to take the inverse of $\alpha=\lambda_{0}^{2}$ to the front of the integral as in (\ref{action}), in the simplest form, one has
\begin{equation}\label{action2}
 S(\varphi) = \frac{1}{\lambda_{0}} \int d^{4}x \left( \frac{1}{2} \lambda_{0}^{1/2}\varphi [-\nabla^{2}+m_{0}^{2}] \lambda_{0}^{1/2} \varphi 
 + \frac{1}{4!}(\lambda_{0}^{1/2}\varphi)^{4} \right) =
 \frac{1}{\lambda_{0}} \bar{S}\left(\lambda_{0}^{1/2} \varphi \right) = \frac{1}{\lambda_{0}} \bar{S}(\phi), 
\end{equation}
where the new field is $\phi := \lambda_{0}^{1/2} \varphi$ and $\bar{S}$ has no explicit coupling dependence. 
However, when we renormalise the theory, this procedure is no longer possible. The renormalised version of (\ref{bare}) is given by 
\begin{equation}
 S_{R}(\varphi,\lambda)=  \underbrace{\int d^{4}x \left(\frac{1}{2}\varphi [-\nabla^{2}+m^{2}]\varphi \right)}_{=:S_{0}(\varphi)} 
 + S_{\text{int}}(\varphi,\lambda),
\end{equation}
with $S_{0}(\varphi)$ being the free part, $\lambda$ the physical coupling and 
\begin{equation}
 S_{\text{int}}(\varphi,\lambda) = \int d^{4}x \left( \frac{1}{2}[Z(\lambda)-1](\nabla \varphi)^{2} 
 + \frac{1}{2}m^{2}[Z_{m}(\lambda)-1]\varphi^{2} + \frac{\lambda}{4!}Z_{c}(\lambda)\varphi^{4} \right)
\end{equation}
the interaction part for the renormalised field $\varphi = Z(\lambda)^{-1/2}\phi$, sporting all necessary renormalisation Z factors: 
$Z_{m}(\lambda)$ for the mass and $Z_{c}(\lambda)$ for the coupling. The task is now to compute the partition function
\begin{equation}
 \mathcal{Z}(J,\lambda) = \int \De \varphi \ e^{-S_{R}(\varphi,\lambda)+\int J \cdot \varphi} 
\end{equation}
with external source field $J$. All quantities including the anomalous dimension are extracted from this function.   
The anomalous dimension is usually defined via the wavefunction renormalisation constant $Z(\lambda)$ which is assumed 
to exists as a nonperturbative quantity (for finite cut-off). Alas, the 3 renormalisation Z factors are in principle only known in terms of their  
divergent perturbation series, ie even for a finite cut-off are the 3 series
\begin{equation}
 Z(\lambda)=\sum_{l\geq 0}a_{l}\lambda^{l} \ , \hs{1} Z_{m}(\lambda)=\sum_{l\geq 0}b_{l}\lambda^{l}, \hs{1} Z_{c}(\lambda) =\sum_{l\geq 0}c_{l}\lambda^{l} 
\end{equation}
badly divergent, where $a_{0}=b_{0}=c_{0}=1$ (we have suppressed both cut-off and scale dependence). Never mind, let us study the critical points, that is, the solutions of
\begin{equation}
 \frac{\delta S_{R}(\varphi,\lambda)}{\delta \varphi(x)} = \left[- Z(\lambda) \nabla^{2} + m^{2} Z_{m}(\lambda) \right] \varphi(x) 
 + \frac{\lambda}{3!}Z_{c}(\lambda)\varphi(x)^{3} = 0
\end{equation}
for vanishing external field $J=0$. We see that even the constant solution $\overline{\varphi} \neq 0$ given by 
\begin{equation}
 \overline{\varphi} = \pm i  \sqrt{\frac{3!m^{2}Z_{m}(\lambda)}{\lambda Z_{c}(\lambda)}}
\end{equation}
has a critical-point action with highly nontrivial coupling dependence, 
\begin{equation}\label{sadact}
S_{R}(\overline{\varphi},\lambda) = -  \frac{3m^{4}Z_{m}(\lambda)^{2}}{2\lambda Z_{c}(\lambda)},
\end{equation}
in particular, nothing proportional to $\lambda^{-1}$ as encountered in (\ref{action2}). At the very least, this tells us that we should not expect 
(\ref{transa}) to be a transseries appropriate for renormalised QFTs, where $z=\lambda^{-1}$ as explained above. Our stance is that we take the canonical
formalism of renormalisation very seriously because it leads to valid results and harbours a combinatorial truth.

To carry the discussion on renormalisation a bit further, let us briefly recount how perturbation theory proceeds from here.
Yet we will not use functional derivatives with respect to the external source field $J$ as this would only obscure our point\footnote{We also do not need 
Wick's theorem to make our point.}. In a first step one expands the interaction part of the renormalised action 
\begin{equation}\label{expan}
 S_{\text{int}}(\varphi,\lambda) = \sum_{l\geq 1} S_{l}(\varphi) \lambda^{l} := \sum_{l\geq 1} \int d^{4}x \left( \frac{a_{l}}{2}(\nabla \varphi)^{2} 
 + \frac{b_{l}}{2} m^{2}\varphi^{2} + \frac{c_{l-1}}{4!}\varphi^{4} \right) \lambda^{l} 
\end{equation}
and in a second step plugs it back into the exponential of the Euclidean damping factor of the path integral so that one gets
\begin{equation}\label{expans}
 \mathcal{Z}(0,\lambda) = \int \De \varphi \ e^{-S_{0}(\varphi)} e^{-\sum_{l\geq 1} S_{l}(\varphi) \lambda^{l}}.
\end{equation}
Notice what form this object has: it is the infinite sum of exponentials with a formal power series upstairs! This smacks of an integral 
(ie sum) over transseries of height 1. It is not a legal transseries, though: the power series in the second exponential is meant to be asymptotic with 
respect to the limit $\lambda \rightarrow 0$ (ie $\lambda^{-1} = z \rightarrow \infty$) and is therefore not purely large. 
However, the story goes on with expanding the second exponential, 
\begin{equation}\label{Z}
\mathcal{Z}(0,\lambda) = \sum_{N=0}^{\infty} \frac{1}{N!} \int \De \varphi \ e^{-S_{0}(\varphi)}
\left(-\sum_{l\geq 1} S_{l}(\varphi) \lambda^{l} \right)^{N}
\end{equation}
which is now a sum over powers of legal (yet divergent) transseries, subjected to an integration procedure that ignores the coupling $\lambda$. 
Of course, this is a naive picture, considering that we are dealing with a functional integral (according to the canonical narrative). 

There is not much to be found in the literature on the complications brought about by the need to renormalise a theory. Stingl acknowledges 
the drastic change when the coupling is replaced by a 'divergent coupling renormalisation' in \cite{Sti02} but says that ''... we encounter no changes 
in the form of the corresponding Lagrangians, ...'' (ibidem p.33). This, in our mind, leads to a wrong kind of thinking about the 
issue\footnote{In \cite{Sti02}, Stingl has a lot of interesting things to say on resurgence in QFT which we do not discuss here. In his 
conclusion section on p.126, he essentially says that using the 'trademark' $\beta(\alpha) \partial_{\alpha}$ ''... would be much more after the heart of 
the resurgence theorist, ...''. That is what we are doing in this work.}: as a matter of fact, perturbation theory never treats a renormalised Lagrangian as if it were of the same 'form' but instead singles out the dramatically 
changed interaction part (\ref{expan}) and lets it take part in a wild game of divergence cancellations, albeit Hopf algebra governed \cite{Krei02}! 

In Chapter 40 of his monumental monograph \cite{Zi02}, Zinn-Justin briefly discusses instanton contributions for a renormalised scalar theory and how to deal with 
the counterterm power series of (\ref{expan}) in the path integral (\ref{expans}). In terms of (\ref{expan}), his argument, as we understand it, goes as 
follows. First write the renormalised action as 
\begin{equation}
S(\varphi) = \frac{1}{\lambda} S_{0}\left(\lambda^{1/2} \varphi \right) + S_{1}\left(\lambda^{1/2} \varphi \right) 
+ S_{2}\left(\lambda^{1/2} \varphi \right) \lambda + S_{3}\left(\lambda^{1/2} \varphi \right) \lambda^{2} + \dotsc , 
\end{equation}
then compute the effective action including the counterterms order by order in $\lambda$. To our mind, this perturbative procedure generates exponentials 
with nontrivial coupling dependence, an aspect certainly worth being investigated further. 

Another pertinent find that we prefer to only mention for completeness is \cite{We79} which has a result on instantons in effective Yang-Mills theories with a trace anomaly. The author shows that if 
the energy-momentum tensor has a trace anomaly, for example incurred by renormalisation as described in \cite{CoDuJo77}, then there are no instanton 
solutions for which any effective action would be finite. 

\subsection{Exponential size of Borel transforms}\label{expsi}
Finally, we would like to express our suspicion that renormalisation is very likely to increase 
the \emph{exponential size} of the Borel transforms. 
This means the following. Let $p$ be a path in $\C$ from the origin to infinity and $\Nei_{p}$ a neighbourhood of $p$. 
A function $f \colon \Nei_{p} \rightarrow \C$ is then said to be \emph{of exponential size} (or order) $m \geq 0$ if $m$ is the smallest real number
for which there are constants $c,A >0$ such that 
\begin{equation}
|f(\zeta)| \leq A e^{c|\zeta|^{m}} \hs{2} \forall \zeta \in \Nei_{p}.
\end{equation}
This is a slightly generalised version of the usual definition, as given in \cite{Ba00}, much better suited for resurgence\footnote{Balser demands this growth bound 
be fulfilled in a sector (see \cite{Ba00}, pp.232), whereas resurgence theory seeks endlessly continuable germs along paths that steer around the 
singularities of an arbitrary set $\Omega$ of isolated singularities \cite{Sa14}. }. If $m>1$, then condition (iii) 
in §\ref{subsec:Bor}, which guarantees the existence of the Borel-Laplace transform in (\ref{LaBo}), is no longer satisfied. As mentioned in 
§\ref{subsec:Bor}, this requires one to consider the theory of multisummable series \cite{Ba00, Ba09}. 

The reason why we suspect that (\ref{Z}) may produce something whose Borel transform might be of higher exponential size is rather simple: we know that
in dimension $d=2$, $\varphi^{4}$ theory is superrenormalisable and (\ref{Z}) 'simplifies' to \cite{Sal99}
\begin{equation}\label{Z'}
\mathcal{Z}_{0}(0,\lambda) = \sum_{N=0}^{\infty} \frac{1}{N!} \int \De \varphi \ e^{-S_{0}(\varphi)}
\left(-S_{1}(\varphi) \lambda \right)^{N}
\end{equation}
but already produces a series of type Gevrey 1 \cite{Ja65,GuZi90} which, on account of its Borel summability \cite{EMaS75}, 
has a Borel transform of exponential size at most $m=1$. 

In contrast to (\ref{Z'}), the series of Gaussian averages in (\ref{Z}) contains asymptotic power series for each $N \geq 1$. In the light of this, 
although we are aware that subtle cancellations intended by renormalisation take place, it strikes us like a rather strong statement to say that (\ref{Z}) produces a formal power series
no more severe than (\ref{Z'}) does.

\section{Transseries ansatz}\label{sec:Transatz}					  
In the light of these deliberations, we have to reckon with severe problems in finding the right transseries ansatz in a renormalised theory 
because a transseries of the form (\ref{pertexT}) cannot be expected to be the answer. However, because our thoughts in §\ref{sec:RenorTra} do not prove 
anything, we have deemed it worthwhile to investigate the nonperturbative equations deduced in §\ref{sec:DySch} for the anomalous dimension $\wg$ with a 
fairly generic state-of-the-art transseries ansatz, to be described in what follows next. 

\subsection{Transseries ansatz}
We shall use a transseries ansatz in which the two transmonomials 
\begin{equation}
 \mo_{1} = z^{-c} e^{-P(z)}  ,  \hs{1} \mo_{2}=z^{-1}
\end{equation}
are the basic building blocks; $c \in \R$ is a fixed constant and $P(z)=\sum_{j=1}^{m}b_{j}z^{j}$ a polynomial in which $b_{m} >0$, making for purely
large transseries of height 0. We shall sometimes refer to $P(z)$ as \emph{instanton polynomial}. In fact, because the action may at saddle points have 
a nontrivial coupling dependence as suggested by (\ref{sadact}), it may very well itself be a transseries of height 1. But because this is 
at this stage far beyond the scope of our investigations, we will stick to the generator set $\mo = \{ \mo_{1}, \mo_{2} \} \subset \Mo_{1}$ and 
assume $\J^{\mo}_{(0,0)}$ to be the supportive grid for the anomalous dimension $\wg$. 

We shall now drop $\mo$ from our notation and write grids as $\J_{k}:=\J^{\mo}_{k}$ for $k \in \Z^{2}$ and the subalgebra of transseries supported by 
such grids as
\begin{equation}
 \Tu := \{ \ T \in \Tlm{\C}{\Mo_{1}} \ \, |\, \ \exists k \in \Z^{2} \ \mbox{\small such that} \ \supp(T) \subseteq \J_{k} \ \} 
 = \Tlm{\C}{\mo^{l} : l \in \Z^{2}} .
\end{equation}
Our transseries ansatz for the anomalous dimension is given by
\begin{equation}\label{transan}
 \wt{\gamma}(z) = \sum_{l \geq (0,0)} \wt{\gamma}_{(l_{1},l_{2})} z^{- l_{1}c} e^{-l_{1} P(z)} z^{-l_{2}} 
 \hs{0.5} (\mbox{transseries ansatz for anomalous dimension}),
\end{equation}
where $\wt{\gamma}_{(l_{1},l_{2})} \in \C$ are the coefficients and for the summation, we make use of the partial ordering on the double-index 
set $\Z^{2}$. We set $\wg_{(0,0)}=0$ because perturbative QFT informs us that the perturbative series of the anomalous dimension $\wg$ has no constant term. 
We deem $\J_{(0,0)}$ to be a reasonable (albeit tentative) choice of \emph{supportive grid} for $\wg$. The supportive subgrid containing the support 
of $\wg$ is depicted in \textsc{Figure} \ref{support}.

\begin{figure}[ht]
\begin{center} \includegraphics[height=4.5cm]{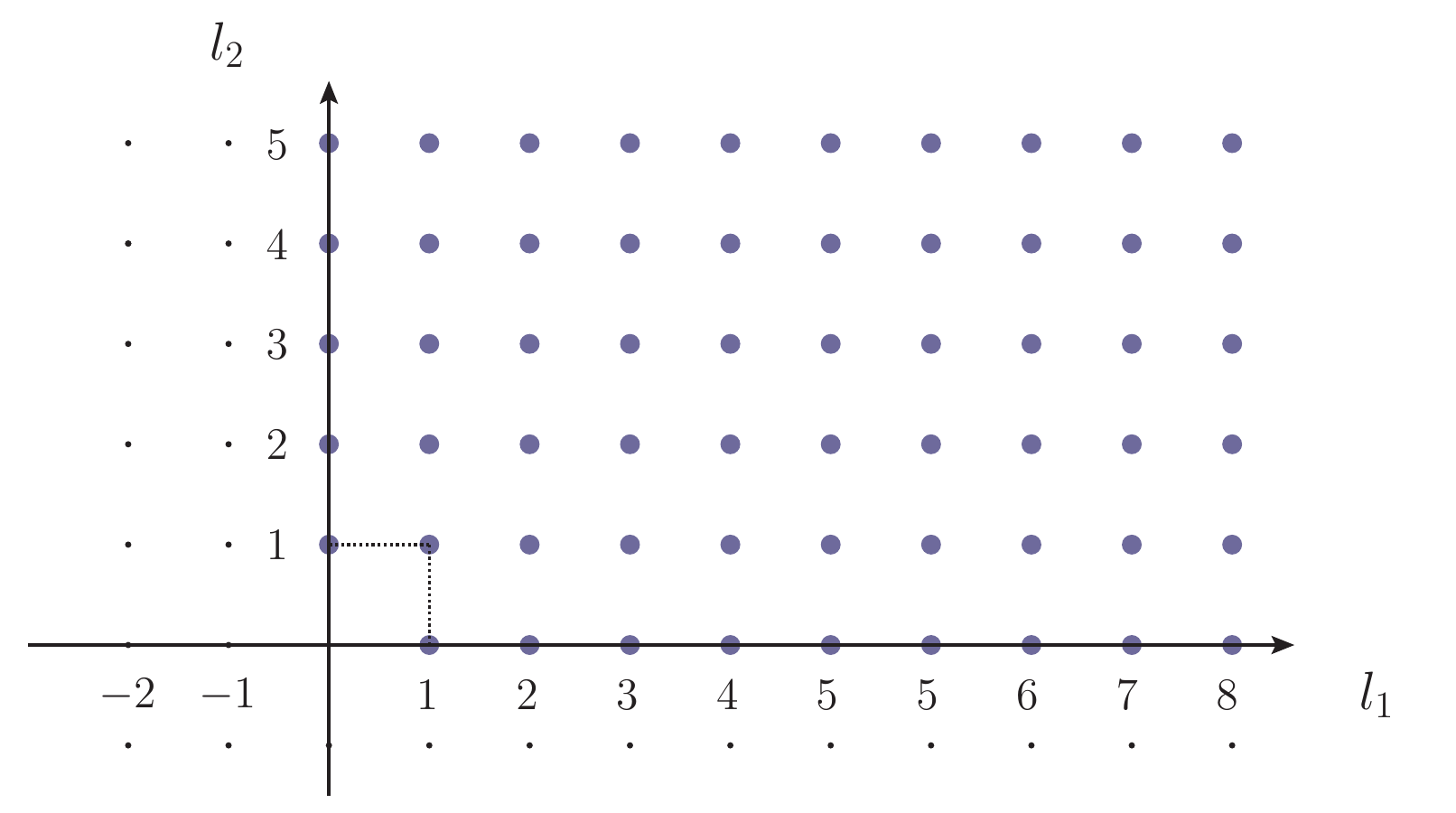} \end{center} 
\caption{\small Assumed supportive subgrid of the anomalous dimension's transseries $\wg$ as explored in this work (big dots), ie $\supp(\wg)$ is assumed to be 
a subgrid of this set.}
\label{support}
\end{figure}

To have a more precise characterisation of the supportive grids used in this work, we introduce for $n \geq 1$ the subgrids of 'perturbative'
transmonomials $\Pe_{n} := \{ \mo^{(0,s)} = \mo_{2}^{s} : s \geq n \}$ and write 
\begin{equation}
\supp(\wg) \subseteq \Pe_{1} \cup \J_{(1,0)} \subset \J_{(0,0)}.
\end{equation}
If we imagine adding polynomials of logarithms, then the support would simply be enlarged by a finite number of transmonomials for each sector in the direction  of 
a third dimension, ie
\begin{equation}
 \{ \mo_{1}^{\sigma} \} \times \{ \mo_{2}^{s} : s \in \Z \} \times \{ (\log z)^{r} : 0 \leq r \leq n_{\sigma} \}
\end{equation}
where $n_{\sigma} \in \N$ may be different for every sector $\sigma$, in particular $n_{0}=0$. This can be visualised in \textsc{Figure} \ref{support} 
by thinking of the additional logarithmic transmonomials as a finite number of dots behind (or above) every dot extending into (or out of) the page.   

\subsection{Sectors}
As is customary in physics, we distinguish between the different \emph{sectors} of a transseries. Let us define them properly now. They are given as 
subseries for a fixed first index. Let $f \in \Tu$ be a transseries with $\supp(f) \subset \J_{k}$, then we write
\begin{equation}
 f = \sum_{\sigma \geq k_{1}} \mo_{1}^{\sigma} \sum_{t \geq k_{2}} f_{(\sigma,t)} \mo_{2}^{t} 
 =: \sum_{\sigma \geq k_{1}} \mo_{1}^{\sigma} \W_{\sigma}[f] = \sum_{\sigma \geq k_{1}} z^{- \sigma c} e^{- \sigma P(z)} \W_{\sigma}[f],  
\end{equation}
where the sectors are given by the subseries $\mo_{1}^{\sigma}\W_{\sigma}[f] \in \mo_{1}^{\sigma}\Tlm{\C}{\mo_{2}^{s} : s \in \Z}$. We will at times 
also write $\W_{\sigma}f$ as a measure against notational overload. The terminology is as follows:
\begin{itemize}
 \item for $\sigma=0$, $\W_{0}[f]$ is referred to as \emph{perturbative sector} of $f$, while
 \item in the case $\sigma \neq 0$ the subseries $\mo_{1}^{\sigma} \W_{\sigma}[f] =  z^{-\sigma c} e^{- \sigma P(z)} \W_{\sigma}[f]$ is called 
      \emph{$\sigma$-th instanton sector}, or \emph{$\sigma$-th nonperturbative sector} of $f$.
\end{itemize}
The various sectors are represented by the discrete vertical lines $\{ \sigma \} \times \Z \subset \Z^{2}$ for $\sigma \in \N$ in \textsc{Figure} \ref{support}: 
they carry the perturbative $\sigma=0$ and the nonperturbative sectors $\sigma > 0$. We write
\begin{equation}
 f = \sum_{l \geq k} f_{l}\mo^{l}
\end{equation}
if the transseries $f \in \Tu$ is supported by the grid $\J_{k}$, ie $\supp(f) \subseteq \J_{k}$.

Note that multiplying two transseries may in general alter the support, a trivial fact that will be of importance in the following sections 
§\ref{sec:DynSys},\ref{sec:DSE}: if we consider the product of two transseries $f,g \in \Tu$ with $\supp(f) \subseteq \J_{k'}$ and 
$\supp(g) \subseteq \J_{k''}$, 
\begin{equation}
 f \cdot g = (\sum_{l' \geq k'}f_{l'} \mo^{l'}) \cdot (\sum_{l'' \geq k''}g_{l''} \mo^{l''}) 
 = \sum_{l \geq k'+ k''} (\sum_{l' + l'' =l} f_{l'}g_{l''}) \mo^{l}, 
\end{equation}
we see that $\supp(f \cdot g)=\supp(f)\cdot \supp(g) \subseteq \J_{k'} \cdot \J_{k''} = \J_{k'+k''}$, ie every element of one grid is multiplied by every 
of the other.

\section{RG recursion as a discrete dynamical system}\label{sec:DynSys}			  
We will in this section study the RG recursion (\ref{RGrec}) in the algebra of transseries $\Tu$ and make extensive use of the terminology introduced in 
§\ref{sec:ReTra} and the previous section. It is in our mind convenient to view the RG recursion as a \emph{discrete dynamical system} which is why we will also bring 
in a little jargon from that side\footnote{Readers should be familiar with it.}.    

\subsection{Discrete dynamical system}
If we denote the coupling by $x$, then we recall from §\ref{sec:DySch} that the RG functions are produced by repeatedly applying the RG 
operator $\Rn$ to the anomalous dimension $\gamma(x)$,
\begin{equation}
 \gamma_{n+1}(x) = \Rn^{n} \gamma(x) = [\beta(x)\partial_{x} - \gamma(x)]^{n} \gamma(x).
\end{equation}
For the two cases we investigate, the beta function is given by \cite{Y11}
\begin{equation}
\mbox{Yukawa model:} \hs{0.5} \beta(a) = 2 a \gamma(a) , \hs{2} \mbox{QED:} \hs{0.5} \beta(\alpha) = \alpha \gamma(\alpha) .
\end{equation}
We will therefore write $\beta(x)= s x \gamma (x)$ and $\Rn = \gamma(x)[s x \partial_{x} - 1]$ which is the generic form capturing both cases. We do not
find it necessary to use extra notation for the two models, nowhere will there be potential for confusion: either the coupling's symbol or the context 
will make it clear. 

However, in the transseries setting with variable $z=x^{-1}$, we write the RG operator as   
\begin{equation}
\Rn = \wt{\gamma}(z)[s \der - 1]
\end{equation}
with $\der:=-z \partial_{z}$ on account of $x\partial_{x} = -z \partial_{z}$. 

We think of this differential operator as a derivation $\der \colon \Tu \rightarrow \Tu$ on the algebra $\Tu$,
which acts on the transmonomials according to
\begin{equation}\label{tra}
\der(\mo^{l}) = \der \left( z^{- l_{1}c} e^{-l_{1} P(z)} z^{-l_{2}} \right) 
= (cl_{1} + l_{2}) \mo_{1}^{l_{1}}\mo_{2}^{l_{2}} + l_{1} \sum_{i=1}^{m} i b_{i} \mo_{1}^{l_{1}}\mo_{2}^{l_{2}-i}
\end{equation}
and thus takes a transseries $f= \sum_{l \geq  (s,t)}f_{l} \mo^{l}$ supported by the grid $\J_{(s,t)}$ and maps it to 
\begin{equation}\label{deri}
 \der f = \sum_{l \geq  (s,t-m)} (\der f)|_{l} \mo^{l} 
 := \sum_{l \geq  (s,t-m)} \left[ (l_{1}c +l_{2}) f_{l} + l_{1} \Sum_{i=1}^{m} i b_{i} f_{(l_{1},l_{2}+i)} \right] \mo^{l}.
\end{equation}
This shows that $\der$ may change the support: (\ref{deri}) tells us that $\supp(\der f) \subseteq \J_{(s,t-m)}$, where, of course, 
$\J_{(s,t)} \subseteq \J_{(s,t-m)}$ which implies that the support may grow.

The RG recursion defines a discrete dynamical system with flow map given by  
\begin{equation}
\Phi \colon \N \times \Tu \rightarrow \Tu \hs{2}  \Phi(n,f) := \Rn^{n}(f)  \hs{2} (n \geq  0),
\end{equation}
where $f \in \Tu$ is the starting point. Of course, the only orbit of interest to us is the very orbit we get if we let the flow start at the 
anomalous dimension $\wg$, ie $\Phi(\N,\wg) = \{ \Phi(n,\wg) : n \in \N \}$, which is also part of the operator $\Rn=\wt{\gamma}(z)[s \der - 1]$.  
The DSEs for the anomalous dimension derived in §\ref{sec:DySch} appear in this context as a condition imposed on the orbit.

In order to compute how the RG recursion operator $\Rn$ changes the supportive grid, we use the (straightforward) rules 
\begin{equation}\label{rules}
\Pe_{n} \cdot \Pe_{m} = \Pe_{n+m}, \hs{1} \Pe_{n} \cdot \J_{(k_{1},k_{2})} = \J_{(k_{1},k_{2}+n)} ,\hs{1}  \J_{k} \cdot \J_{k'} = \J_{k+k'}.
\end{equation}
We first note that
\begin{equation}\label{suppf}
 \supp (\der f) \subseteq \Pe_{v} \cup \J_{(s,t-m)}
\end{equation}
for $f \in \Tu$ with $\supp(f) \subseteq \Pe_{v} \cup \J_{(s,t)}$. Consequently with $\supp(\wg) \subseteq \Pe_{1} \cup \J_{(1,t)}$ the first RG step 
yields 
\begin{equation}\label{suppfa}
\begin{split}
 \supp(\Rn \wg) &= \supp(\wg) \cdot \supp([s\der-1] \wg) \subseteq \left(\Pe_{1} \cup \J_{(1,t)} \right) \cdot \left(\Pe_{1} \cup \J_{(1,t-m)}\right) \\
 &= \Pe_{2} \cup \J_{(1,t+1-m)} \cup \J_{(1,t+1)} \cup \J_{(2,2t-m)} = \Pe_{2} \cup \J_{(1,t+1-m)} \cup \J_{(2,2t-m)}.
\end{split}
\end{equation}
For $t=0$ and $m=1$ (the degree of the instanton polynomial), this is
\begin{equation}
 \supp (\wg_{2}) = \supp (\Rn \wg) \subseteq \Pe_{2} \cup \J_{(1,0)} \cup \J_{(2,-1)}
\end{equation}
and, along the same lines, one gets
\begin{equation}
 \supp (\wg_{3}) = \supp (\Rn \wg_{2}) \subseteq \Pe_{3} \cup \J_{(1,0)} \cup \J_{(2,-1)} \cup \J_{(3,-2)}
\end{equation}
in the second step. This creates a staircase pattern for the lower boundary of the support, as \textsc{Figure} \ref{Rsupport} shows. 

\begin{figure}[ht]
\begin{center} \includegraphics[height=4cm]{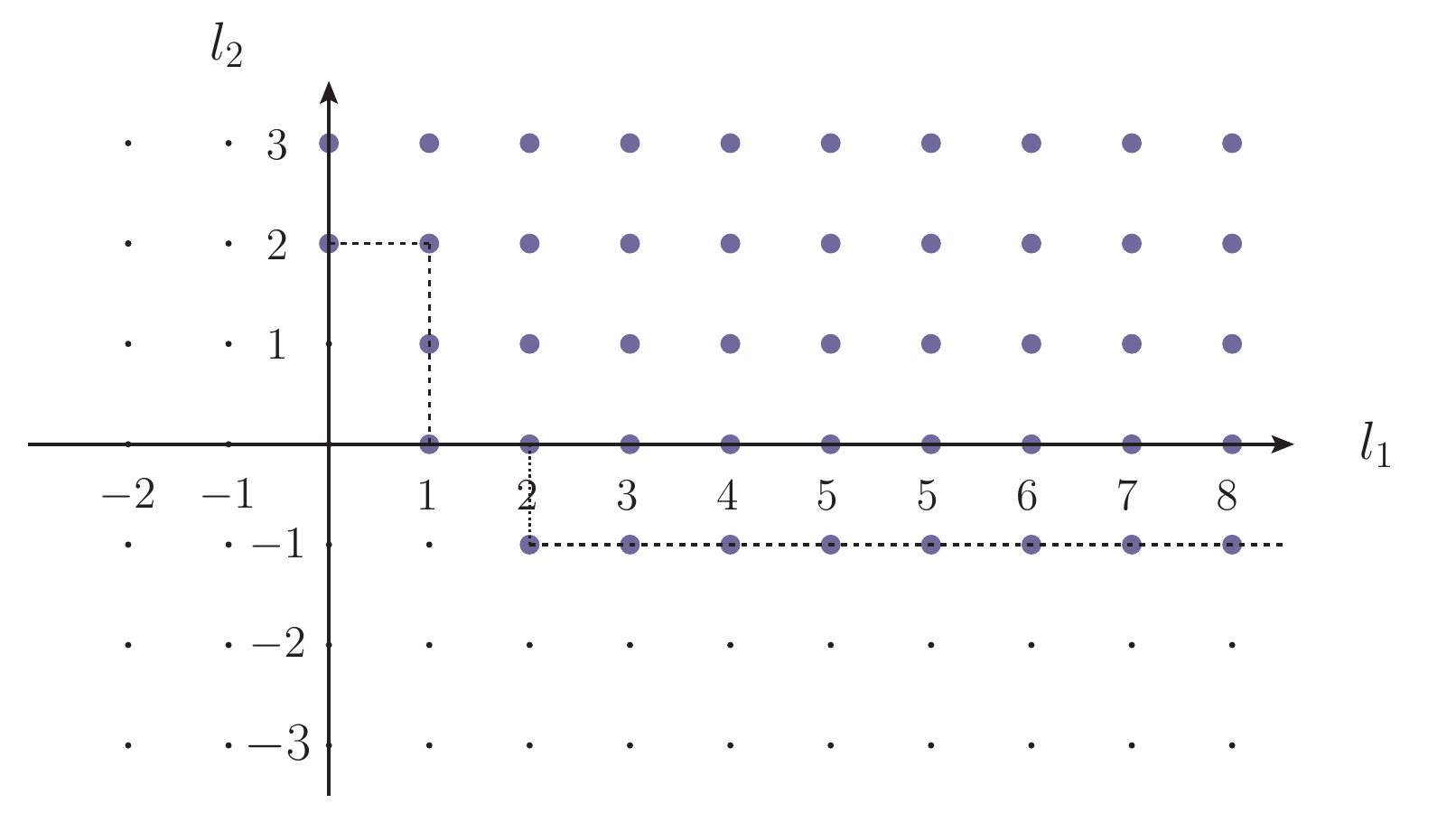} \raisebox{2cm}{$\stackrel{\Rn}{\longrightarrow}$} \includegraphics[height=4.8cm]{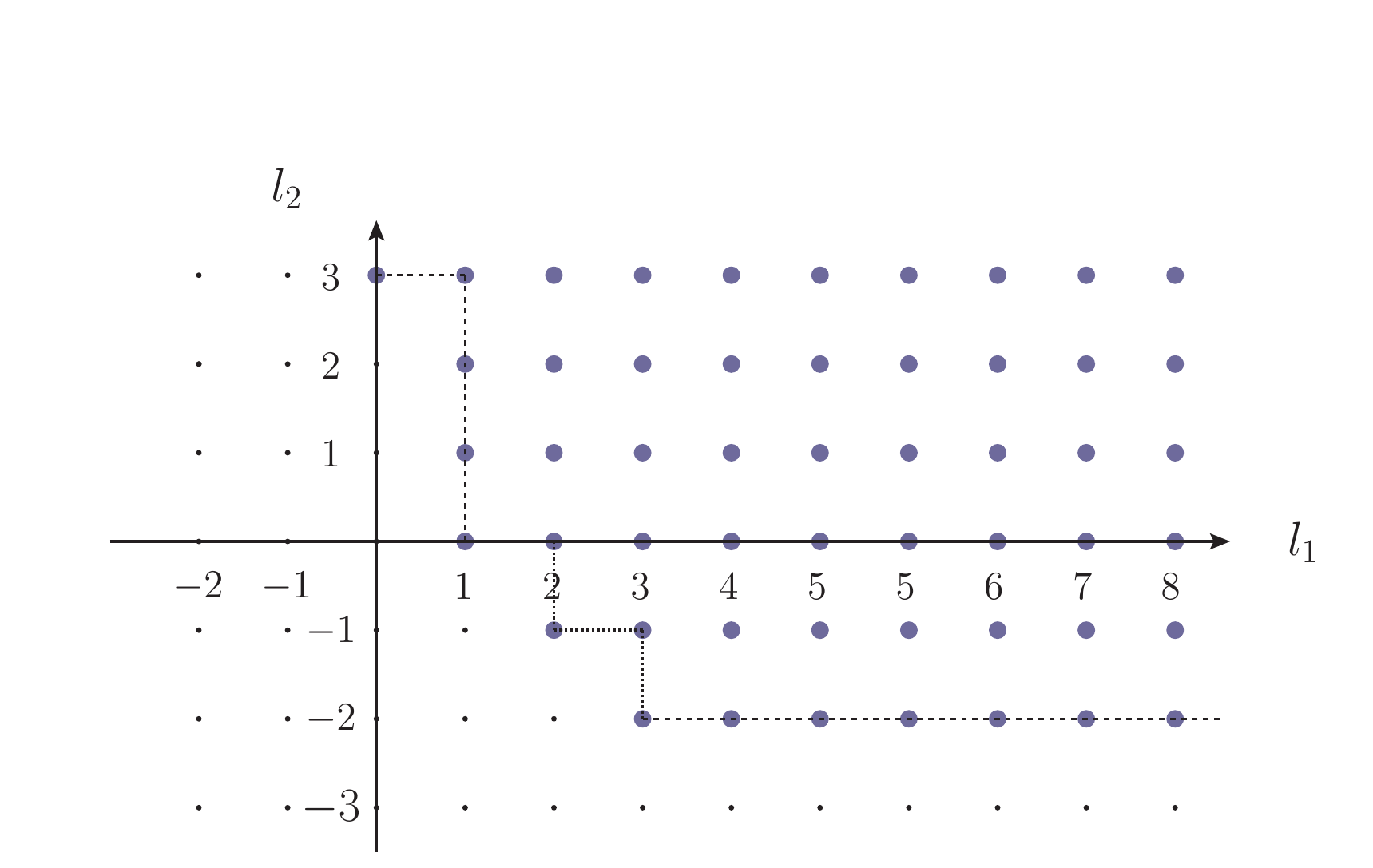} 
\end{center} 
\caption{\small An example of how the RG recursion changes the support along the orbit of the discrete dynamical system for $\deg P(z)=1$. 
The diagrams show how an additional step is produced when $\Rn$ acts on $\wg_{2}$ (left) to arrive at $\wg_{3}$ (right). }
\label{Rsupport}
\end{figure}

As we will see in more detail next, the depth of the individual steps crucially depends on the degree $m$ of the instanton polynomial $P(z)$ which sits
in the exponent of the nonperturbative transmonomial $\mo_{1}$ and changes the support when it comes down upon differentiation. 

\subsection{RG recursion limit}
We will now address the question whether the sequence of transseries $(\wg_{n})_{n \geq 1} \subseteq \Tu $ converges in some sense. In fact, we will 
see that the support of $\wg$ is crucial. Before we study this issue, we review the pertinent notion of convergence from \cite{Ed09}. 

\begin{Definition}\label{Co}
Let $(T_{i})_{i \in \N}$ be a family of transseries in $\Tu$. 
We say of the family $(T_{i})$ that it converges to $0$ and write $T_{i} \rightarrow 0$, if 
\begin{itemize}
 \item [(i)] there exists a fixed grid $\J_{k}$ such that $\suppi(T_{i}) \subseteq \J_{k}$ for all $i \in \N$ and
 \item [(ii)] the sequence $(\suppi(T_{i}))_{i \in \N}$ of supports is point-finite, which means that for any transmonomial $\g$, one 
              has $\g \in \suppi(T_{i})$ for at most a finite number of $i \in \N$.
\end{itemize}
We say $(T_{i})$ converges to $T \in \Tu$ and write $T_{i} \rightarrow T$ if $T_{i} - T \rightarrow 0$. 
\end{Definition}

The convergence $T_{i} \rightarrow 0$ means $\supp(T_{i}) \rightarrow \emptyset$, ie the transseries within the family will eventually be depleted of 
their coefficients as the index $i$ grows and condition (i) ensures that one has some control over the supports, otherwise pathological behaviour 
might occur \cite{Ed09}.  

The next assertion tells us what condition we must impose on the support of $\wg$ if we want the RG recursion to stand a chance of 
converging.

\begin{Proposition}\label{Psupp}
The subspace $\X_{t}=\{ f \in \Tu \, | \, \suppi (f) \subseteq \Pe_{1} \cup \J_{(1,t)} \}$ is an $\Rn$-stable subalgebra if and only if 
$t \geq \deg P(z) =1$. In general, for $f \in \X_{t}$, one has 
\begin{equation}\label{su}
 \suppi (\Rn^{n}f) \subseteq \Pe_{n+1} \cup \bigcup_{j=1}^{n+1} \J_{(j,j[t-1]+1-[m-1]n)} \ , \hs{1}  n \geq 1,
\end{equation}
where $\Rn=f(s\der-1)$ and $m$ is the degree of the instanton polynomial $P(z)$. 
\end{Proposition}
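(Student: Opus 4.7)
The plan is to dispose of the algebra-structure claim in one line, derive the base case $n=1$ of (\ref{su}) by a direct computation of $\supp(\der f)$, read off the iff criterion from the base case, and finally induct on $n$ for the general formula.

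First I would observe that closure of $\X_{t}$ under sum and product is automatic from the grid rules (\ref{rules}): one computes
\[
(\Pe_{1} \cup \J_{(1,t)}) \cdot (\Pe_{1} \cup \J_{(1,t)}) = \Pe_{2} \cup \J_{(1,t+1)} \cup \J_{(2,2t)},
\]
and each summand lies in $\Pe_{1} \cup \J_{(1,t)}$ whenever $t \geq 0$. So the algebra structure is trivial, and the real content of the proposition is $\Rn$-stability together with the support bound (\ref{su}). Next I would pin down how $\der$ acts on supports. By (\ref{tra}), $\der$ preserves the perturbative sector $\Pe_{1}$ (acting by scalars) and sends a nonperturbative Hahn monomial $\mo_{1}^{l_{1}}\mo_{2}^{l_{2}}$ into the finite set $\{\mo_{1}^{l_{1}}\mo_{2}^{l_{2}-i} : i = 0, 1, \dotsc, m\}$. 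Hence for any $f \in \X_{t}$ one has $\supp(\der f) \subseteq \Pe_{1} \cup \J_{(1,t-m)}$. Combining this with the general inclusion $\supp(\Rn f) \subseteq \supp(f) \cdot \supp([s\der - 1]f)$ and the rules (\ref{rules}) yields the base case
\[
\supp(\Rn f) \subseteq \Pe_{2} \cup \J_{(1,t-m+1)} \cup \J_{(2,2t-m)}.
\]
Reading off the iff criterion is now immediate: $\J_{(1,t-m+1)} \subseteq \J_{(1,t)}$ forces $m \leq 1$, while $\J_{(2,2t-m)} \subseteq \J_{(1,t)}$ forces $t \geq m$, so together with $m \geq 1$ one needs $m = 1$ and $t \geq 1$. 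For the converse I would exhibit the test element $f = \mo_{2} + \mo_{1} \mo_{2}^{t} \in \X_{t}$: the cross-term $\mo_{2} \cdot \der(\mo_{1} \mo_{2}^{t})$ in $s f \der f$ contributes $s m b_{m} \mo_{1} \mo_{2}^{t-m+1}$ with a non-zero coefficient, which escapes $\J_{(1,t)}$ as soon as $m > 1$, and $\mo_{1}\mo_{2}^{t} \cdot \der(\mo_{1}\mo_{2}^{t})$ produces $\mo_{1}^{2}\mo_{2}^{2t-m}$ which escapes $\J_{(1,t)}$ unless $t \geq m$.

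Finally I would prove (\ref{su}) by induction on $n$. Write $\alpha_{j} := j(t-1) + 1 - (m-1)n$, so the inductive hypothesis reads $\supp(\Rn^{n} f) \subseteq \Pe_{n+1} \cup \bigcup_{j=1}^{n+1} \J_{(j,\alpha_{j})}$. Applying $\der$ shifts each subgrid $\J_{(j,\alpha_{j})}$ to $\J_{(j,\alpha_{j}-m)}$ and fixes $\Pe_{n+1}$; hence $\supp([s\der - 1]\Rn^{n} f) \subseteq \Pe_{n+1} \cup \bigcup_{j=1}^{n+1} \J_{(j,\alpha_{j}-m)}$. Multiplying by $f$ with $\supp(f) \subseteq \Pe_{1} \cup \J_{(1,t)}$ and using (\ref{rules}) produces exactly the four families
\[
\Pe_{n+2}, \quad \J_{(j,\alpha_{j}-m+1)}, \quad \J_{(1,t+n+1)}, \quad \J_{(j+1,t+\alpha_{j}-m)}, \qquad j = 1, \dotsc, n+1.
\]
The hard part will be the arithmetic check that the second and fourth families coalesce into the single union $\bigcup_{j=1}^{n+2} \J_{(j,\beta_{j})}$ with $\beta_{j} := j(t-1) + 1 - (m-1)(n+1)$ demanded at level $n+1$: one verifies $\alpha_{j} - m + 1 = \beta_{j}$ and $t + \alpha_{j-1} - m = \beta_{j}$, so the two sources fuse into one, while the slack term $\J_{(1,t+n+1)}$ is absorbed by $\J_{(1,\beta_{1})}$ (since $t+n+1 \geq \beta_{1}$ follows from $n \geq -1$). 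The heart of the proof is therefore purely combinatorial bookkeeping: each subgrid at level $n$ breeds two a priori distinct subgrids at level $n+1$ which happen to fuse under the exponent arithmetic, and beyond that step the argument is a disciplined application of (\ref{tra}) and the product rules (\ref{rules}).
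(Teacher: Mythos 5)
Your proof is correct and follows essentially the same route as the paper's: the same grid arithmetic via (\ref{rules}) and (\ref{tra}) for the base case, the same index shift $j \to j-1$ in the induction step, and the same absorption of the slack term $\J_{(1,t+n+1)}$ into $\J_{(1,\beta_{1})}$. The one place you go beyond the paper is the explicit test element $f = \mo_{2} + \mo_{1}\mo_{2}^{t}$ witnessing actual failure of stability when $m>1$ or $t<m$ — the paper only infers the ``only if'' direction informally from the growth of the upper bound on the support, so this is a welcome tightening rather than a deviation.
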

\begin{proof}
We first prove (\ref{su}) by induction whose start $n=1$ has been already performed in (\ref{suppfa}). For the induction step we find
\begin{equation}\label{supp4'}
\begin{split}
 \supp &(\Rn \Rn^{n}f) \subseteq \left(\Pe_{1} \cup \J_{(1,t)} \right) \cdot \left( \Pe_{n+1} \cup \bigcup_{j=1}^{n+1} \J_{(j,j[t-1]+1-[m-1]n-m)} \right) \\
 &= \Pe_{n+2} \cup \bigcup_{j=1}^{n+1} \J_{(j,j[t-1]+1-[m-1][n+1])} \cup \J_{(1,t+n+1)} \cup \bigcup_{j=1}^{n+1} \J_{(j+1,[j+1][t-1]+1-[m-1][n+1])} .
\end{split}
\end{equation}
We perform an index shift $j \rightarrow j-1$ and use
\begin{equation}
 \J_{(j,j[t-1]+1-[m-1][n+1])}|_{j=1}  = \J_{(1,t+n+1-m[n+1])} \supseteq \J_{(1,t+n+1)}
\end{equation}
to get 
\begin{equation}
\supp (\Rn \Rn^{n}f) \subseteq \Pe_{n+2} \cup \bigcup_{j=1}^{n+2} \J_{(j,j[t-1]+1-[m-1][n+1])}.
\end{equation}
The result (\ref{su}) makes one thing very clear: only if $t \geq m=1$ does the supportive subgrid not grow. 
So we see that the subspace $\X_{t}$ is stable under $\Rn$, ie $\Rn \X_{t} \subseteq \X_{t}$ if the condition $t\geq 1=m$ is satisfied. Multiplication 
poses no problem if $t \geq 0$  since
\begin{equation}
\left(\Pe_{1} \cup \J_{(1,t)}\right) \cdot \left(\Pe_{1} \cup \J_{(1,t)}\right) = \Pe_{2} \cup \J_{(1,t+1)} \cup \J_{(1,t+1)} \cup \J_{(2,2t)} 
= \Pe_{2} \cup \J_{(1,t+1)}
\end{equation}
and multiplication would only produce larger supports and thereby lead out of $\X_{t}$ if $t<0$. 
\end{proof}

If we had logs, this result would not change: an additional transmonomial $\mo_{3}=\log z$ leads to an additional term in (\ref{tra}), ie
\begin{equation}\label{tra'}
 \der \left( \mo_{1}^{l_{1}}\mo_{2}^{l_{2}}\mo_{3}^{l_{3}} \right)  
= (cl_{1} + l_{2}) \mo_{1}^{l_{1}}\mo_{2}^{l_{2}}\mo_{3}^{l_{3}} - l_{3} \mo_{1}^{l_{1}}\mo_{2}^{l_{2}}\mo_{3}^{l_{3}-1} \\
+ l_{1} \sum_{i=1}^{m} i b_{i} \mo_{1}^{l_{1}}\mo_{2}^{l_{2}-i}\mo_{3}^{l_{3}}
\end{equation}
but does not change the first two indices and yields 
\begin{equation}\label{logs}
 (\der f)|_{(l_{1},l_{2},l_{3})} = (cl_{1}+l_{2})f_{(l_{1},l_{2},l_{3})} - (l_{3}+1)f_{(l_{1},l_{2},l_{3}+1)} 
 + l_{1} \sum_{i=1}^{m} i b_{i} f_{(l_{1},l_{2}+i,l_{3})} .
\end{equation}
for the coefficients of a transseries $f$, which is why the results of Proposition \ref{Psupp} are untouched as long as one agrees that $\J_{k}$ 
encompasses the logarithmic monomials attached to it. \\

So what lesson can we draw from this? One is certainly 
\begin{Corollary}\label{Rstable}
The RG recursion $(\wg_{n})_{n \geq 1}$ fails to satisfy condition (i) of Definition \ref{Co} unless 
$\suppi(\wg) \subseteq \Pe_{1} \cup \J_{(1,1)}=\J_{(0,1)}$
and $m= \deg P(z) = 1$ in which case one has 
\begin{equation}\label{supp3}
 \suppi(\wg_{n}) \subseteq \Pe_{n} \cup \J_{(1,1)} \subset \J_{(0,1)}
\end{equation}
for all $n \geq 1$.
\end{Corollary}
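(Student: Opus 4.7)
The statement packages the iff content of Proposition \ref{Psupp} in the special case $t=m=1$, so the plan is to deduce each direction from that proposition.

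\textbf{Sufficiency.} Assume $\supp(\wg)\subseteq\Pe_1\cup\J_{(1,1)}$ and $m=1$, so that $\wg\in\X_1$ with $t=m=1$. By Proposition \ref{Psupp}, $\X_1$ is $\Rn$-stable and formula (\ref{su}) specialises to
\[
\supp(\Rn^n\wg)\subseteq\Pe_{n+1}\cup\bigcup_{j=1}^{n+1}\J_{(j,1)}.
\]
Since $(j,1)\geq(1,1)$ in the partial order on $\Z^2$ for every $j\geq1$, each $\J_{(j,1)}$ is contained in $\J_{(1,1)}$, so the union collapses. Combined with the base case $\wg_1=\wg$ and an index shift, this yields $\supp(\wg_n)\subseteq\Pe_n\cup\J_{(1,1)}$ for all $n\geq 1$; the inclusion $\Pe_n\cup\J_{(1,1)}\subset\J_{(0,1)}$ is then immediate from the definitions.

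\textbf{Necessity.} Contrapositively, suppose either $m\geq 2$ or $\supp(\wg)\not\subseteq\Pe_1\cup\J_{(1,1)}$. Given the standing hypotheses $\supp(\wg)\subseteq\J_{(0,0)}$ and $\wg_{(0,0)}=0$, the second alternative forces a nonperturbative monomial $\mo_1^{l_1}\mo_2^{l_2}\in\supp(\wg)$ with $l_1\geq 1$ and $l_2\leq 0$. The plan is to show that, in either case, for every candidate grid $\J_k$ some $n$ exists with $\supp(\wg_n)\not\subseteq\J_k$. By (\ref{tra}) the derivative $\der$ applied to an extremal nonperturbative monomial $\mo_1^{l_1}\mo_2^{l_2}$ produces the term $l_1 m b_m\,\mo_1^{l_1}\mo_2^{l_2-m}$ with strictly positive coefficient because $b_m>0$. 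Multiplying by the same extremal monomial of $\wg$ through $\Rn=\wg(s\der-1)$ yields a nonzero contribution at $\mo_1^{2l_1}\mo_2^{2l_2-m}$. Iterating shifts the second coordinate by at least $m$ per step, eventually driving the support out of any fixed grid.

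\textbf{Main obstacle.} The delicate point is ensuring the witness monomial just identified actually survives iteration rather than being annihilated by other contributions. I would handle this by tracking, at the $n$-th step, the monomial of smallest second coordinate in the row of maximal attained first coordinate $(n+1)l_1$. By the grid multiplication rule (\ref{rules}) this extremal position can be reached in only one way, namely through the specific chain of extremal choices described above, so its coefficient is a product of the strictly positive factors $l_1 i b_i$ from (\ref{tra}) together with the nonzero complex coefficient of $\wg$ at the starting monomial; no cancellation is possible. Once this isolation is in place the escape-from-every-grid argument goes through and the corollary follows.
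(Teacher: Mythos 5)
Your sufficiency argument is exactly the paper's proof: specialise (\ref{su}) to $t=m=1$, collapse the union $\bigcup_{j}\J_{(j,1)}\subseteq\J_{(1,1)}$, and observe that everything sits inside the fixed grid $\J_{(0,1)}$. That half is fine. The paper in fact stops there and disposes of necessity by citing the ``only if'' clause of Proposition \ref{Psupp}, so your attempt to exhibit actual witness monomials that escape every grid goes beyond what the paper does; unfortunately it has a concrete gap.

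The gap is in the subcase $m=\deg P(z)\geq 2$ with $\supp(\wg)\subseteq\J_{(0,1)}$, i.e.\ every nonperturbative monomial of $\wg$ has $l_2\geq 1$. Your witness sits at $\bigl((n+1)l_1,\,(n+1)l_2-nm\bigr)$, whose second coordinate is $l_2+n(l_2-m)$: the net shift per iteration is $l_2-m$, not $-m$, because multiplying by the extremal monomial of $\wg$ adds $l_2$ back. If $l_2\geq m$ this does not tend to $-\infty$ and your witness never leaves a suitably chosen grid, so the claim that ``iterating shifts the second coordinate by at least $m$ per step'' fails in precisely the subcase that separates $m\geq 2$ from $m=1$. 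The escaping witness there must instead pair the $-m$ shift coming from $\der$ with the \emph{perturbative} leading monomial $\mo_2^{1}$ of $\wg$, which keeps the first coordinate fixed and moves the second coordinate by $1-m<0$ per step within a single nonperturbative row (this also requires $c_{(0,1)}\neq 0$, which the dynamical-system setting does not by itself guarantee). Two smaller points: ``the row of maximal attained first coordinate'' need not exist, since the nonperturbative support of $\wg$ generically occupies infinitely many rows, so the isolation should be phrased via the corner of the supportive grid (minimal $l_1$, then minimal $l_2$ in that row); and the no-cancellation claim still requires checking that the perturbative contributions, which start at $\mo_2^{1}$, cannot reach the extremal position --- true, but not a consequence of (\ref{rules}) alone.
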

\begin{proof}
This assertion is a consequence of Proposition \ref{Psupp} which guarantees stability under the RG recursion only if $t \geq m = 1$ which entails for (\ref{su})
\begin{equation}\label{su'}
\supp (\Rn^{n}\wg) \subseteq \Pe_{n+1} \cup \bigcup_{j=1}^{n+1} \J_{(j,j[t-1]+1)} \subset \Pe_{n+1} \cup \bigcup_{j=1}^{n+1} \J_{(j,1)} = \Pe_{n+1} \cup \J_{(1,1)} ,
\end{equation}
for all $n \geq 1$.
\end{proof}

So we conclude that in case the supportive subgrid is not contained in the grid $\J_{(0,1)}$, it may grow badly.
In \textsc{Figure} \ref{Rsupport} we have seen an example of this behaviour for $m=1$ and $t=0$: the RG step from $\wg_{2}$ to $\wg_{3}$ leads to 
$\supp(\wg_{1}) \subsetneq \supp(\wg_{2}) \subsetneq \supp(\wg_{3})$. 
This growth of support may preclude convergence because it has the potential to run out of control. To prevent this from happening, restrictions must be imposed on the 
support of $\wg$: 
\begin{itemize}
 \item the nonperturbative sectors must have power series without a constant term, ie the support has no dots on the $l_{1}$-axis in 
       \textsc{Figure} \ref{support},
 \item the polynomial $P(z)$ must be of degree 1. 
\end{itemize}
However, the support may be distributed over its supportive grid in a way that renders the RG recursion convergent, say by some subtle cancellations.  

Another lesson we draw from Proposition \ref{Psupp} is how the dominating transmonomial changes along the orbit. To understand this, note that 
the lower-sector transmonomials are always larger than higher-sector ones. The perturbative sector's transmonomials are therefore always the largest.
Among them, the one with the smallest power of $\mo_{2}=z^{-1}$ is the largest. 

If we focus on the nonperturbative sectors, then (\ref{su}) informs us that for $\wg_{n}$, the dominating transmonomials of all nonvanishing sectors
$\sigma >1$ must fulfil 
\begin{equation}\label{mag}
\magn ( \mo_{1}^{\sigma} \W_{\sigma}\wg_{n}) \prec \magn ( \mo_{1} \W_{1} \wg_{n}) \preccurlyeq \mo_{1}\mo_{2}^{t-(m-1)(n-1)} 
= z^{-c}e^{-P(z)} z^{(m-1)(n-1)-t}    
\end{equation}
which can be read off from the rhs of (\ref{su}). We will need this result in §\ref{sec:DSE}. 

However, up to this point of the story, and hence excluding the DSEs from the discussion, one has to concede that a growing support along the orbit is 
not problematic as the RG recursion does not have to converge. The RG functions simply are the derivatives of the self-energy with respect to the 
momentum logarithm $L_{q}$ evaluated at $L_{q}=0$, nothing more. Of course, we would wish for the momentum log expansion 
\begin{equation}
 G(x,L_{q}) = 1 - \sum_{n \geq 1} \gamma_{n}(x)L_{q}^{n}
\end{equation}
to converge as we do not desire to consider a nonperturbative completion of this expression with respect to the parameter $L_{q}$. 
However, this requirement can be met even if the RG recursion itself does not converge. But one should be aware that the limit of $\Rn^{n}(\wg)$ as 
$n \rightarrow \infty$ is not grid-based in case the support grows without bounds along the orbit of the dynamical system. 

To get a feel for how the support may grow for $t=0$ in the general case $m \geq 1$ , let us see what concrete form the expression (\ref{su}) takes 
for the first few RG functions:
\begin{equation}\label{su'}
\begin{split}
 \supp (\wg_{1}) & \subseteq \Pe_{1} \cup \J_{(1,0)} \\
 \supp (\wg_{2}) & \subseteq \Pe_{2} \cup \J_{(1,-m+1)} \cup \J_{(2,-m)}, \\ 
 \supp (\wg_{3}) & \subseteq \Pe_{3} \cup \J_{(1,-2m+2)}  \cup \J_{(2,-2m+1)} \cup \J_{(3,-2m)},  \\
 \supp (\wg_{4}) & \subseteq \Pe_{4} \cup \J_{(1,-3m+3)} \cup \J_{(2,-3m+2)} \cup \J_{(3,-3m+1)} \cup \J_{(4,-3m)} ,
\end{split}
\end{equation}
from which we see that, as already mentioned, a staircase pattern with a deep first step is created and all sectors get additional 
transmonomials. \textsc{Figure} \ref{rsupport} illustrates this situation.

\begin{figure}[ht]
\begin{center} \includegraphics[height=4.1cm]{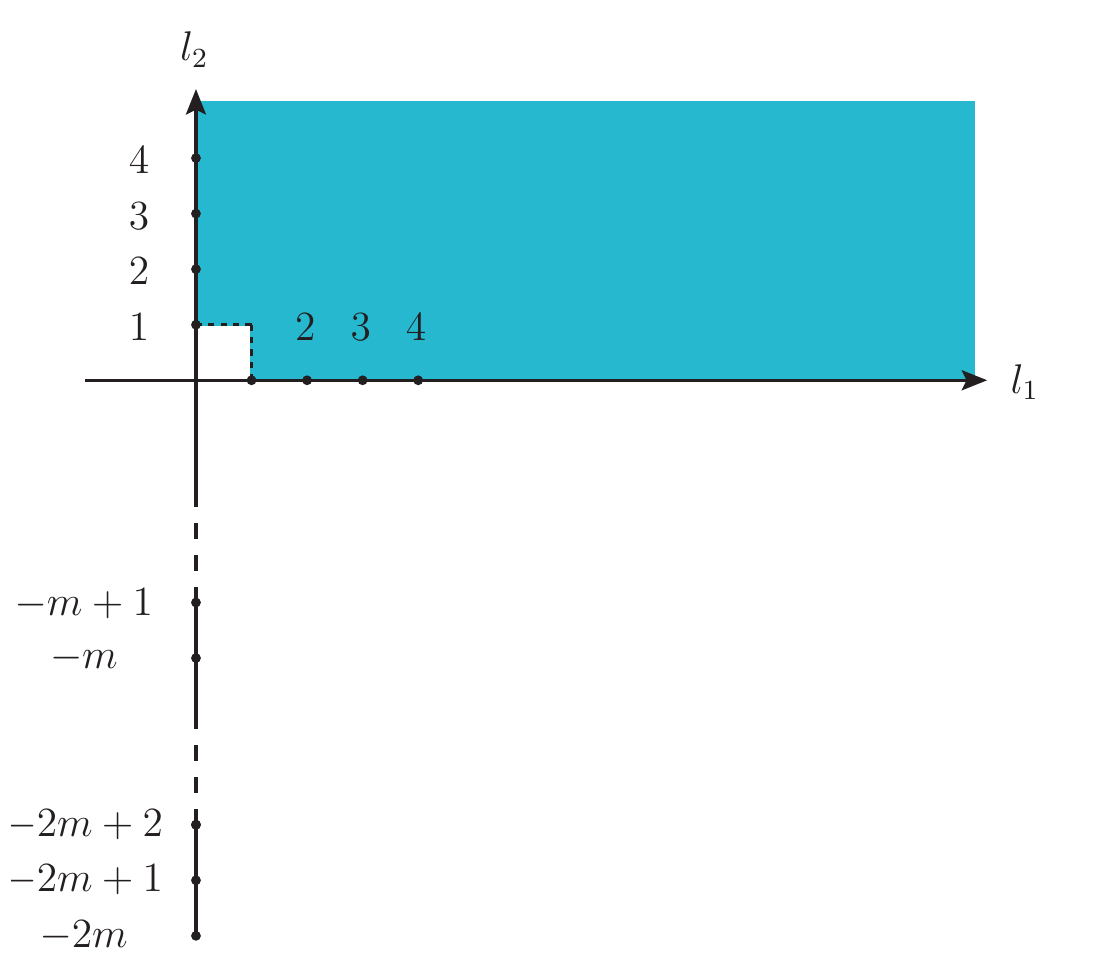} \raisebox{2cm}{$\stackrel{\Rn}{\longrightarrow}$} 
\includegraphics[height=4.1cm]{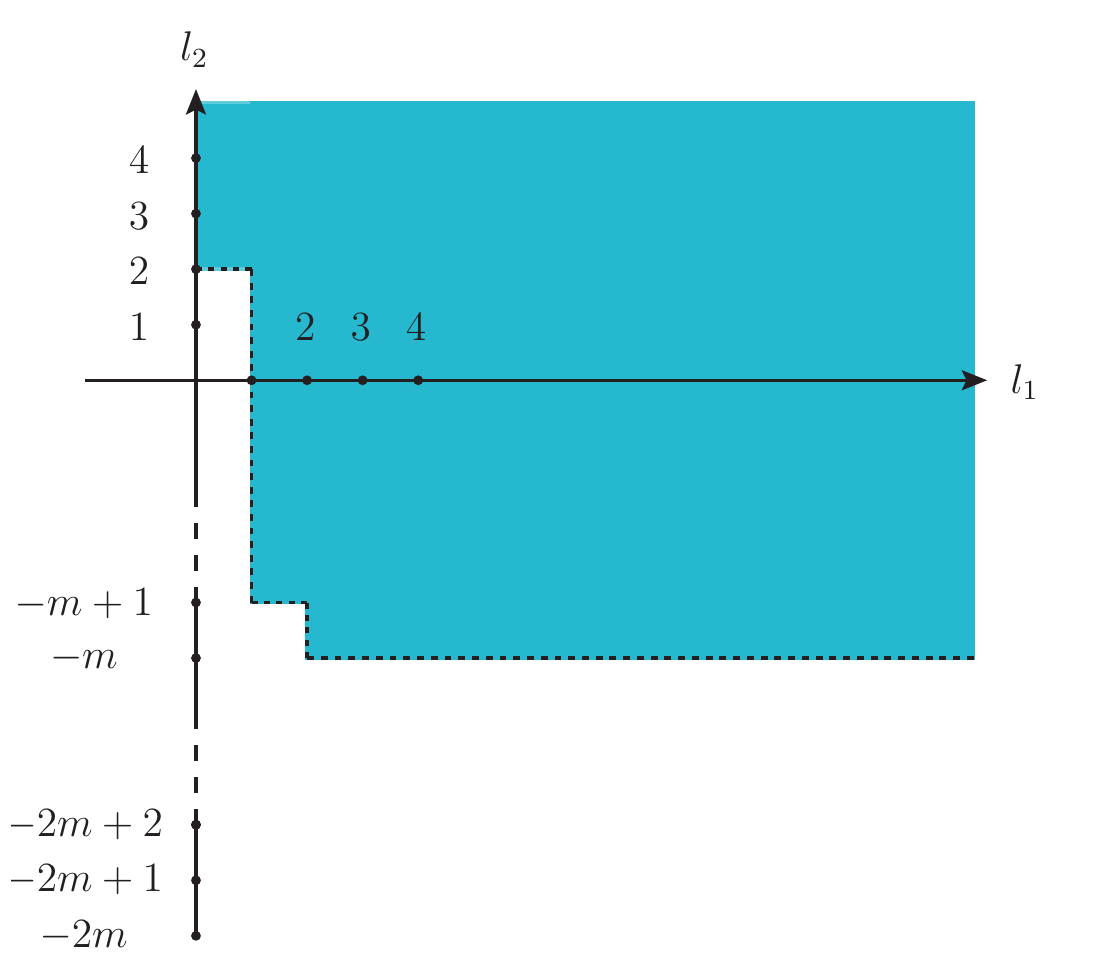} \raisebox{2cm}{$\stackrel{\Rn}{\longrightarrow}$} 
\includegraphics[height=4.1cm]{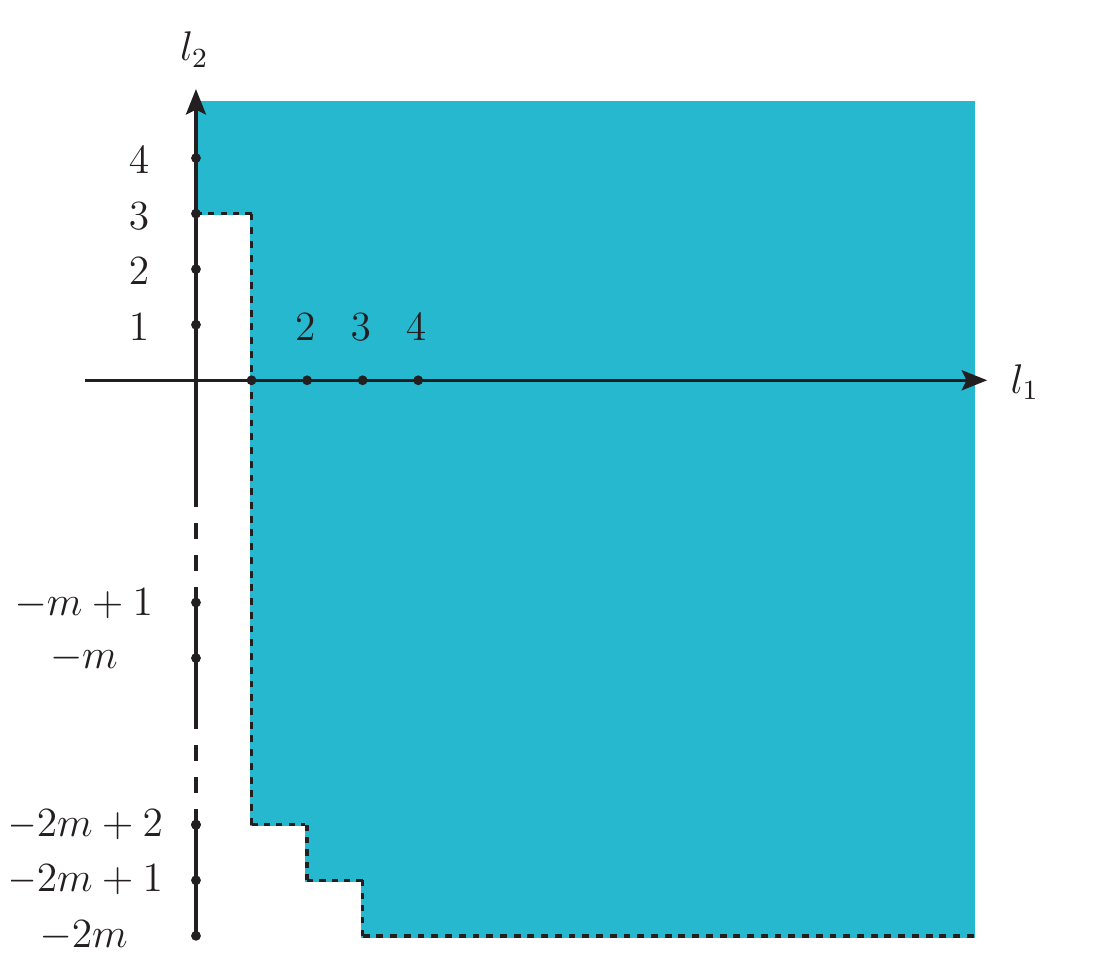} \end{center} 
\caption{\small Bird's eye view on how the RG recursion changes the support along the orbit of the discrete dynamical system in the case of a more
general transseries ansatz from $\wg=\wg_{1}$ (left) through $\wg_{2}$ (centre) to $\wg_{3}$ (right). 
The depth of the first step depends on $m=\deg P(z)$, ie the degree of the 'instanton polynomial' $P(z)$. See also \textsc{Figure} \ref{Rsupport}.}
\label{rsupport}
\end{figure}

However, an RG-stable support is only a necessary condition anyway: for true convergence in the sense of Definition \ref{Co},
we need $\supp(\wg_{n}-\wg_{n+1}) \rightarrow \emptyset$ (through point finiteness) which is an even stronger requirement. 

\section{Nonuniqueness of fixed points}\label{sec:DSE}                                    
In this section, we will study the DSEs derived in §\ref{sec:DySch} and discuss whether a pertinent fixed point theorem for grid-based transseries applies.
The peculiar situation we are in is that we seek a truly nonperturbative solution that goes beyond the known perturbative one but at the same time 
encompasses it. Since a fixed point theorem guaranteeing a unique solution can only thwart this goal, we prefer to stay clear of suchlike.  

\subsection{Fixed point equation as an asymptotic constraint}
We take the photon DSE (\ref{phodse1}) and recast it in the transseries setting as a fixed point equation in the algebra $\Tu$ to get 
\begin{equation}\label{dyson}
\wg = A_{0}\mo_{2} + \sum_{\ell=1}^{N} \mo_{2}^{\ell+1} \sum_{r_{1}\geq 0, n_{1} \geq r_{1}} \dotsc \sum_{r_{\ell}\geq 0, n_{\ell} \geq r_{\ell}} 
  C_{(n_{1},\dotsc, n_{\ell})} (\wg^{\star r_{1}}_{\bullet})_{n_{1}} \dotsc (\wg^{\star r_{\ell}}_{\bullet})_{n_{\ell}} \hs{1} (\text{photon}),
\end{equation}
where the convolutions can be written by means of the RG recursion $\wg_{n}=\Rn^{n-1}(\wg)$ as
\begin{equation}
 (\wg^{\star r}_{\bullet})_{m} =  \sum_{m_{1} + \dotsc + m_{r}=m} \frac{1}{m_{1}!} \Rn^{m_{1}-1}(\wg) \dotsc \frac{1}{m_{r}!} \Rn^{m_{r}-1}(\wg)
\end{equation}
for $m \geq 1$. Note that this is a finite linear combination of grid-based transseries in $\Tu$ and hence the assignment
$\wg \mapsto (\wg^{\star r}_{\bullet})_{m}$ is a well-defined operator in $\Tu$ , which is then also true of the operator 
\begin{equation}\label{Qa}
\begin{split}
 \Qa_{M}(\wg) := \sum_{n=0}^{M} \sum_{r=0}^{n} C_{n} (\wg^{\star r}_{\bullet})_{n}   = C_{0} + C_{1} \wg 
 & + C_{2} \left[ \frac{\wg_{2}}{2!}  + \wg \cdot \wg \right] + C_{3} \left[  \frac{\wg_{3}}{3!} +  2 \frac{\wg_{2}}{2!} \cdot \wg + 
 \wg \cdot \wg \cdot \wg \right] \\ 
 & + \dotsc + C_{M} \left[ \left(\wg^{\star 1}_{\bullet}\right)_{M} + \dotsc + \left(\wg^{\star M}_{\bullet}\right)_{M} \right].
 \end{split}
\end{equation}
The case $N=1$ of the photon DSE in (\ref{dyson}) arises as the limit
\begin{equation}\label{Kil}
 \wg = A_{0} \mo_{2} + \mo_{2}^{2} \lim_{M \rightarrow \infty}\Qa_{M}(\wg)
 = A_{0} \mo_{2} + \mo_{2}^{2} \left( C_{0} + C_{1} \wg + C_{2} \left[ \frac{\wg_{2}}{2!} + \wg \cdot \wg \right] + \dotsc \right) \hs{0.1} (\text{photon})
\end{equation}
where only the 2-loop skeleton (\ref{phot}) is taken into account. Notice that this is very close to the Kilroy DSE for the anomalous 
dimension (\ref{Kilanom1}),  
\begin{equation}\label{Kil'}
 \wg = \mo_{2} \lim_{M \rightarrow \infty}\Qa_{M}(\wg) = \mo_{2} \left( C_{0} + C_{1} \wg 
 + C_{2} \left[ \frac{\wg_{2}}{2!} + \wg \cdot \wg \right] + \dotsc \right) \hs{1} (\mbox{Yukawa fermion}) .
\end{equation}
If one desires to take more skeletons into account for the photon, one has to consider the operator family 
\begin{equation}
 \Qa_{M_{1},\dotsc,M_{\ell}}(\wg) := \sum_{n_{1}=0}^{M_{1}} \dotsc \sum_{n_{\ell}=0}^{M_{\ell}} \sum_{r_{1}=0}^{n_{1}} \dotsc \sum_{r_{\ell}=0}^{n_{\ell}}
 C_{(n_{1}, \dotsc, n_{\ell})} (\wg^{\star r_{1}}_{\bullet})_{n_{1}} \dotsc (\wg^{\star r_{\ell}}_{\bullet})_{n_{\ell}}  
\end{equation}
such that the DSE (\ref{dyson}) is given by the limit  
\begin{equation}\label{dyson1}
\wg = A_{0}\mo_{2} + \lim_{M_{1},\dotsc,M_{N} \rightarrow \infty} \sum_{\ell=1}^{N} \mo_{2}^{\ell+1} \Qa_{M_{1},\dotsc,M_{\ell}}(\wg) 
\hs{1} (\mbox{photon DSE in $\Tu$}).
\end{equation}
This is a fixed point equation for $\wg$ which, in the picture of the RG recursion as a dynamical system, is at the same time an asymptotic constraint
imposed on the orbit starting at $\wg$ (the initial value of the dynamical system): the operators $\Qa_{M_{1},\dotsc,M_{\ell}}$ produce finite linear 
combinations of products of transseries from the orbit $\Phi(\N, \wg) \subset \Tu$ up to time step $M = \max \{ M_{1},\dotsc,M_{N} \}$, 
ie $\Qa_{M_{1},\dotsc,M_{\ell}}(\wg)$ is a polynomial in $\wg_{1}, \dotsc, \wg_{M}$, which, from a structural viewpoint, is nothing but a product of 
several $\Qa_{M}(\wg)$'s. Although the coefficients $C_{0}, C_{1} , \dotsc$ are in the Yukawa case different from those in the photon case, we are not 
interested in this aspect as it is of no relevance for our investigation. We therefore do not use extra notation for $\Qa_{M}$ (Yukawa) and 
$\Qa_{M_{1},\dotsc,M_{\ell}}$ with $\ell=1$ (QED). 

\subsection{Evasion of fixed point theorems} 
We define two families of 'Dyson-Schwinger' (DS) operators on $\Tu$, one for the Kilroy DSE  
\begin{equation}\label{DSKil}
 \Psi_{M}(f) := \mo_{2} \Qa_{M}(f)   \hs{5} (\text{Yukawa fermion})
\end{equation}
and one for the photon DSE
\begin{equation}\label{DSPho}
\Psi^{\graph{0}{0.04}{0.08}{pho}{0}}_{M_{1},\dotsc,M_{N}}(f) :=  A_{0}\mo_{2} + \sum_{\ell=1}^{N} \mo_{2}^{\ell+1} \Qa_{M_{1},\dotsc,M_{\ell}}(f)
\hs{2} (\text{photon}).
\end{equation}
There is a wish we have about these operators, concerning their fixed points: if $f,h \in \Tu$ are fixed points, ie $\Psi_{M}(f)=f$
and $\Psi^{\graph{0}{0.04}{0.08}{pho}{0}}_{M_{1},\dotsc,M_{N}}(h)=h$, then we demand 
\begin{equation}
\Psi_{M}(\W_{0}f)=\W_{0}f, \hs{2} \Psi^{\graph{0}{0.04}{0.08}{pho}{0}}_{M_{1},\dotsc,M_{N}}(\W_{0}h) = \W_{0}h ,
\end{equation}
that is, its perturbative part must satisfy the DSE by itself, without its nonperturbative part. The message of this wish is clear: we do not 
desire the DS operators to have a unique fixed point because that would be $\W_{0}\wg \in \Tu$, the perturbative series of the anomalous dimension.

Before we tackle the question whether we can preclude a unique fixed point, we need some more terminology from \cite{Ed09}. 

We have introduced the order relation $\leq$ for Hahn series in §\ref{sec:ReTra} based on the coefficient of the dominating transmonomial. Here is another 
one which introduces a partial order: for two transseries $S,T \in \Tu$, we write $S \preccurlyeq T$ if their dominating transmonomials satisfy
\begin{equation}
\magn(S) \preccurlyeq \magn(T) .
\end{equation}
If $\magn(S) = \magn(T)$, one writes $S \asymp T$. If $\magn(S) \neq \magn(T)$ and $S \preccurlyeq T$, then, of course $\magn(S) \prec \magn(T)$ and 
we write $S \prec T$ and say that $T$ dominates over $S$.

\begin{Definition}[Contractivity]\label{Cont}
Let $\U \subset \Tu$ be some subset. A map $\Psi$ from $\U$ into itself is called \emph{contractive} if for any $S,T \in \U$ 
with $S \neq T$ one has
\begin{equation}
 \Psi(S)-\Psi(T) \prec S-T .
\end{equation}
\end{Definition}

Notice that this property is precisely the last we thing we want for the above two DS operator families. The reason is easy to understand: assume $\Psi$
is a DS operator and 
\begin{equation}
\wg = \W_{0}\wg + \sum_{\sigma \geq 1}\mo_{1}^{\sigma}\W_{\sigma}\wg
\end{equation}
a fixed point such that moreover $\Psi(\W_{0}\wg)=\W_{0}\wg$, then
\begin{equation}
 \Psi(\wg) - \Psi(\W_{0}\wg) = \wg - \W_{0}\wg    
\end{equation}
which contradicts contractivity because $\wg \neq \W_{0}\wg$. This means if we find a subset that contains the perturbative solution and on which the 
DS operators are contractive, then we can be sure that it does \emph{not} contain a nonperturbative completion of the anomalous dimension. 

The next assertion tells us when a contractive operator has a fixed point. It is a general result about grid-based transseries from \cite{Ed09} which also 
holds for our transseries algebra $\Tu$.   

\begin{Theorem}[Fixed point theorem]\label{Fix}
Fix $k \in \Z^{2}$ and let $\U \subset \Tu$ be a subset of transseries $T$ with $\emph{\supp}(T) \subset \J_{k}$. If $\Psi$ is contractive 
in $\U$, as in Definition \ref{Cont}, then there exists a unique fixed point $T \in \U$ of $\Psi$, ie $T = \Psi(T).$
\end{Theorem}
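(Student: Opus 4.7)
The plan is to adapt the Banach fixed point theorem to the grid-based transseries setting. Uniqueness is immediate: if $T, T' \in \U$ are both fixed points with $T \neq T'$, then $T - T' = \Psi(T) - \Psi(T') \prec T - T'$ contradicts irreflexivity of $\prec$, so at most one fixed point can lie in $\U$.

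For existence I would employ Picard iteration. Starting from any $T_{0} \in \U$ (the theorem is vacuous if $\U = \emptyset$), define recursively $T_{n+1} := \Psi(T_{n})$ and set $R_{n} := T_{n+1} - T_{n} \in \Tu$. Iterated contractivity yields
\begin{equation*}
\magn(R_{n+1}) = \magn\bigl(\Psi(T_{n+1}) - \Psi(T_{n})\bigr) \prec \magn(T_{n+1} - T_{n}) = \magn(R_{n}),
\end{equation*}
so the magnitudes form a strictly $\prec$-decreasing sequence in $\J_{k}$. The candidate fixed point is the formal series $T := T_{0} + \sum_{n \geq 0} R_{n}$; to promote this to a genuine element of $\Tu$, one checks Definition \ref{Co} for the sequence of partial sums: condition (i), a common supportive grid, holds trivially since $T_{n} \in \U$ forces $\supp R_{n} \subseteq \J_{k}$; condition (ii), point-finiteness of $(\supp R_{n})_{n \geq 0}$, is supposed to follow from the strict descent of $\magn(R_{n})$ in the reverse-well-ordered grid. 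Once $T \in \Tu$ is well-defined, it lies in $\U$ by closure of $\U$ under such grid-based limits, and the fixed-point identity $\Psi(T) = T$ falls out by continuity: contractivity propagates $T_{n} \to T$ to $\Psi(T_{n}) - \Psi(T) \prec T_{n} - T$, so $\Psi(T_{n}) \to \Psi(T)$; but also $\Psi(T_{n}) = T_{n+1} \to T$, and uniqueness of grid-based limits gives $\Psi(T) = T$.

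The hard part will be the point-finiteness step. Reverse-well-orderedness of $\J_{k}$ rules out strictly $\prec$-ascending chains but not strictly $\prec$-descending ones: the magnitudes $\magn(R_{n})$ could in principle linger indefinitely above a fixed $\g \in \J_{k}$ — for instance running through all $\mo_{2}^{s}$ with $s \to \infty$ while never dropping below the first nonperturbative monomial $\mo_{1}$. Ruling this out requires a finer property of how the iteration generates the $R_{n}$, and is the delicate technical content to be borrowed from \cite{Ed09}: essentially one must show that iteration of a contractive map on a grid-based subset produces differences whose magnitudes descend cofinally through every level of $\J_{k}$, so that any given $\g$ is indeed eventually strictly dominated by $\magn(R_{n})^{-1}$ in the sense needed for the supports to avoid $\g$ from some stage onwards.
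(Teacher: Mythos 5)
Your uniqueness paragraph is correct and is the standard argument. The existence half, however, is built on Picard iteration, and the difficulty you flag at the end is not a technical point ``to be borrowed from \cite{Ed09}'' --- it is fatal to this strategy. Contractivity in the sense of Definition \ref{Cont} constrains only the \emph{dominant} monomial of a difference and says nothing about coefficients sitting below $\magn(S-T)$; consequently the chain $\magn(R_{0})\succ\magn(R_{1})\succ\cdots$ can stay forever above a fixed monomial while the coefficient at that monomial changes at every step, and point-finiteness genuinely fails. A concrete contractive map exhibiting this inside the paper's own algebra: on $\U=\{T\in\Tu : \supp(T)\subseteq\J_{(0,0)}\}$ put $\Psi(T):=\mo_{2}\,\W_{0}[T]+h(\W_{0}[T])\,\mo_{1}$, where $h(0)=0$ and $h(U)=1$ for $U\neq 0$. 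Since every $\mo_{2}^{t}$ dominates every monomial of the sectors $\sigma\geq 1$, one checks $\Psi(S)-\Psi(T)\prec S-T$ for all $S\neq T$; the unique fixed point is $T=0$; yet the orbit of $T_{0}=1$ is $T_{n}=\mo_{2}^{n}+\mo_{1}$, whose coefficient at $\mo_{1}$ never stabilises. So your candidate $T_{0}+\sum_{n}R_{n}$ need not exist, the orbit need not converge to the fixed point even when it does, and the downstream ``closure of $\U$ under grid-based limits'' and continuity steps collapse with it. (That closure is also not a hypothesis; for a truly arbitrary subset $\U$ the existence claim is even false --- restrict the $\Psi$ above to the fixed-point-free orbit $\{T_{n}\}$ --- so the theorem must be read, as in \cite{Ed09}, with $\U$ the \emph{full} set of transseries supported in $\J_{k}$.)

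The paper gives no argument of its own (it cites Theorem 4.22 of \cite{Ed09}), and the proof there is of a different nature: one does not iterate, one constructs the fixed point coefficient by coefficient, by recursion along the reverse well-ordering of the grid. The real content of contractivity is that if $S$ and $S'$ agree on all monomials $\succ\g$, then $\magn(S-S')\preccurlyeq\g$, hence $\magn(\Psi(S)-\Psi(S'))\prec\g$, so the coefficient $\Psi(S)_{\g}$ depends only on the coefficients of $S$ at monomials strictly larger than $\g$. Since every nonempty subset of a grid has a largest element (no infinite $\prec$-ascending chains, by Dickson's lemma applied to the exponent lattice), this recursion is well-founded and yields a $T$ supported in $\J_{k}$ with $T_{\g}=\Psi(T)_{\g}$ for every $\g$, with no convergence of an orbit ever invoked. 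This coefficientwise recursion is the missing idea, and it cannot be reached from the Banach-style iteration picture.
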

\begin{proof}
See \cite{Ed09}, Theorem 4.22. 
\end{proof}

The results of §\ref{sec:DynSys}, in particular Proposition \ref{Psupp}, show that the proviso for the supportive grid is easily violated because the 
support seems to have a strong tendency to grow along the orbit if the degree of the instanton polynomial is at least quadratic. 
This should clearly effect contractivity, especially because multiplying by $\mo_{2}$, itself a contractive operation, cannot compensate for the 
support-increasing effect of the RG recursion. 
There are two ways to respond to this situation: either 
\begin{itemize}
 \item one considers a subset on which the DS operators are contractive and which excludes the perturbative solution or  
 \item one focusses on subsets which contain it at the cost of losing contractivity. 
\end{itemize}
So the task is to find a subset on which contractivity is given or not given. However, this turned out to be harder than originally expected. The intuition 
gained in §\ref{sec:DynSys} from Proposition \ref{Psupp} suggests that the DS operator may produce very large transmonomials if $m=\deg P(z)>1$. 
Let us pick an example to see how contractivity may easily be broken and that for this reason we might as well stick to the unconventional idea of 
evading the above fixed point theorem.

If we assume $\ti{\gamma}_{0} \in \C[[z^{-1}]]$ is a perturbative fixed point, ie the solution in principle given in terms of Feynman diagrams,
then we are interested in the subset of nonperturbative completions of $\ti{\gamma}_{0}$,
\begin{equation}
\text{Com}(\ti{\gamma}_{0}):= \left\{ \ \wg \in \Tu \, | \, \W_{0}\wg = \ti{\gamma}_{0} , \exists \sigma \geq 1 : \W_{\sigma} \wg \neq 0 \ \right\},
\end{equation}
ie of precisely those transseries with nonvanishing nonperturbative part whose perturbative sector agrees with the perturbative solution $\ti{\gamma}_{0}$. 
This set does not contain the perturbative solution and may in a case of contractivity have a unique solution. But it does not take much to find that 
this property is hard to be fulfilled. 
Let us take two candidates $f,h \in \text{Com}(\ti{\gamma}_{0}) \cap \X_{t}$, then
\begin{equation}
  \magn(f-h) \preccurlyeq \mo_{1} \mo_{2}^{t},
\end{equation}
ie whichever the largest transmonomial they disagree on, it is smaller than $\mo_{1} \mo_{2}^{t}$. (\ref{mag}) obtained from Proposition \ref{Psupp} 
promises us that the RG recursion may produce very large transmonomials, at least we can say that 
\begin{equation}
\magn \{ \Rn^{n} f \} = \magn \{ (f[s \der-1])^{n}f \} \succcurlyeq \mo_{1} \mo_{2}^{t-m+1} \succcurlyeq \mo_{1} \mo_{2}^{t}
\end{equation}
is certainly not too strong a condition for some large enough $n \geq 1$. Suppose 
\begin{equation}\label{large}
\magn(\Rn^{n} f- \Rn^{n} h ) \succcurlyeq \mo_{1} \mo_{2}^{t-m+1} 
\end{equation}
again for some large enough $n \geq 1$, where $m = \deg P(z)$ (see Proposition \ref{Psupp}). This is all we need for the following argument. Let $v \geq 1$
and take $M \geq n$ to be large enough, then 
\begin{equation}
\begin{split}
 \magn[\mo_{2}^{v} & (\Qa_{M}f-\Qa_{M}h)]  = \mo_{2}^{v} \magn (C_{1}[f-h] + C_{2}[\Rn f- \Rn h] + \dotsc) \\
 &= \mo_{2}^{v} \magn (C_{2}[\Rn f- \Rn h] + \dotsc) \succcurlyeq \mo_{1}\mo_{2}^{t-m+1+v} \succcurlyeq \mo_{1}\mo_{2}^{t} = \magn(f-h)
\end{split}
\end{equation}
if $m \geq 1+v$, refuting contractivity. The dots $\dotsc$ involve the higher powers of $\Rn$ which create the additional larger 
transmonomials that cannot be cancelled, even if some smaller ones disappear by cancellations. The resulting monomial is therefore larger or as 
large as the first term.

The case $N>1$ for QED is not much different. The operator $\Qa_{M_{1},\dotsc,M_{\ell}}(\wg)$ starts with $\Qa_{M}(\wg)$ and essentially adds powers 
thereof times higher powers of $\mo_{2}$. Driving up the parameters $M_{1},\dotsc,M_{\ell}$ introduces additional ever larger transmonomials down the abyss
as it evolves in \textsc{Figure} \ref{rsupport} that cannot be cancelled and made small by the $N$-th (ie the largest) power of $\mo_{2}$, thus nothing 
can interfere with the '$\succcurlyeq$' conclusion.

Because the case $v=1$ occurs in the Yukawa model and $v \geq 2$ in the QED model, we conclude that the conditions 
\begin{enumerate}
 \item [(C1)] $m=\deg P(z) \geq 2$ (Yukawa fermion)
 \item [(C2)] $m=\deg P(z) \geq 3$ (photon)
\end{enumerate}
for large enough $M, M_{1}, \dotsc, M_{N} \in \N$ and any $t \in \Z$ suffice for the DS operator \emph{not to be contractive}.

Finally, let us now come to another subtle point: the limits
\begin{equation}
\wg = \lim_{M \rightarrow \infty} \Psi_{M}(\wg) , \hs{2} 
\wg = \lim_{M_{1},\dotsc,M_{N} \rightarrow \infty} \Psi^{\graph{0}{0.04}{0.08}{pho}{0}}_{M_{1},\dotsc,M_{N}}(\wg)
\end{equation}
which we must take in the end, are somewhat fishy. The case $N=1$ already has everything it takes for us to grasp what the problem is: whether or not the 
support of $\Qa_{M}(\wg)$ grows with $M$, this object becomes an infinite sum of transseries. For an infinite sum of transseries to be well-defined, 
one normally requires the sequence of its terms to vanish: for a sequence of transseries $(A_{j})_{j \geq 1} \in \Tu$, and 
\begin{equation}
A=\sum_{j \geq 1} A_{j} 
\end{equation}
one prefers $A_{j} \rightarrow 0$, to ensure that every coefficient of the resulting transseries $A$ actually exists, ie one demands that the sum 
$A(\mo) = \sum_{j \geq 1} A_{j}(\mo) \in \C$ terminate. But, as ever so often, physics does not do us this favour, as the subseries 
\begin{equation}
\sum_{n=1}^{M} C_{n} \left(\wg_{\bullet}^{\star 1}\right)_{n} =  \sum_{n=1}^{M} \frac{C_{n}}{n!} \wg_{n}
\end{equation}
of $\Qa_{M}(\wg)$ in (\ref{Qa}) shows: at least apriori, every RG function contributes a number to each coefficient of the sum. This may or may not 
produce a number in the limit $M \rightarrow \infty$. 
However, we take the physicist's viewpoint and assume that is must, since after all physics has given us this equation. 

In summary, we have in this section found that $\deg P(z) \geq 2$ for the Yukawa and $\geq 3$ for the QED model ensures at least that the DS operator is 
not contractive, leaving room for a transseries solution \emph{beyond the perturbative one}.

\section{Discrete RG flow of (non)perturbative data}\label{sec:RGflow}  	          

In what follows next, we shall investigate how the RG recursion passes on information from the sectors of $\wg$ to the sectors of the higher
RG functions' transseries $\wg_{n}$. We will see in this section that it happens in a very specific and orderly fashion. Why it is worth our attention
will then become clear when we come back to the DSE in §\ref{sec:PinNP}. For the moment suffice it to say that it is necessary in order to eventually 
monitor how the sectors of $\wg$ communicate amongst each other.

\subsection{Graded algebra of coefficients} To monitor the flow of information along the orbit of the RG recursion 
$\wg = \wg_{1} \rightsquigarrow \wg_{2} \rightsquigarrow \dotsc $, we take the coefficients of $\wg$ and view them as abstract generators of a free 
commutative algebra. 

Let $\Ge= \{ c_{l} \}_{l \in \N^{2}}$ be the set of generators and $\A:= \C[\Ge]$ the corresponding free polynomial algebra over $\C$. One may think of
them as an infinite set of polynomial variables, indexed by pairs $l=(l_{1},l_{2}) \in \N^{2}$ (the convention here is $0 \in \N$). 

Let $\Theta \colon \A \rightarrow \C$ be an algebra morphism such that the generators are mapped to the corresponding coefficients of $\wg$, 
in signs, $\Theta(c_{l}) = [\mo^{l}]\wg$. In particular, the anomalous dimension is represented by a transseries
\begin{equation}
 \og := \sum_{l \geq (0,0)}c_{l} \mo^{l}
\end{equation}
such that $\Theta(\og)=\wg$. Let $\Tu_{\A}:= \Tlm{\A}{\mo^{l} : l \in \Z^{2}}$ be the algebra of our transseries with coefficients in 
the algebra $\A$ instead of $\C$. 

We get the higher RG functions by acting the RG operator $\Rn = \og(z) (s \der-1)$ on the abstract transseries $\og$, ie $\og_{n+1}:=\Rn^{n}(\og)$, 
where we do not use extra notation for $\Rn$ in this new transseries algebra. The transseries representation of the RG function $\gamma_{n}$ is written as
\begin{equation}
 \og_{n} = \sum_{l \in \N \times \Z} \og_{n}|_{l} \mo^{l} \hs{2} n \geq  1,
\end{equation}
where $[\mo^{l}]\og = \og_{n}|_{l}$ denotes the $l$-th coefficient, clearly an element in $\A$.  

Because $\Theta$ is an algebra morphism, it maps $\og_{n}$ to the corresponding $n$-th RG function $\wg_{n}$, ie $\Theta(\og_{n})=\wg_{n}$. 
The reason we have introduced transseries with abstract coefficients is that in 
contrast to the RG recursion in the algebra $\Tu=\Tlm{\C}{\mo^{l} : l \in \Z^{2}}$ with its 'oblivious' operations, we will be able to tell 
in $\Tu_{\A}$ which sector of the transseries $\og$ the coefficients of $\og_{n}$ come from. To this end, however, we need another tool.

\begin{Definition}[Instanton grading] Let $\Upsilon \colon \A \rightarrow \A$ be a derivation defined on the generators by $\Upsilon(c_{l}):=l_{1} c_{l}$ and let
\begin{equation}\label{instg}
 \A = \bigoplus_{\sigma \geq 0} \A_{\sigma} = \A_{0} \oplus \A_{1} \oplus \A_{2} \oplus \dotsc
\end{equation}
be the corresponding \emph{grading} it gives rise to, ie the eigenspaces of $\Upsilon$ ($a \in \A_{\sigma} :\Leftrightarrow \Upsilon(a)=\sigma a$). 
We refer to $\Upsilon$ as \emph{instanton grading operator} and (\ref{instg}) as \emph{instanton grading}. The elements of $\A_{\sigma}$ are said to be  
\emph{homogeneous of degree} $\sigma$.
\end{Definition}

For readers unfamiliar with gradings, here is an example: take the element $g = 3 c_{k}c_{l} + c_{u} \in \A$ with multi-indices $k,l,u \in \Z^{2}$, then
\begin{equation}
 \Upsilon(g) = \Upsilon(3 c_{k}c_{l}) + \Upsilon(c_{u}) = 3 \Upsilon(c_{k})c_{l} + 3 c_{k} \Upsilon(c_{l}) + u_{1} c_{u} = 3 (k_{1}+l_{1}) c_{k}c_{l} + u_{1} c_{u}.
\end{equation}
This is homogeneous of degree $u_{1}$ if and only if $u_{1}= k_{1}+l_{1}$ because then one has $\Upsilon(g)=u_{1}g$. Note what this tells us: 
if $g \in \A_{u_{1}}$ is a coefficient of an RG function $\og_{n}$ and $u_{1} < \sigma \in \N$, then we can be sure that the element $g$ does not 
originate from the anomalous dimension's instanton sectors $\geq \sigma$. 

\subsection{Sector-homogeneous transseries}
There is a distinguished subspace $\Ho(\Tu_{\A}) \subset \Tu_{\A}$ of what we call \emph{sector-homogeneous transseries}, 
best characterised by the direct sum decomposition
\begin{equation}
 \Ho(\Tu_{\A}) = \bigoplus_{\sigma \geq 0} \mo_{1}^{\sigma} \Tlm{\A_{\sigma}}{\mo_{2}^{s} : s \in \Z} \subset \Tu_{\A}
\end{equation}
and consisting of transseries $f=\sum_{\sigma \geq  0} \mo_{1}^{\sigma} \W_{\sigma}[f]$ with the property that
$\W_{\sigma}[f] \in \Tlm{\A_{\sigma}}{\mo_{2}^{s} : s \in \Z}$ for all sectors $\sigma \geq 0$, ie the coefficients of $f$ are homogeneous of the 
degree that corresponds exactly to their sector,
\begin{equation}
 f = \sum_{l \in \N \times \Z } f_{l} \mo^{l} \in \Ho(\Tu_{\A}) \hs{1} : \Longleftrightarrow \hs{1} \Upsilon(f_{l})=l_{1}f_{l} \hs{0.5} \forall 
 l =(l_{1},l_{2}) \in \N \times \Z.
\end{equation}
Notice what a distinguished class these transseries belong to: their coefficients, ie elements in $\A$,  
are homogeneous with respect to the instanton grading and, on top, of exactly the degree that corresponds to the transseries' 
sector\footnote{This is not to be confused with \'Ecalle's 
definition of homogeneous transseries in \cite{Eca04}, which in contrast refers to a uniform exponential depth of the 
support ('depth' is called height here in §\ref{sec:ReTra}, adopted from \cite{Ed09}). In QFT, any nonperturbative completion of a perturbative series
is by definition not homogeneous in \'Ecalle's sense.}. 
The next assertion is straightforward yet crucial.

\begin{Lemma}\label{Alg}
The subspace of sector-homogeneous transseries $\Ho(\Tu_{\A}) \subset \Tu_{\A}$ is a $\der$-invariant subalgebra. 
\end{Lemma}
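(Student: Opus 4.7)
The statement has two parts: closure under the algebra operations and $\der$-invariance. Both reduce to the two key properties that (a) $\Upsilon$ is a derivation on $\A$, and (b) the operator $\der$ acts on the transmonomials $\mo^l$ without shifting the first index $l_1$.

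First, I would verify closure under addition and scalar multiplication. This is immediate: the decomposition $\A = \bigoplus_{\sigma \geq 0} \A_{\sigma}$ into eigenspaces of $\Upsilon$ is a $\C$-vector space decomposition, and the sector decomposition of $\Tu_{\A}$ respects it. Hence $\Ho(\Tu_\A)$ is a $\C$-subspace.

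Next I would prove closure under the transseries product. Given $f, g \in \Ho(\Tu_\A)$, the coefficient of $\mo^l$ in the Cauchy product is the convolution
\[
(f \cdot g)_l \;=\; \sum_{l' + l'' = l} f_{l'} g_{l''}.
\]
Applying $\Upsilon$ termwise and invoking the Leibniz rule gives
\[
\Upsilon\bigl((f \cdot g)_l\bigr) \;=\; \sum_{l' + l'' = l}\bigl(\Upsilon(f_{l'}) g_{l''} + f_{l'} \Upsilon(g_{l''})\bigr) \;=\; \sum_{l' + l'' = l}(l'_1 + l''_1)\, f_{l'} g_{l''} \;=\; l_1\, (f \cdot g)_l,
\]
where I have used that $f_{l'} \in \A_{l'_1}$ and $g_{l''} \in \A_{l''_1}$. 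This shows $(f \cdot g)_l \in \A_{l_1}$, so $f \cdot g \in \Ho(\Tu_\A)$.

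Finally, for $\der$-invariance I would invoke the explicit coefficient formula (\ref{deri}) already derived in \S\ref{sec:DynSys}:
\[
(\der f)|_l \;=\; (l_1 c + l_2)\, f_l \,+\, l_1 \sum_{i=1}^{m} i\, b_i\, f_{(l_1,\, l_2+i)}.
\]
The decisive point is that \emph{every} coefficient of $f$ appearing on the right-hand side has first index $l_1$, and the prefactors $(l_1 c + l_2)$ and $l_1 i b_i$ are scalars in $\C$ which commute with $\Upsilon$. If $f \in \Ho(\Tu_\A)$, then $f_l$ and all $f_{(l_1, l_2+i)}$ lie in $\A_{l_1}$, so the above $\C$-linear combination lies in $\A_{l_1}$ as well. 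Hence $\der f \in \Ho(\Tu_\A)$.

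There is no real obstacle here: once one spells out the formulae, everything is essentially a bookkeeping check. The content of the lemma is rather the structural fact that the instanton grading has been defined exactly so as to be compatible with both the transseries product (via Leibniz for $\Upsilon$) and with the RG derivation $\der$ (which, by construction, only reshuffles the $\mo_2$-index within each sector). The mild point worth flagging in the writeup is that the formula (\ref{deri}) is valid even when $\supp(f)$ is infinite, so one should remark once that the grid-based nature of $f$ guarantees that the convolution defining $(f \cdot g)_l$ and the sum defining $(\der f)|_l$ are finite.
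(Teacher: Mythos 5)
Your proof is correct and follows essentially the same route as the paper's: the Leibniz rule for $\Upsilon$ applied to the convolution coefficients $(f\cdot g)_{l}$ gives closure under multiplication, and the coefficient formula (\ref{deri}), whose right-hand side only involves coefficients with unchanged first index $l_{1}$, gives $\der$-invariance. The extra remarks on $\C$-linearity and on finiteness of the sums for grid-based series are harmless additions not spelled out in the paper.
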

\begin{proof}
 Let $f,g \in \Ho(\Tu_{\A})$, then $f \cdot g \in \Ho(\Tu_{\A})$ on account of 
 \begin{equation}
 \Upsilon(f \cdot g)|_{l} = \sum_{l'+l''=l}\Upsilon(f_{l'}g_{l''}) = \sum_{l'+l''=l} [\Upsilon(f_{l'})g_{l''} + f_{l'}\Upsilon(g_{l''})] 
 = \sum_{l'+l''=l} [l'_{1}+l''_{1}]f_{l'}g_{l''} = l_{1} (f \cdot g)|_{l}. 
 \end{equation}
To see that $\der \Ho(\Tu_{\A}) \subseteq \Ho(\Tu_{\A})$, we use (\ref{deri}) and compute 
\begin{equation}
 \Upsilon(\der f)|_{l}=\Upsilon \left([l_{1}c +l_{2}] f_{l} + l_{1} \sum_{i=1}^{m} i b_{i} f_{(l_{1},l_{2}+i)} \right) =  
 l_{1}[l_{1}c +l_{2}] f_{l} + l_{1}^{2} \sum_{i=1}^{m} i b_{i} f_{(l_{1},l_{2}+i)} =  l_{1} (\der f)|_{l}
\end{equation}
because the index shift does not change the first index, ie $\Upsilon f_{(l_{1},l_{2}+i)}= l_{1}f_{(l_{1},l_{2}+i)}$.  
\end{proof}
Note that logarithmic transmonomials $\mo_{3}=\log z$ do not disrupt sector homogeneity for the same reason: (\ref{tra'}) implies that the presence 
of a log power leads to an additional term due to a shift in the third index, not the first. Therefore, sector homogeneity is not destroyed by polynomials in logs. 

Can sector homogeneity ever be damaged? If we had more complicated sectors, say perhaps characterised by a double index $\sigma=(\sigma_{1},\sigma_{2})$ for 
the two generating transmonomials 
\begin{equation}\label{super}
\mathfrak{n}_{1} = e^{-z} \ , \hs{2} \mathfrak{n}_{2} = e^{-e^{z}},
\end{equation}
then 
\begin{equation}
\der\left( e^{-\sigma_{1} z} e^{-\sigma_{2} e^{z}} \right) = \sigma_{1} z e^{-\sigma_{1} z} e^{-\sigma_{2} e^{z}}  + \sigma_{2} z e^{-(\sigma_{1}-1) z} e^{-\sigma_{2} e^{z}}
\end{equation}
informs us that the nonperturbative sector $(0,\sigma_{2})$ will receive the transmonomial $\sigma_{2} z e^{-\sigma_{2} e^{z}}$ from sector 
$(1,\sigma_{2})$ as a result of the action of $\der$. Who knows whether this might one day be relevant?

However, we carry on and note that Lemma \ref{Alg} entails an important  

\begin{Proposition}[Discrete RG flow]\label{flow}
The subalgebra $\Ho(\Tu_{\A})$ is $\Rn$-invariant, ie 
\begin{equation}
 \Rn \Ho(\Tu_{\A}) \subseteq \Ho(\Tu_{\A}).
\end{equation}
Consequently, the flow of the RG recursion preserves sector homogeneity. 
\end{Proposition}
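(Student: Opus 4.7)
The plan is to exploit the fact that $\Rn = \og(z)(s\der - 1) = s\,\og\cdot\der \;-\; \og\cdot$ is assembled from only two elementary building blocks: the derivation $\der$ and multiplication by the fixed transseries $\og \in \Tu_{\A}$. Hence $\Rn$-invariance of $\Ho(\Tu_{\A})$ reduces to three ingredients, the first two of which are supplied by the previous Lemma \ref{Alg}: (i) $\der$ preserves $\Ho(\Tu_{\A})$, and (ii) $\Ho(\Tu_{\A})$ is closed under the product. The only genuinely new point is (iii), that $\og$ itself belongs to $\Ho(\Tu_{\A})$.

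First I would verify (iii) directly from the definitions. Since $\og = \sum_{l \geq (0,0)} c_{l}\mo^{l}$ and the instanton grading operator $\Upsilon$ was defined by $\Upsilon(c_{l}) = l_{1}c_{l}$, every coefficient of $\og$ is homogeneous of degree exactly matching its sector index $l_{1}$. So $\og \in \Ho(\Tu_{\A})$ by definition, with essentially no calculation required. Then for arbitrary $f \in \Ho(\Tu_{\A})$, Lemma \ref{Alg} gives $\der f \in \Ho(\Tu_{\A})$, and the subalgebra property combines $\og$ with each of $\der f$ and $f$ to place both $s\,\og\cdot\der f$ and $\og \cdot f$ inside $\Ho(\Tu_{\A})$. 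Their difference is $\Rn f$, which therefore lies in $\Ho(\Tu_{\A})$, establishing the invariance claim.

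For the consequential statement about the discrete RG flow preserving sector homogeneity, I would argue by induction on $n$. The base case is $\og_{1} = \og \in \Ho(\Tu_{\A})$, which has just been observed. The inductive step $\og_{n+1} = \Rn\,\og_{n} \in \Ho(\Tu_{\A})$ follows immediately from the $\Rn$-invariance just proved applied to the inductive hypothesis. Hence every iterate $\og_{n}$ along the orbit of the RG recursion remains sector-homogeneous, which is precisely the assertion about the flow.

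I do not foresee a real obstacle: the structural decomposition of $\Rn$ together with Lemma \ref{Alg} does essentially all the work, and the sector-homogeneity of $\og$ is built into the choice of $\Upsilon$. If anything, the only point deserving of a remark is that the argument is robust under enlarging the monomial group by logarithmic transmonomials (as already noted after Lemma \ref{Alg}), since a log-shift acts only on the third index and leaves the instanton degree untouched; but that is a side observation rather than an obstruction.
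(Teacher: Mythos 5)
Your proposal is correct and follows essentially the same route as the paper's own proof: the paper likewise observes that $\og \in \Ho(\Tu_{\A})$ holds by definition of the grading operator $\Upsilon$ and then invokes Lemma \ref{Alg}, since $\Rn$ consists only of multiplication by $\og$ and the action of $s\der - 1$. Your explicit induction on $n$ for the consequence about the orbit is left implicit in the paper but is exactly the intended reading.
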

\begin{proof}
This is a simple consequence of $\og \in \Ho(\Tu_{\A})$, true by definition, and Lemma \ref{Alg} because the RG operator $\Rn$ does only involve 
multiplication with $\og$ and the action of $s \der - 1$. 
\end{proof}

With these results at hand, we have a precise picture of where both perturbative and nonperturbative data from the sectors of the anomalous dimension 
ends up: because $\og_{n} \in \Ho(\Tu_{\A})$, we find that the coefficients of the $\sigma$-th (instanton) sector of $\og_{n}$ can only have data from 
sectors $\leq \sigma$ of $\og$, in signs 
\begin{equation}
 [\mo_{2}^{s}]\W_{\sigma} \og_{n} \in \A_{\sigma}  \hs{2} \forall n \geq 1, s \in \Z,
\end{equation}
where $\A_{\sigma}$ has no coefficients of degree $> \sigma$. To see sector homogeneity concretely, we compute a few nonperturbative coefficients of the 
second RG function: 
\begin{equation}
 (\Rn \og)|_{l} = (\og \M\og)|_{l} = \sum_{l'+l''=l} c_{l'} ( \M\og)|_{l''} 
 = \sum_{l'_{1}+l''_{1}=l_{1}} \ \sum_{l'_{2}+l''_{2}=l_{2}} c_{(l_{1}',l_{2}')} ( \M\og)|_{(l_{1}'',l_{2}'')},
\end{equation}
where by (\ref{deri}) we have $( \M \og)|_{l} = ( s [l_{1}c +l_{2}]-1 )c_{l} + s l_{1} \sum_{i=1}^{m} i b_{i} c_{(l_{1},l_{2}+i)}$ with shorthand 
\begin{equation}\label{Abk}
 \M:= s \der -1
\end{equation}
which we will employ from now on. Notice that $ ( \M\og)|_{(l_{1},l_{2})}=0$ for $l_{2} < -m$. For the first instanton sector of $\wg_{2}$, the RG recursion yields 
\begin{equation}
 (\Rn \og)|_{(1,t)} = \sum_{t'+t''=t} \left[ c_{(1,t')} ( \M \og)|_{(0,t'')} + c_{(0,t')} ( \M\og)|_{(1,t'')} \right].
\end{equation}
Note that $(\Rn \og)|_{(1,t)} = 0$ if $t \leq -m$ on account of $c_{(0,0)}=0$. For $m=1$, the first nonvanishing coefficients are given by 
\begin{equation}\label{1stsec}
 \begin{split}
(\Rn \og)|_{(1,0)} &= c_{(1,0)} ( \M\og)|_{(0,0)} + c_{(0,1)} ( \M\og)|_{(1,-1)} = c_{(0,1)} ( \M\og)|_{(1,-1)} =  s b_{1} c_{(0,1)} c_{(1,0)} \\
(\Rn \og)|_{(1,1)} &= c_{(1,0)} ( \M\og)|_{(0,1)} + c_{(0,1)} ( \M\og)|_{(1,0)} + c_{(0,2)} ( \M\og)|_{(1,-1)} \\
&= [ (sc+s-2) c_{(0,1)} +  s b_{1} c_{(0,2)} ] c_{(1,0)} + sb_{1} c_{(0,1)} c_{(1,1)} \\
(\Rn \og)|_{(1,2)} &= c_{(1,1)} ( \M\og)|_{(0,1)} + c_{(1,0)} ( \M\og)|_{(0,2)} + c_{(0,1)} ( \M\og)|_{(1,1)} + c_{(0,2)} ( \M\og)|_{(1,0)} \\
& \hs{3} + c_{(0,3)} ( \M\og)|_{(1,-1)} \\ 
&= [(sc+2s-2)c_{(0,2)} + sb_{1} c_{(0,3)} ] c_{(1,0)} + [(sc+2s-2)c_{(0,1)} + s b_{1} c_{(0,2)} ] c_{(1,1)}  \\ 
& \hs{3}  + sb_{1} c_{(0,1)}c_{(1,2)},  
 \end{split}
\end{equation}
in which we can clearly see that all are elements in $\A_{1}$. The first coefficients of the perturbative sector read
\begin{equation}\label{pertsec}
 \begin{split}
(\Rn \og)|_{(0,0)} &= (\Rn \og)|_{(0,1)} = 0  , \hs{1} (\Rn \og)|_{(0,2)} = c_{(0,1)} ( \M\og)|_{(0,1)} = (s-1) c_{(0,1)}^{2}  \\
(\Rn \og)|_{(0,3)} &=  c_{(0,1)} ( \M\og)|_{(0,2)} + c_{(0,2)} ( \M\og)|_{(0,1)}  = (3s-2)c_{(0,1)} c_{(0,2)},   
 \end{split}
\end{equation}
and all of them lie in $\A_{0}$. If we had introduced an extra grading with respect to the second index (to end up having a double grading), the RG operator 
would be homogeneity preserving also with respect to this grading, but as (\ref{1stsec}) shows, only on the perturbative sector!  
Some readers may at this stage still wonder what this is about. We ask for patience, towards the end of the next section (§\ref{sec:PinNP}), 
this matter will then finally become clear. 

\subsection{RG-driven flow of data}
To summarise the state of play so far, according to the RG recursion, and hence the RG equation, we note for the record that
\begin{itemize}
 \item both perturbative and nonperturbative sectors of the anomalous dimension $\og$ inform all sectors of the higher RG functions $\og_{n}$
       in such a way that lower sectors of $\og$ pass on data to all higher ones of $\og_{n}$ and, in particular,
 \item the perturbative sectors of the RG functions' transseries $\og_{n}$ do not receive information from $\og$'s nonperturbative sectors,
       while conversely all nonperturbative sectors of $\og_{n}$ are informed by its perturbative sector.
\end{itemize}
The flow of information as driven by the RG recursion is depicted schematically in \textsc{Figure} \ref{RGFlow}, where the non-shaded boxes on the 
bottom represent the sources of data. 
\begin{figure}[ht]
\begin{center} \includegraphics[height=2.5cm]{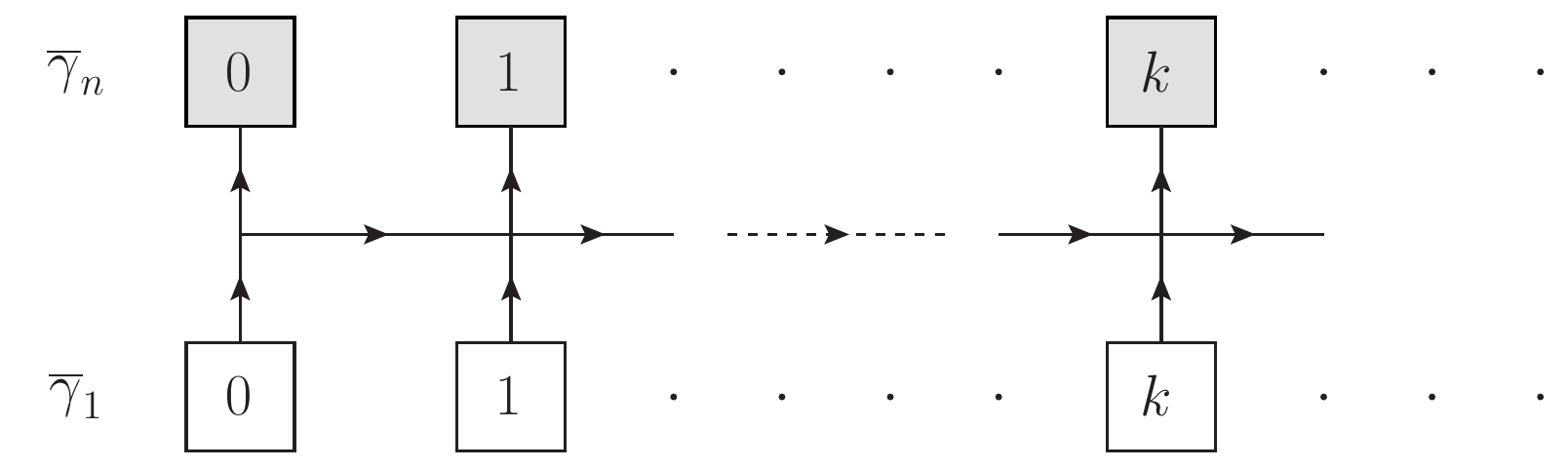} \end{center} 
\caption{\small How the RG recursion passes on perturbative and nonperturbative data
from the anomalous dimension $\og=\og_{1}$ to the instanton sectors $\sigma = 0,1,2,\dotsc,k$ of the $n$-th RG function $\og_{n}$ ($n\geq 2$).
The non-shaded boxes represent the (instanton) sectors of the anomalous dimension $\og$ and are the sources of information. 
The shaded boxes, which stand for the sectors of the $n$-th RG function $\og_{n}$, can only receive data.}
\label{RGFlow}
\end{figure}

\section{Communication between sectors}\label{sec:PinNP}               		          
We shall in this section study how the DSEs prescribe the sectors of the anomalous dimension's transseries $\wg$ to communicate. Resurgence in the 
strict sense that the perturbative sector determines all nonperturbative sectors should in principle only be guaranteed if the transseries ansatz is the 
correct one. However, as we show next, the key property of the Dyson-Schwinger (DS) operator it takes for a one-way transfer of data turns out to 
be \emph{preservation of sector homogeneity}. Although we know that our transseries ansatz is not the right one, we expect the essence of the results 
in this section to be true even for more elaborate transseries, as long as one can identify sectors defined by exponentials of transseries. 

\subsection{DS operator preserves sector homogeneity}
Strictly speaking, the DSEs 
\begin{equation}\label{Deisson}
\og = \Psi_{M}(\og), \hs{2}  \og = \Psi^{\graph{0}{0.04}{0.08}{pho}{0}}_{M_{1},\dotsc,M_{N}}(\og)
\end{equation}
make no sense in $\Tu_{\A}$ for the simple reason that in the algebra $\A$ there are no relations between the coefficients by the very definition
of a free algebra. However, it is clear how we should interprete (\ref{Deisson}): the abstract coefficients on both sides can be compared 
transmonomial-wise so as to yield algebraic equations in which the coefficients play the role of the unknowns. One should then at least in principle 
be able to solve the equation order by order and sector by sector. Before we proceed, we consider a little 

\begin{Lemma}\label{QH}
The operators $\Qa_{M_{1},\dotsc,M_{\ell}} \colon \Tu_{\A} \rightarrow \Tu_{\A}$ preserve sector homogeneity, ie 
\begin{equation}
 \Qa_{M_{1},\dotsc,M_{\ell}} \Ho(\Tu_{\A}) \subseteq \Ho(\Tu_{\A})
\end{equation}
for all $M_{1},\dotsc,M_{\ell} \in \N$ and $1 \leq \ell \leq N$. 
\end{Lemma}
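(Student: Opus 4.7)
The plan is to reduce the statement to a direct application of Lemma \ref{Alg} together with Proposition \ref{flow}, since $\Qa_{M_1,\dots,M_\ell}(f)$ is, by construction, a finite $\C$-linear combination of finite products of RG-iterates of $f$, and each of these operations preserves $\Ho(\Tu_\A)$.

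First, I would verify that the RG recursion starting from an arbitrary $f \in \Ho(\Tu_\A)$ stays in $\Ho(\Tu_\A)$. Writing the relevant operator as $f\M = f(s\der-1)$, I observe that $\M$ preserves $\Ho(\Tu_\A)$ because $\der$ does (Lemma \ref{Alg}) and subtracting the identity operator cannot spoil sector homogeneity. Multiplication by $f \in \Ho(\Tu_\A)$ then preserves $\Ho(\Tu_\A)$ because the latter is a subalgebra (again Lemma \ref{Alg}). Iterating, I obtain $f_n := (f\M)^{n-1} f \in \Ho(\Tu_\A)$ for every $n \geq 1$. This is exactly the argument underpinning Proposition \ref{flow}, carried out with $f$ in place of $\og$.

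Next, I would pass to the convolution coefficients
\begin{equation*}
 (f^{\star r}_\bullet)_n \;=\; \sum_{m_1+\dots+m_r=n} \frac{1}{m_1!}\,f_{m_1} \cdots \frac{1}{m_r!}\,f_{m_r},
\end{equation*}
which for each fixed $r,n \geq 0$ is a \emph{finite} sum of finite products of elements of $\Ho(\Tu_\A)$ with scalar coefficients in $\C$; scalar multiples leave the instanton grading untouched because $\Upsilon$ is $\C$-linear. The subalgebra property of $\Ho(\Tu_\A)$ thus places every $(f^{\star r}_\bullet)_n$ in $\Ho(\Tu_\A)$. Finally, $\Qa_{M_1,\dots,M_\ell}(f)$ is a finite sum of products of such convolutions weighted by the scalar constants $C_{(n_1,\dots,n_\ell)} \in \C$, so it too lies in $\Ho(\Tu_\A)$.

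The only point requiring any care is the bookkeeping of finiteness: because the upper summation limits $M_1,\dots,M_\ell$ are all finite integers, every sum and product appearing in $\Qa_{M_1,\dots,M_\ell}(f)$ involves only finitely many terms, so no convergence issue within the transseries algebra arises and the algebraic closure properties of $\Ho(\Tu_\A)$ suffice. There is no real obstacle beyond this observation; the lemma is essentially a corollary of Proposition \ref{flow}, packaged in the form needed for the sector-wise analysis of the DSEs in the next section.
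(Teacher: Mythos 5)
Your proof is correct and follows essentially the same route as the paper's: both reduce the claim to the facts that $\Ho(\Tu_{\A})$ is a $\der$-stable subalgebra (Lemma \ref{Alg}) and that the RG recursion therefore preserves sector homogeneity (Proposition \ref{flow}), so that the finite sums and products making up $\Qa_{M_{1},\dotsc,M_{\ell}}$ cannot leave $\Ho(\Tu_{\A})$. Your version is merely a little more explicit about the finiteness bookkeeping and about running the argument for an arbitrary $f\in\Ho(\Tu_{\A})$ rather than just $\og$, which is in fact what the lemma as stated requires.
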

\begin{proof}
 All operations performed by the operators $\Qa_{M_{1},\dotsc,M_{\ell}}$ are those of the differential algebra $(\Tu_{\A},\der)$, yielding linear 
 combinations of products of RG functions. 
 By Proposition \ref{flow}, all RG transseries are sector homogeneous. Because the set of such transseries $\Ho(\Tu_{\A})$ is a differential 
 subalgebra with derivation $\der$, we conclude that $\Qa_{M_{1},\dotsc,M_{\ell}}(\og) \in \Ho(\Tu_{\A})$ due to $\og \in \Ho(\Tu_{\A})$. 
\end{proof}

We remind the reader that $\W_{\sigma}f$ denotes the power series of the $\sigma$-th (non)perturbative sector of the transseries $f$ and that 
$[\mo_{2}^{s}]\W_{\sigma}f = f_{(\sigma,s)}$ is the coefficient associated with the double index $(\sigma, s) \in \N \times \Z$.   

\begin{Proposition}[Resurgence]\label{main}
For all $M_{1},\dotsc,M_{N} \geq 1$, both sides of the equation
\begin{equation}\label{dseq}
\og = \underbrace{A_{0}\mo_{2} + \sum_{\ell=1}^{N} \mo_{2}^{\ell+1} \Qa_{M_{1},\dotsc,M_{\ell}}(\og)}_{
= \Psi^{\graph{0}{0.04}{0.08}{pho}{0}}_{M_{1},\dotsc,M_{N}}(\og)}
\in \Ho(\Tu_{\A})  
\end{equation}
are sector-homogeneous transseries. Consequently, for each fixed sector $\sigma \geq 0$ and each $n \geq 1$, the equations
\begin{equation}\label{sy}
[\mo_{2}^{s}]\W_{\sigma}[\og] 
= [\mo_{2}^{s}]\W_{\sigma}\left[A_{0}\mo_{2} +  \Sum_{\ell=1}^{N} \mo_{2}^{\ell+1} \Qa_{M_{1},\dotsc,M_{\ell}}(\og) \right]
\in \A_{\sigma}   \hs{2} s \leq n
\end{equation}
describe a finite set of $n$ nonlinear difference equations for the coefficients of the anomalous dimension $\og$, where only coefficients of homogeneous
degree $\leq \sigma$ are involved. The analogous statement is true for the Kilroy DSE $\og = \mo_{2} \Qa_{M}(\og)$.   
\end{Proposition}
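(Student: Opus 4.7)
The plan is to observe that this proposition is essentially a direct corollary of Lemma \ref{QH} and Proposition \ref{flow}, together with the trivial fact that multiplication by $\mo_2^{\ell+1}$ and addition of constants in $\A_0$ do not disturb sector homogeneity. Once both sides of (\ref{dseq}) are placed in $\Ho(\Tu_{\A})$, the $\A_\sigma$-valued difference equations will fall out by coefficient-wise comparison, and the degree bound will be forced by the very definition of the instanton grading.

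First I would verify that both sides live in $\Ho(\Tu_{\A})$. The LHS is sector homogeneous by construction, as every generator $c_l$ is an eigenvector of $\Upsilon$ with eigenvalue $l_1$. For the RHS, the summand $A_0 \mo_2$ contributes only to the perturbative sector with coefficient $A_0 \in \C \subseteq \A_0$ and is therefore sector homogeneous; by Lemma \ref{QH} each $\Qa_{M_1,\dotsc,M_\ell}(\og)$ lies in $\Ho(\Tu_{\A})$; and multiplication by $\mo_2^{\ell+1}=\mo^{(0,\ell+1)}$ shifts only the second index, so the eigenvalue condition $\Upsilon(f_l)=l_1 f_l$ transfers intact from $f$ to $\mo_2^{\ell+1} f$. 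Finite sums of sector-homogeneous transseries being sector homogeneous then closes this step.

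Next I would project onto the $\sigma$-th sector. Because both sides lie in $\Ho(\Tu_{\A})$, the sector projections $\W_\sigma[\og]$ and $\W_\sigma[\Psi^{\graph{0}{0.04}{0.08}{pho}{0}}_{M_1,\dotsc,M_N}(\og)]$ are Hahn series in the monomial subgroup generated by $\mo_2$ with coefficients in $\A_\sigma$, so equating the $\mo_2^s$-coefficients yields an equality in $\A_\sigma$ for each $s$. By the defining property of the instanton grading, $\A_\sigma$ is spanned by monomials $c_{l^{(1)}}\cdots c_{l^{(k)}}$ with $\sum_i l_1^{(i)} = \sigma$ and each $l_1^{(i)} \geq 0$, which forces $l_1^{(i)} \leq \sigma$ for every factor. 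This is precisely the degree-$\leq \sigma$ restriction asserted.

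Finally I would argue finiteness. The operator $\Psi^{\graph{0}{0.04}{0.08}{pho}{0}}_{M_1,\dotsc,M_N}$ is assembled from bounded sums ($r_j \leq n_j \leq M_j$ and $1 \leq \ell \leq N$), finitely many products, and the derivation $\der$, whose action on a single coefficient is given by the finite local formula (\ref{deri}). Combined with grid-basedness, which makes each coefficient of a product or derivative of grid-based transseries a finite sum, it follows that every coefficient $[\mo_2^s]\W_\sigma$ of the right-hand side is a \emph{polynomial} in finitely many $c_l$'s. Restricting to $s \leq n$ then delivers the claimed $n$ nonlinear difference equations. The Kilroy case is verbatim the same argument applied to $\Psi_M$ of (\ref{DSKil}); it is essentially the single-skeleton ($N=1$, $\ell=1$) instance. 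The main obstacle, though a very mild one, is merely to confirm that no ingredient of the DS operator secretly breaks sector homogeneity, and for our ansatz this reduces to the harmless observation that $\mo_2$ has vanishing instanton degree.
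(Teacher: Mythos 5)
Your proposal is correct and takes essentially the same route as the paper's proof: both rest on Lemma \ref{QH} together with the observations that the coefficient $A_{0}$ of $A_{0}\mo_{2}$ lies in $\A_{0}$ and that multiplication by powers of $\mo_{2}$ (which only shifts the second index) cannot disturb sector homogeneity, after which the system (\ref{sy}) falls out by coefficient-wise comparison. You merely make explicit the details the paper leaves implicit, namely the $\mo_{2}$-shift argument, the degree-$\leq\sigma$ bound from the structure of $\A_{\sigma}$, and the finiteness of each coefficient as a polynomial in the $c_{l}$.
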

\begin{proof}
 The first equation for $\sigma=0,s=1$ is given by $c_{(0,1)}=A_{0}$ which determines first perturbative coefficient, where $A_{0} \in \R$ can be 
 identified with $A_{0}1 \in \A_{0}$. Because the rhs of (\ref{dseq}) is sector-homogeneous by Lemma \ref{QH}, one finds that for each fixed 
 sector $\sigma \geq 1$, the system (\ref{sy}) is made up only of coefficients from sectors $\leq \sigma$. The assertion also holds for the 
 Kilroy DSE because the arguments apply equally to $\og = \mo_{2} \Qa_{M}(\og)$.
\end{proof}

This means the perturbative sector is independent of all other sectors, ie for $\sigma=0$, the system of equations
\begin{equation}
[\mo_{2}^{s}]\W_{0}[\og] = [\mo_{2}^{s}]\W_{0}\left[A_{0}\mo_{2} +  \Sum_{\ell=1}^{N} \mo_{2}^{\ell+1} \Qa_{M_{1},\dotsc, M_{\ell}}(\og) \right] \in \A_{0}
 \hs{2} s \leq n,  
\end{equation}
determines the perturbative sector up to perturbative (loop) order $n$ and the infinite tower  
\begin{equation}\label{tow}
\begin{split}
 [\mo_{2}^{s}]\W_{1}[\og] &= [\mo_{2}^{s}]\W_{1}\left[A_{0}\mo_{2} +  \Sum_{\ell=1}^{N} \mo_{2}^{\ell+1} \Qa_{M_{1},\dotsc, M_{\ell}}(\og) \right]
\in \A_{1}  \\
 [\mo_{2}^{s}]\W_{2}[\og] &= [\mo_{2}^{s}]\W_{2}\left[A_{0}\mo_{2} +  \Sum_{\ell=1}^{N} \mo_{2}^{\ell+1} \Qa_{M_{1},\dotsc, M_{\ell}}(\og) \right]
\in \A_{2} \\ \hs{-2} \vdots & \hs{2} \vdots 
\end{split}
\end{equation}
contains (in principle) the corresponding equations for all nonperturbative sectors. To find the transseries solution of the proper anomalous dimension $\wg$, the limit 
$M_{1}, \dotsc, M_{N} \rightarrow \infty$ must be taken.  

\textsc{Figure} \ref{DSEFlow} shows schematically what this \emph{essentially} means: the perturbative sector $\W_{0}[\og]$ of the anomalous dimension $\og$ 
is the sole source of information and passes on purely perturbative data to all of its own instanton sectors $\W_{\sigma}[\og]$ for $\sigma \geq 1$ and 
thereby all instanton sectors of the higher RG function's transseries. 

\begin{figure}[ht]
\begin{center} \includegraphics[height=2.5cm]{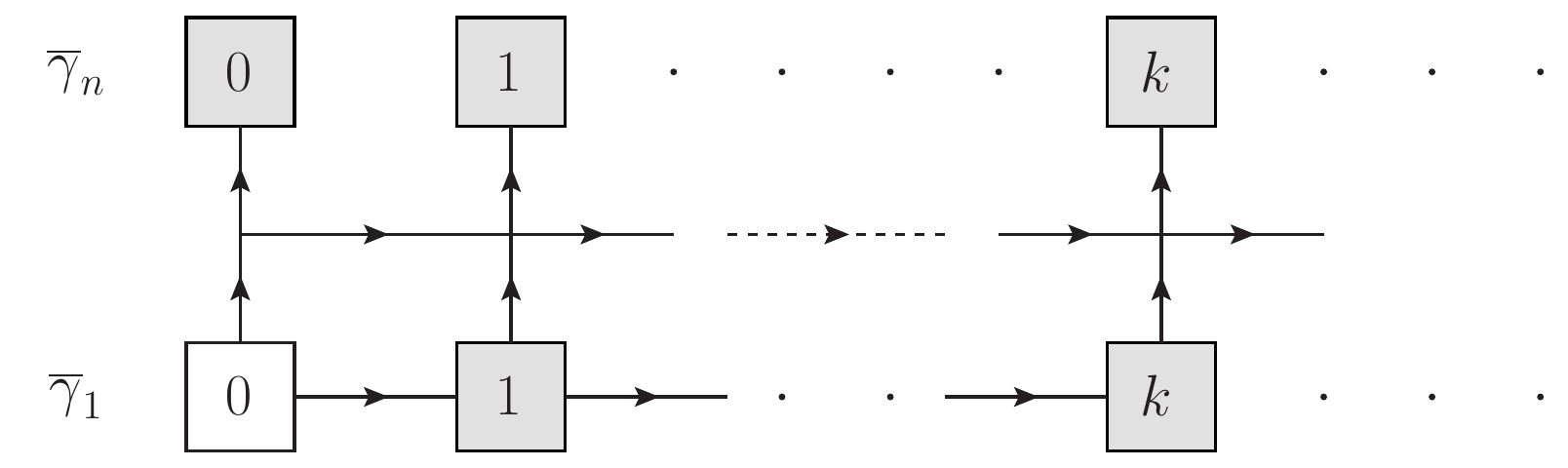} \end{center} 
\caption{\small How the DSE prescribes the perturbative sector of the anomalous dimension $\og_{1}$  
to pass on data to its own nonperturbative sectors and thereby to those of the $n$-th RG function $\og_{n}$ ($n\geq2$). The perturbative sector of 
the anomalous dimension, represented by the single non-shaded box in the lower left corner, is the only source of information.}
\label{DSEFlow}
\end{figure}

To see this concretely, we revisit the Kilroy DSE (\ref{Kil'}) as an equation in $\Tu_{\A}$, 
\begin{equation}\label{Kil''}
 \og = \mo_{2} \lim_{M \rightarrow \infty}\Qa_{M}(\og) = \mo_{2} \left[ C_{0} + C_{1} \og 
 + C_{2} \left( \frac{\og_{2}}{2!} + \og \cdot \og \right) + \dotsc  \right]  .
\end{equation}
Recall that this is a fixed point equation for the anomalous dimension $\og$ because all RG functions $\og_{n}$ depend on $\og$ through the RG 
recursion $\og_{n+1} = \Rn^{n}(\og)$ which on the level of coefficients takes the form 
\begin{equation}\label{RGcoef}
\og_{n+1}|_{(\sigma,t)} = (\Rn \og_{n})|_{(\sigma,t)} 
= \sum_{\sigma'+\sigma''=\sigma} \ \sum_{t'+t''=t}  c_{(\sigma',t')} ( \M \og_{n})|_{(\sigma'',t'')} \in \A_{\sigma},
\end{equation}
where  
\begin{equation}
 ( \M \og_{n})|_{(\sigma'',t'')} = [s(\sigma''c+t'')-1]\og_{n}|_{(\sigma'',t'')} 
 + s \sigma'' \sum_{i=1}^{m} i b_{i} \og_{n}|_{(\sigma'',t''+i)}  
\end{equation}
and the coefficients $c_{(\sigma,t)}$ are those of the anomalous dimension 
\begin{equation}
 \og = \sum_{(\sigma,t) \geq (0,0)} c_{(\sigma,t)} \mo_{1}^{\sigma}\mo_{2}^{t}
\end{equation}
as a transseries in $\Tu_{\A}$. In terms of coefficients, the DSE (\ref{Kil''}) reads
\begin{equation}\label{Kil3'}
 c_{(\sigma,t)} = C_{0} \delta_{\sigma,0} \delta_{t,1} + C_{1} c_{(\sigma, t-1)} 
 + C_{2} \left( \frac{1}{2!} \og_{2}|_{(\sigma, t-1)} + (\og \cdot \og)|_{(\sigma, t-1)} \right) + \dotsc ,
\end{equation}
where the multiplication by $\mo_{2}$ in the DSE manifests itself as an index shift in the second component of the index pair, which is due to
\begin{equation}
 \mo_{2}^{s} f = \mo_{2}^{s} \sum_{(\sigma,t) \geq (\sigma_{0},t_{0})} f_{(\sigma,t)} \mo_{1}^{\sigma} \mo_{2}^{t+s}
 = \sum_{(\sigma,t) \geq (\sigma_{0},t_{0}+s)} f_{(\sigma,t-s)} \mo_{1}^{\sigma} \mo_{2}^{t}
\end{equation}
for a transseries $f =\sum_{(\sigma,t) \geq (\sigma_{0},t_{0})} f_{(\sigma,t)} \mo_{1}^{\sigma} \mo_{2}^{t}$ with 
$\supp(f) \subseteq \J_{(\sigma_{0},t_{0})}$ and entails $(\mo_{2}^{s} f)|_{(\sigma,t)} =0$ if $t < t_{0} +s$.

In combination with the RG recursion (\ref{RGcoef}), (\ref{Kil3'}) lets us clearly see that only coefficients from sectors $\leq \sigma$ are 
in a lucky case involved in determining the coefficient $c_{(\sigma,t)}$, where a 'lucky case' is given if a correct (or less wrong) transseries 
ansatz is being used.

\section{Getting the ansatz right}\label{sec:ansatz} 				          
The attentive reader knows by now that this section's title is deceptive: we prove in this section that our transseries algebra $\Tu$ contains no 
nonperturbative transseries solution for the Yukawa model and probably also none for the photon DSE. 

The parameters we are allowed to play with in $\Tu$ are the supportive grid for $\og$ as well as the instanton
polynomial $P(z)$ in the exponential of the transmonomial $\mo_{1}$.
We shall also explain why adding logarithmic monomials makes no difference. 

\subsection{First sector} 
We start with the first coefficient of the first nonperturbative sector ($\sigma = 1$) and evaluate the Kilroy DSE (\ref{Kil3'}) at $(\sigma,t)=(1,0)$ which gives
\begin{equation}\label{Kil3b}
 c_{(1,0)} = C_{1} c_{(1,-1)} + C_{2} \left( \frac{1}{2!}\og_{2}|_{(1,-1)} + (\og \cdot \og)|_{(1,-1)} \right) + \dotsc 
\end{equation}
The compute the rhs of this equation, we consider
\begin{equation}\label{og2}
\og_{2}|_{(1,t)}= (\Rn \og)|_{(1,t)} 
 = \sum_{t'+t''=t} \left[ c_{(1,t')} ( \M\og)|_{(0,t'')} + c_{(0,t')} ( \M \og)|_{(1,t'')} \right],
\end{equation}
with 
\begin{equation}\label{og2'}
\begin{split}
 ( \M \og)|_{(1,t'')} &= [ s (c +t'')-1 ] c_{(1,t'')} + s \left[ b_{1} c_{(1,t''+1)} + \dotsc + m b_{m} c_{(1,t''+ m)} \right] \\
 (\M \og)|_{(0,t'')}  &= ( s t''-1 )c_{(0,t'')} .
\end{split}
\end{equation}
The rhs of (\ref{Kil3b}) requires us to evaluate (\ref{og2}) at $t=-1$ which has a vanishing first term under the sum sign,
\begin{equation}
 \sum_{t'+t''=-1} c_{(1,t')} ( \M\og)|_{(0,t'')} = 0
\end{equation}
because either $t'$ of $t''$ must be negative. The remainder therefore consists of the sum  
\begin{equation}\label{og2b}
\og_{2}|_{(1,-1)} 
=  c_{(0,0)} ( \M \og)|_{(1,-1)} + c_{(0,1)} ( \M\og)|_{(1,-2)}  + \dotsc + c_{(0,m-1)} ( \M\og)|_{(1,-m)} 
\end{equation}
which vanishes if $m=1$ due to $c_{(0,0)}=0$ but is nonzero apriori if we choose $m \geq 2$. 
The next lemma proves that if we dismiss this latter choice and use $m=1$, the entire first sector vanishes.

\begin{Lemma}[First sector trivial]\label{1sttri}
 If the instanton polynomial of the Kilroy model is given by $P(z)=b_{1}z$, and hence $\deg P(z)=m=1$, then the entire first nonperturbative sector 
 ($\sigma=1$) of the anomalous dimension $\og$ vanishes, that is, $c_{(1,t)}=0$ for all $t \geq 0$. The analogous assertion is true for the photon's 
 anomalous dimension if $\deg P(z) = m \leq 2$. 
\end{Lemma}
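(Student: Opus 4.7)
The plan is to proceed by induction on $t \geq 0$, exploiting sector homogeneity (Proposition \ref{main}) and careful bookkeeping of which $c_{(1,i)}$ factors can appear in each term on the RHS of the DSE. By Propositions \ref{flow} and \ref{main}, every monomial in $\Qa_{M_1,\dotsc,M_\ell}(\og)|_{(1,t-\ell-1)}$ decomposes as a product of a single factor $c_{(1,i)}$ (coming from exactly one RG function in the product $\prod_j (\og^{\star r_j}_\bullet)_{n_j}$) together with several factors $c_{(0,k)}$ with $k \geq 1$, since $c_{(0,0)} = 0$. The task is thus to bound the index $i$ of that distinguished sector-$1$ factor.

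The key auxiliary bound I would establish first is $N_{\og_n}(u) \leq u + (n-1)(m-1)$, where $N_{\og_n}(u)$ denotes the maximum second index of any $c_{(1,i)}$ appearing in $\og_n|_{(1,u)}$. This is proved by induction on $n$ from (\ref{og2'}): the operator $\M = s\der - 1$ raises this index by at most $m$, while left-multiplication by $\og$ mandates an additional $c_{(0,k)}$ factor with $k \geq 1$, lowering the remaining second-index budget by at least one. The decisive feature is that for $m=1$ the bound collapses to $i \leq u$ independently of $n$. Analogously, one verifies that $\og_n|_{(0,v)} = 0$ for $v < n$, which will constrain the sector-$0$ factors.

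The base case $t=0$ follows from an impossible index balance: with the sector-$1$ factor at $(1,u)$, $u \geq 0$, the $R-1$ sector-$0$ factors at $(0,v_k)$ with $v_k \geq n_k \geq 1$, and the required second-index total $u + \sum_k v_k = -\ell - 1$, no admissible configuration exists for $\ell = 0$ (Yukawa) or $\ell \geq 1$ (photon). For the induction step I assume $c_{(1,j)} = 0$ for $0 \leq j < t$ and use $u + \sum_k v_k = t - \ell - 1$ with $v_k \geq 1$, giving $u \leq t - \ell - R$. Combined with $i \leq u + (n_*-1)(m-1)$ where $n_*$ is the RG index of the distinguished factor, this yields $i < t$ for the Yukawa case with $m=1$, and likewise for the photon case with $m \leq 2$ thanks to the extra shift $\mo_2^\ell$, $\ell \geq 1$. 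The induction hypothesis then forces every contribution to vanish, so $c_{(1,t)} = 0$.

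The main obstacle I expect is the photon case with $m=2$ and $R=1$: a single RG function $\og_{n_*}|_{(1,u)}$ of high index $n_*$ contributes with $u = t-\ell-1$, and the bound $i \leq u + n_* - 1 = t + n_* - \ell - 2$ only ensures $i < t$ when $n_* \leq \ell + 1$. To absorb the configurations with $n_* > \ell + 1$ one must either argue that the effective summation range over $n_j$ in $\Qa_{M_1,\dotsc,M_\ell}$ contributing to the coefficient $(1,t)$ is bounded by $\ell + 1$, or invert the resulting formal linear system in the $c_{(1,i)}$'s, exploiting the fact that the coefficient of $c_{(1,t)}$ on the RHS is never $1$. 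This interplay between the skeleton order $\ell$, the RG-function index $n_*$, and the degree $m$ is precisely where the bound $m \leq 2$ for the photon becomes sharp and where the argument for the Yukawa fermion with $m=1$ breaks once one tries to push it to $m \geq 2$.
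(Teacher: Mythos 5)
Your treatment of the Yukawa case ($m=1$) is sound and is essentially the paper's own argument in different clothing: where you propagate the bound $N_{\og_n}(u)\leq u+(n-1)(m-1)$ on the admissible indices of the sector-one generators, the paper runs a double induction (outer on a threshold $t_*$ below which $c_{(1,\cdot)}$ vanishes, inner on $n$) propagating the statement $\og_n|_{(1,j)}=0$ for $j$ below that threshold; by sector homogeneity each coefficient of $\W_1[\og_n]$ is a linear combination of the $c_{(1,i)}$ with coefficients in $\A_0$, so the two formulations are equivalent, and your auxiliary bound is a correct, slightly sharper quantitative version of what the paper's inner induction uses. Your base case via the impossible index balance matches the paper's use of Proposition \ref{Psupp} to get $\og_n|_{(1,-1)}=0$.

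The gap is exactly where you suspect it: the photon case with $\deg P(z)=2$ is not proved, and neither of your two escape routes is viable as sketched. The first (bounding the effective RG index $n_*$ by $\ell+1$) is simply false: a direct computation from the recursion gives $\og_2|_{(1,-1)}=2sb_2\,c_{(0,1)}c_{(1,0)}$ and hence $\og_3|_{(1,-2)}\ni(2sb_2)^2c_{(0,1)}^2c_{(1,0)}$, and more generally $\og_{n}|_{(1,t-\ell-1)}$ picks up $c_{(1,i)}$ with $i$ as large as $t-\ell-2+n_*$; since the limit $M_j\to\infty$ makes $n_*$ unbounded while $\ell\leq N$ is fixed, every equation of the system couples $c_{(1,t)}$ to infinitely many $c_{(1,i)}$ with $i\geq t$. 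The second route (inverting the linear system) is then the only option, but it requires controlling an infinite, non-triangular system $c_{(1,t)}=\sum_i\lambda_{ti}c_{(1,i)}$ whose entries are themselves infinite sums in the skeleton limit, and you have not done this. For fairness you should know that the paper's own proof dispatches the photon case in one sentence ("the analogue of the above argument with a minor modification"), and on inspection its induction suffers from precisely the degradation you identify: for $m=2$ the operator $\M$ reaches up to index $t''+2$ while each accompanying factor $c_{(0,t')}$ buys back only one unit, so the range of vanishing of $\og_n|_{(1,\cdot)}$ shrinks by one unit per RG step and is eventually outrun for large $n$; the single extra shift from $\mo_2^{\ell+1}$ compensates only one application of $\M$, not their iteration. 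So your "obstacle" is a genuine issue rather than a defect of your bookkeeping, but as it stands your proof establishes the lemma only for the Yukawa fermion and for the photon with $m=1$.
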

\begin{proof}
We draw on (\ref{su}) in Proposition \ref{Psupp} and find for $\supp (\wg) \subseteq \Pe_{1} \cup \J_{(1,q)}$ and $m=1$,   
\begin{equation}\label{su'}
\supp (\og_{n}) \subseteq \Pe_{n} \cup \bigcup_{j=1}^{n} \J_{(j,j[q-1]+1)} \ , \hs{1}  n \geq 1,
\end{equation}
which says that $\og_{n}|_{(1,-1)}=0$ for all $n \geq 1, q \geq 0$ because (\ref{su'}) dictates $\mo_{1} \mo_{2}^{-1} \notin \supp (\og_{n})$ for all 
$n \geq 1$ (see also \textsc{Figures} \ref{Rsupport} \& \ref{rsupport}). 
Because the rhs of (\ref{Kil3'}) has only products of RG functions, it follows $c_{(1,0)}=0$ due to
\begin{equation}\label{ends}
 (\og_{n_{1}} \cdot \dotsc \cdot \og_{n_{\ell}})|_{(1,-1)} = 0
\end{equation}
for all $n_{1}, \dotsc , n_{\ell} \geq 1$. So in summary, for $t=0$ we have found $c_{(1,t)}=0$ due to $\og_{n}|_{(1,t-1)}=0$ for all $n \geq 1$. 
Note that on the level of the coefficients, the RG recursion takes the form
\begin{equation}\label{og}
\og_{n+1}|_{(1,t)}= (\Rn \og_{n})|_{(1,t)} 
 = \sum_{t'+t''=t} \left[ c_{(1,t')} ( \M\og_{n})|_{(0,t'')} + c_{(0,t')} ( \M\og_{n})|_{(1,t'')} \right],
\end{equation}
with 
\begin{equation}\label{og'}
\begin{split}
( \M \og_{n})|_{(1,t'')} &= [ s (c +t'')-1 ]\og_{n}|_{(1,t'')} + s b_{1} \og_{n}|_{(1,t''+1)}  \\
 (\M \og_{n})|_{(0,t'')} &= ( s t''-1 )\og_{n}|_{(0,t'')} .
\end{split}
\end{equation}
So assume now that $c_{(1,t)}=0$ and $\og_{n}|_{(1,t-1)}=0$ for $t \leq t_{*}$ and all $n \geq 1$ (we have so far shown this for $t_{*}=0$). Then 
(\ref{og}) and (\ref{og'}) imply   
\begin{equation}\label{og2''}
\og_{2}|_{(1,t)}=  \sum_{t'+t''=t} \left[ c_{(1,t')} ( \M\og)|_{(0,t'')} + c_{(0,t')} ( \M\og)|_{(1,t'')} \right] = 0  \hs{1} \text{for all }t \leq t_{*},
\end{equation}
because this expression makes only use of coefficients $c_{(1,t)}$ with $t\leq t_{*}$. Moreover, by induction on $n$, these 
equations imply $\og_{n}|_{(1,t)}=0$ for $t \leq t_{*}$ and all $n \geq 1$ by the same argument. This also entails  
\begin{equation}\label{ends'}
 (\og_{n_{1}} \cdot \dotsc \cdot \og_{n_{\ell}})|_{(1,t)} = 0 \hs{2} \mbox{if} \hs{0.4} t \leq t_{*}
\end{equation}
for all $n_{1}, \dotsc , n_{\ell} \geq 1$. Then the DSE in (\ref{Kil3'}) dictates $c_{(1,t+1)}=0$ for all $t \leq t_{*}$ and thus   
$c_{(1,t)}=0$ for all $t \leq t_{*}+1$ as a consequence.
The same thing happens in QED for the photon's anomalous dimension: the photon DSE (\ref{dseq})
\begin{equation}\label{dyson1'}
\og = A_{0}\mo_{2} + \lim_{M_{1},\dotsc,M_{N} \rightarrow \infty} \sum_{\ell=1}^{N} \mo_{2}^{\ell+1} \Qa_{M_{1},\dotsc,M_{\ell}}(\og) 
\hs{1} (\mbox{photon DSE in $\Tu_{\A}$})
\end{equation}
reads 
\begin{equation}\label{dyson1''}
c_{(1,t)} = \lim_{M_{1},\dotsc, M_{N} \rightarrow \infty} 
\left[ (\Qa_{M_{1}}\og)|_{(1,t-2)} + \dotsc + (\Qa_{M_{1},\dotsc,M_{N}} \og)|_{(1,t-N-1)} \right] \hs{0.5} (\mbox{photon}) 
\end{equation}
for the coefficients of the first sector, where we see again the index shift brought about by multiplication by $\mo_{2}^{\ell+1}$ for 
$\ell=1, \dotsc , N$. Because the RG recursion differs from the Yukawa case only in the parameter $s$ hidden in $\M = s \der -1$, 
one obtains the same result for QED by the analogue of the above argument with a minor modification to the first line of (\ref{og'}) which gives
\begin{equation}
 ( \M \og_{n})|_{(1,t'')} = [ s (c +t'')-1 ]\og_{n}|_{(1,t'')} + s \left[ b_{1} \og_{n}|_{(1,t''+1)} + 2 b_{2} \og_{n}|_{(1,t''+ 2)} \right].
\end{equation}
The last term allows us to let $\deg P(z) \leq 2$ and still find the same negative result. 
\end{proof}

\subsection{Higher sectors}\label{subsec:neg}
However, in both cases a vanishing first sector entails much more, namely that \emph{all} nonperturbative sectors are absent 
if the degree of the instanton polynomial is not large enough. 

\begin{Proposition}[All sectors trivial]\label{trivi}
All nonperturbative sectors of the anomalous dimension's transseries vanish if the degree of the instanton polynomial $P(z)$ is too small, 
that is, if
\begin{enumerate}
 \item [(i)]  $\deg P(z) = 1$ in the case of the Kilroy DSE and 
 \item [(ii)] $\deg P(z) \leq 2$ in the case of the photon DSE.
\end{enumerate}
\end{Proposition}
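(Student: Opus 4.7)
The plan is to perform an outer induction on the sector index $\sigma \geq 1$, taking Lemma \ref{1sttri} as the base case $\sigma = 1$, together with an inner induction on the $\mo_2$-index $t$ within sector $\sigma$ that mirrors the secondary induction already carried out in the proof of Lemma \ref{1sttri}. The whole argument is made to work by exploiting the instanton grading $\A = \bigoplus_{\sigma \geq 0} \A_\sigma$ introduced in \S\ref{sec:RGflow} to reduce, at each outer induction step, a nonlinear fixed-point equation to a linear recursion in a single batch of unknowns.

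For the outer induction step, assume $\W_\tau[\og] = 0$ for all $1 \leq \tau \leq \sigma - 1$. The key structural reduction comes from sector homogeneity (Proposition \ref{flow} combined with Lemma \ref{QH}): the coefficient $[\mo_2^s]\W_\sigma[\Qa_{M_1,\dotsc,M_\ell}(\og)]$ lies in $\A_\sigma$ and is therefore a polynomial sum of monomials $c_{l^{(1)}} \cdots c_{l^{(k)}}$ with $l^{(1)}_1 + \cdots + l^{(k)}_1 = \sigma$. By the outer induction hypothesis, every monomial containing a factor $c_l$ with $1 \leq l_1 \leq \sigma - 1$ vanishes identically on $\og$; the surviving monomials must contain exactly one factor from sector $\sigma$ and all remaining factors from the perturbative sector. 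Consequently, the sector-$\sigma$ component of either DSE becomes \emph{linear} in the unknowns $\{c_{(\sigma,t)}\}_{t \geq 0}$, with coefficients determined purely by perturbative data.

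The inner induction on $t$ then proceeds exactly as in Lemma \ref{1sttri}. The base case uses the support bounds of Proposition \ref{Psupp} applied to the supportive grid $\J_{(0,1)}$, which yield $\og_n|_{(\sigma,t)} = 0$ for $t$ below a threshold set by $m = \deg P(z)$; this forces $c_{(\sigma, 0)} = 0$ directly from the DSE. The inductive step assumes $c_{(\sigma,t')} = 0$ for $t' \leq t_*$, uses the RG recursion (\ref{RGcoef})--(\ref{og'}) and both induction hypotheses to deduce $\og_n|_{(\sigma, t')} = 0$ for $t'$ in the relevant range (and hence the vanishing of all products of RG functions needed on the right-hand side of the DSE at position $(\sigma, t_*+1)$), and concludes $c_{(\sigma, t_*+1)} = 0$.

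The main obstacle is the tension between the upward $\mo_2$-shift of size $m$ introduced by the operator $\M$ via the coefficients $b_i$ of the instanton polynomial $P(z)$, and the downward $\mo_2$-shift of size $\ell + 1$ supplied by the prefactor $\mo_2^{\ell+1}$ of the photon DSE (or simply by $\mo_2$ in the Kilroy DSE). The inner induction closes only when the downward shift strictly exceeds the net upward shift, which forces $m = 1$ in the Kilroy setting (where only $\ell = 1$ appears) and $m \leq 2$ in the photon setting (where the smallest accessible prefactor power is $\mo_2^2$). As soon as $m$ exceeds these thresholds, coefficients $c_{(\sigma, t')}$ with $t' > t$ re-enter the equation for $c_{(\sigma, t)}$ through the upward shifts in $\M$, the triangular structure on which the secondary induction rests collapses, and the argument fails --- leaving open the possibility of a nontrivial transseries in precisely the regime excluded by (i) and (ii).
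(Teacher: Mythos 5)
Your proposal is correct and follows essentially the same route as the paper's proof: an outer induction on the sector index $\sigma$ anchored at Lemma \ref{1sttri}, an inner induction on $t$ driven by the vanishing of coefficients at negative $\mo_{2}$-index, and the observation that the induction hypothesis collapses the convolution sums so that only one factor per monomial can come from sector $\sigma$. Your explicit framing of this collapse via the instanton grading (making the sector-$\sigma$ equations linear in $\{c_{(\sigma,t)}\}$) is only a presentational variant of the paper's direct reduction of the RG recursion to the two surviving terms in (\ref{rek'}), and your closing analysis of the $m$-versus-$\mo_{2}^{\ell+1}$ shift matches the paper's own explanation of why the thresholds in (i) and (ii) arise.
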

\begin{proof}
We start with the Kilroy case (i) and let $\deg P(z)=m=1$. On account of Lemma \ref{1sttri}, the first nonperturbative sector vanishes completely. 
We now perform the induction step with respect to $\sigma \geq 1$ to show that 
\begin{equation}\label{Kil5}
 c_{(\sigma,0)} = C_{1} \underbrace{c_{(\sigma,-1)}}_{=0} + C_{2} \left[ \frac{1}{2!}\og_{2}|_{(\sigma,-1)} 
 + (\og \cdot \og)|_{(\sigma,-1)} \right] + \dotsc 
\end{equation}
vanishes and then $c_{(\sigma,t)}=0$ as a consequence for all $\sigma \geq 1, t \geq 0$. Consider
\begin{equation}\label{ogs}
\og_{n+1}|_{(\sigma,t)}= (\Rn \og_{n})|_{(\sigma,t)} = \sum_{\sigma' + \sigma'' = \sigma} \  
\sum_{t'+t''=t} c_{(\sigma',t')} ( \M\og_{n})|_{(\sigma'',t'')} .
\end{equation}
For the induction, we assume $\og_{n}|_{(\tau,t)} =0$ for all $n \geq 1$ and $\tau$ with $1 \leq \tau < \sigma$ and all $t$ (it is true for $\sigma=2$ by 
Lemma \ref{1sttri}). Consequently, the sum in (\ref{ogs}) shrinks giving
\begin{equation}\label{rek'}
\og_{n+1}|_{(\sigma,t)} 
=  \sum_{t'+t''=t} \left[ c_{(\sigma,t')} ( \M\og_{n})|_{(0,t'')} + c_{(0,t')} ( \M\og_{n})|_{(\sigma,t'')}  \right].
\end{equation}
We first consider this expression for $n=1$ to compute $\og_{2}|_{(\sigma,-1)}$ on the rhs of (\ref{Kil5}). To this end, first note that, on the face of it,
only the second term under the summation sign in (\ref{rek'}) survives due to $c_{(\sigma,t)}=0=(\M\og)|_{(0,t)}$ if $t <0$, the former by definition and the latter 
by $(\M\og)|_{(0,t)}$ being purely perturbative, see the second line of (\ref{og'}).
Therefore, we find
\begin{equation}\label{rek''}
\og_{2}|_{(\sigma,-1)} 
= c_{(0,1)} ( \M\og)|_{(\sigma,-2)} + c_{(0,2)} ( \M\og)|_{(\sigma,-3)} + c_{(0,3)} ( \M\og)|_{(\sigma,-4)} +  \dotsc
\end{equation}
However, these terms also vanish due to 
\begin{equation}
 ( \M\og)|_{(\sigma,t)}= ( s [\sigma c +t]-1 )c_{(\sigma,t)} + s \sigma b_{1} c_{(\sigma,t+1)} = 0 \hs{2} \mbox{if} \hs{0.5} t \leq -2
\end{equation}
which in turn originates in the fact that $c_{(\sigma,t)}=0$ for all $t <0$. 
Thus we find $\og_{2}|_{(\sigma,-1)}=0$ and likewise $\og_{n+1}|_{(\sigma,-1)}=0$ which follows by induction on $n$ from 
\begin{equation}\label{rek2}
\og_{n+1}|_{(\sigma,-1)} 
= c_{(0,1)} ( \M\og_{n})|_{(\sigma,-2)} + c_{(0,2)} ( \M\og_{n})|_{(\sigma,-3)}  + c_{(0,3)} ( \M\og_{n})|_{(\sigma,-4)} + \dotsc 
\end{equation}
Because the coefficients of products of RG functions with double index $(\sigma,-1)$ vanish, 
\begin{equation}
 (\og_{n_{1}} \cdot \dotsc \cdot \og_{n_{\ell}})|_{(\sigma,-1)} = 0 
\end{equation}
we have $c_{(\sigma,0)}=0$ in (\ref{Kil5}). The remainder of the argument is analogous to the proof of Lemma \ref{1sttri}: 
assume $c_{(\sigma,t)}=0$ for all $t \leq t_{*}$ (we have shown this for $t_{*}=0$). Note that 
\begin{equation}
 ( \M\og)|_{(\sigma,t)}= ( s [\sigma c + t]-1 )c_{(\sigma,t)} + s \sigma b_{1} c_{(\sigma,t+1)} = 0 \hs{2} \mbox{if} \hs{0.5} t \leq t_{*}-1
\end{equation}
implies $\og_{2}|_{(\sigma,t)}=0$ due to 
\begin{equation}
\og_{2}|_{(\sigma,t)} 
= c_{(0,1)} ( \M\og)|_{(\sigma,t-2)} + c_{(0,2)} ( \M\og)|_{(\sigma,t-3)} + c_{(0,3)} ( \M\og)|_{(\sigma,t-4)} +  \dotsc  = 0
\end{equation}
if $t \leq t_{*}$ and then $\og_{n+1}|_{(\sigma,t)}=0$ inductively on $n$ for all $t \leq t_{*}$. The DSE 
\begin{equation}\label{Kil6}
 c_{(\sigma,t_{*}+1)} = C_{1} c_{(\sigma,t_{*})} + C_{2} 
 \left[ \frac{1}{2!}\og_{2}|_{(\sigma,t_{*})} + (\og \cdot \og)|_{(\sigma,t_{*})} \right] + \dotsc 
\end{equation}
consequently tells us $c_{(\sigma,t_{*}+1)}=0$. This finishes the proof for the Kilroy case. 

The case (ii) of the photon DSE is analogous and starts with the fact that according to (\ref{dyson1'}), one needs the coefficients 
$\og_{n}|_{(\sigma,-j)}$ for $j=2, \dotsc , N+1$ to compute $c_{(\sigma,0)}$. If $m \leq 2$, then these coefficients can be shown to vanish: first one 
finds that  
\begin{equation}\label{ogsi}
 ( \M \og_{n})|_{(\sigma,t'')} = [ s (c \sigma +t'')-1 ]\og_{n}|_{(\sigma,t'')} 
 + s \sigma \left[ b_{1} \og_{n}|_{(\sigma,t''+1)} + 2 b_{2} \og_{n}|_{(\sigma,t''+2)} \right]
\end{equation}
is zero for $n=1$ if $t'' \leq -3$. This is relevant because (\ref{rek'}) asks for (\ref{ogsi}) to be evaluated at such index values to 
compute $\og_{2}|_{(\sigma,-j)}$ for $j=2, \dotsc , N+1$. By induction on $n$ then follows 
$\og_{n}|_{(\sigma,-j)}=0$ for all $n$ and thus $c_{(\sigma,0)}=0$ by (\ref{dyson1''}). By induction on $t$ in $c_{(\sigma,t)}$, one finally ends up concluding
$c_{(\sigma,t)}=0$ for all $t$.
\end{proof}

Within the narrow context of our transseries ansatz, this negative result is related to the 'skeleton operation' of multiplying the operators 
$\Qa_{M_{1}, \dotsc ,M_{n}}$ by powers of $\mo_{2} = z^{-1}$, by itself a contractive operation. It leads to a shift in the second index of the 
coefficients that can only be compensated for by increasing the degree of the instanton polynomial $P(z)$. 

\subsection{Kilroy ODE and transseries ansatz} 
The Yukawa model is considerably more amenable because the ODE (\ref{Kilode}) is much better suited to put various transseries ans\"atze to the test 
and vary their parameters. Given the above negative results, a pressing question is whether increasing the degree of the instanton polynomial might help,
as suggested by Proposition \ref{trivi}. In the transseries algebra $\Tu_{\A}$, we write this equation as 
\begin{equation}\label{kilroytrans}
 \og(z) + \Rn \og(z)  = \frac{1}{2}z^{-1} .
\end{equation}
It has a simple message, namely that the lhs has a vanishing nonperturbative part because the rhs has none. 

For the first instanton sector, the coefficients must satisfy 
\begin{equation}\label{1stKil}
(\og + \Rn \og)|_{(1,t)}  = c_{(1,t)} + \sum_{t'+t''=t} \left[ c_{(1,t')} ( \M\og )|_{(0,t'')} + c_{(0,t')} ( \M\og)|_{(1,t'')} \right] = 0,
\end{equation}
which we will now explore. Before we start, we remind ourselves of (\ref{og2'}) with $s=2$
\begin{equation}
\begin{split}
 ( \M \og)|_{(1,t'')} &= [ 2 (c +t'')-1 ] c_{(1,t'')} + 2 \left[ b_{1} c_{(1,t''+1)} + \dotsc + m b_{m} c_{(1,t''+ m)} \right] \\
 (\M \og)|_{(0,t'')}  &= ( 2 t''-1 )c_{(0,t'')} .
\end{split}
\end{equation}
For negative $t$, (\ref{1stKil}) shrinks to 
\begin{equation}\label{1stKil'}
0=(\og + \Rn \og)|_{(1,t)}  = \sum_{t'+t''=t} c_{(0,t')} ( \M\og )|_{(1,t'')}  \hs{2} (t<0),
\end{equation}
on account of $( \M \og)|_{(0,t'')}=0$ and $c_{(1,t')}=0$ for $t',t''<0$. We assume now that $\deg P(z) = m \geq 2$. From this we get a system of equations,
\begin{equation}
\begin{split}
 0 &= (\og + \Rn \og)|_{(1,-m+1)} =  c_{(0,1)} ( \M\og )|_{(1,-m)} \\
 0 &= (\og + \Rn \og)|_{(1,-m+2)} =  c_{(0,1)} ( \M\og )|_{(1,-m+1)}  + c_{(0,2)} ( \M\og )|_{(1,-m)}  \\
  & \vdots \hs{4}  \vdots  \\
 0 &= (\og + \Rn \og)|_{(1,-1)} =  c_{(0,1)} ( \M\og )|_{(1,-2)} + \dotsc + c_{(0,m-1)} ( \M\og )|_{(1,-m)} .
\end{split}
\end{equation}
Because none of the perturbative coefficients vanish, this implies $( \M\og )|_{(1,t)}= 0$ for all $t \leq -2$. This in turn implies 
\begin{equation}
\begin{split}
 0 = (\M \og)|_{(1,-m)} &=  2 m b_{m}c_{(1,0)} \\
 0 = (\M \og)|_{(1,-m+1)} &=  2 [ (m-1) b_{m-1} c_{(1,0)}  + m b_{m} c_{(1,1)} ] \\
  & \vdots \hs{4}  \vdots  \\
 0 = (\M \og)|_{(1,-2)} &=  2 [ 2 b_{2} c_{(1,0)} + \dotsc  + m b_{m} c_{(1,-2+m)} ]
\end{split}
\end{equation}
which enforces $c_{(1,t)}=0$ for $t=0, \dotsc , m-2$ because by assumption $b_{m} \neq 0$. This is the start of an induction because we have shown
\begin{equation}
(\M \og)|_{(1,t)}=0  \hs{2} c_{(1,t+m)}=0
\end{equation}
for all $t \leq t_{*}$ if $t_{*}=-2$. Assume now that it holds for some $t_{*} \geq -2$. Consider
\begin{equation}
 (\M \og)|_{(1,t_{*}+1)} =  2 [ b_{1} c_{(1,t_{*}+2)} + 2 b_{2} c_{(1,t_{*}+3)} + \dotsc  + m b_{m} c_{(1,t_{*}+m+1)}] .
\end{equation}
Because of $c_{(1,t_{*}+m)}=0$ by assumption and $m \geq 2$, this expression collapses to 
\begin{equation}
 (\M \og)|_{(1,t_{*}+1)} =  2 m b_{m} c_{(1,t_{*}+m+1)} .
\end{equation}
and must itself vanish on account of
\begin{equation}
 0=(\og + \Rn \og)|_{(1,t_{*}+2)} = c_{(1,t_{*}+2)} + c_{(0,1)} (\M \og)_{(1,t_{*}+1)} = c_{(0,1)} (\M \og)_{(1,t_{*}+1)}
\end{equation}
where $c_{(1,t_{*}+2)}=0$ because of $m \geq 2$. But this entails $c_{(1,t_{*}+m+1)}=0$ which completes the induction on the second index $t$ and 
proves that the entire first sector vanishes. The induction step for the sectors $\sigma \geq 1$ leads to 

\begin{Proposition}[Kilroy]\label{Kilprop}
Let $\deg P(z) \geq 2$ and $\wg \in \Tu$ a solution of the Kilroy ODE (\ref{kilroytrans}). Then the entirety of all nonperturbative sectors vanishes,
ie $\W_{\sigma}\wg=0$ for all sectors $\sigma \geq 1$. 
\end{Proposition}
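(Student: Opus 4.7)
The plan is to perform strong induction on the sector index $\sigma \geq 1$, with base case $\sigma = 1$ already established in the discussion preceding the proposition statement. For the inductive step, I would assume $\W_\tau \og = 0$ for every $1 \leq \tau < \sigma$, equivalently $c_{(\tau, t)} = 0$ for all such $\tau$ and all $t$, and aim to conclude $\W_\sigma \og = 0$.

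Since $\Rn\og = \og \cdot (\M\og)$, one has
\begin{equation}
(\Rn\og)|_{(\sigma, t)} = \sum_{\sigma'+\sigma''=\sigma}\sum_{t'+t''=t}c_{(\sigma',t')}(\M\og)|_{(\sigma'',t'')}.
\end{equation}
The inductive hypothesis kills every contribution with $\sigma' \in \{1,\dotsc,\sigma-1\}$ or $\sigma'' \in \{1,\dotsc,\sigma-1\}$, so the double sum collapses to the same two-term shape that appeared in the $\sigma = 1$ analysis, namely
\begin{equation}
(\Rn\og)|_{(\sigma, t)} = \sum_{t'+t''=t}\left[c_{(\sigma,t')}(\M\og)|_{(0,t'')} + c_{(0,t')}(\M\og)|_{(\sigma,t'')}\right].
\end{equation}
From here the argument is, modulo the replacement $1 \mapsto \sigma$, a copy of the first-sector proof. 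Evaluating the ODE $(\og + \Rn\og)|_{(\sigma, t)} = 0$ at each negative $t \in \{-m+1,\dotsc,-1\}$ yields a triangular linear system in the unknowns $(\M\og)|_{(\sigma,-m)},\dotsc,(\M\og)|_{(\sigma,-2)}$ with diagonal entries $c_{(0,1)} \neq 0$, so $(\M\og)|_{(\sigma, t)} = 0$ for all $t \leq -2$. Feeding these zeros into the explicit formula
\begin{equation}
(\M\og)|_{(\sigma, t)} = [2(c\sigma + t) - 1]c_{(\sigma, t)} + 2\sigma \sum_{i=1}^{m}i b_i c_{(\sigma, t+i)}
\end{equation}
and sweeping from $t = -m$ upward forces $c_{(\sigma, t)} = 0$ for $t = 0,\dotsc, m-2$, using $b_m \neq 0$ at each step.

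To finish, I would bootstrap in $t$ exactly as before: assuming $(\M\og)|_{(\sigma,t)} = 0$ and $c_{(\sigma, t+m)} = 0$ for all $t \leq t_*$, the ODE at index $(\sigma, t_*+2)$ reduces to $c_{(0,1)}(\M\og)|_{(\sigma, t_*+1)} = 0$ because the term $c_{(\sigma, t_*+2)}$ itself vanishes by the inductive assumption, and this is precisely where the hypothesis $m \geq 2$ is used. The expanded form of $(\M\og)|_{(\sigma, t_*+1)} = 0$ then collapses, again by the inductive assumption, to $2m b_m c_{(\sigma, t_* + m + 1)} = 0$, extending the vanishing one further coefficient. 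The main obstacle is really bookkeeping: one must verify that $m \geq 2$ is exactly what makes the ODE's index shift land inside the already-killed region, and one should check that the $\sigma$-dependent diagonal factor $2(c\sigma+t)-1$ is never needed to be nonzero for the argument (the induction only ever uses that $c_{(0,1)}$ and $b_m$ are nonzero), so the result is insensitive to arithmetic resonances at higher $\sigma$.
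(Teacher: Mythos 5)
Your proposal is correct and follows essentially the same route as the paper: establish the base case $\sigma=1$ from the preceding first-sector analysis, use the inductive hypothesis to collapse the convolution sum $\sum_{\sigma'+\sigma''=\sigma}$ to the two-term form of the first-sector equation, and then rerun the triangular-system and index-shift cascade verbatim with $1$ replaced by $\sigma$. Your closing observation that only $c_{(0,1)}\neq 0$ and $b_m\neq 0$ are ever used (so no resonance conditions on $2(c\sigma+t)-1$ arise) matches the paper's remark that the first-sector argument nowhere used $\sigma=1$.
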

\begin{proof}
The preceding arguments have shown that $\W_{\sigma}\wg=0$ for $\sigma=1$. Assume $\W_{\sigma}\wg=0$ for all $\sigma \leq \sigma_{*}$. 
Then (\ref{kilroytrans}) implies  
\begin{equation}
0 = (\og + \Rn \og)|_{(\sigma_{*}+1,t)} = c_{(\sigma_{*}+1,t)} + \sum_{\sigma'+\sigma''= \sigma_{*}+1} \  \sum_{t'+t''= t} c_{(\sigma',t')} (\M \og)|_{(\sigma'',t'')}  
\end{equation}
which entails 
\begin{equation}\label{kili}
0 = c_{(\sigma_{*}+1,t)} +  \sum_{t'+t''= t}  \left[ c_{(\sigma_{*}+1,t')} (\M \og)|_{(0,t'')}  + c_{(0,t')} (\M \og)|_{(\sigma_{*}+1,t'')} \right]  .
\end{equation}
The remainder of the proof goes along the same lines as above, starting from (\ref{1stKil}) and going through all steps again. This is possible because
the above argument did not use the fact that $\sigma=1$ except when (\ref{1stKil}) was formulated. In fact, (\ref{kili}) sets off the same cascade of 
vanishing coefficients.
\end{proof}

The case $\deg P(z)=1$ is very distinct from the other cases because triviality of the transseries cannot be proven in the way we have done it here. 
Although the Kilroy ODE (\ref{kilroytrans}) is sector homogeneous, the coefficients of the first sector are not determined (to vanish) by the 
equations of this sector but by those of higher sectors, yet another reminder we did not get our transseries ansatz right in this work. 
However, we spare the reader these details and 'content' ourselves with the combination of the above no-go propositions \ref{Kilprop} and \ref{trivi} to 
conclude that the Kilroy DSE 
\begin{equation}
 \og = \mo_{2} \lim_{M \rightarrow \infty}\Qa_{M}(\og) =  \mo_{2} \left[ C_{0} + C_{1} \og 
 + C_{2} \left( \frac{\og_{2}}{2!} + \og \cdot \og \right) + \dotsc  \right] 
\end{equation}
has only one solution in $\Tu$, namely the perturbative one, whatever the degree of the instanton polynomial $P(z)$. For the QED case, we cannot be sure 
because increasing the degree of this polynomial might fix the problem. We suspect this not to be the case, but will not attempt to prove it here. 

\subsection{Logarithmic transmonomials}
Given these disappointing results, one might ask whether logarithmic transseries have the potential to make a difference. The answer is implied by what 
we found in §\ref{sec:DynSys}: adding $\mo_{3}=\log z$ as an extra transmonomial, we find that the first RG step involves 
\begin{equation}
(\M \og) |_{(l_{1},l_{2},l_{3})} = [s(cl_{1}+l_{2})-1]c_{(l_{1},l_{2},l_{3})} - s (l_{3}+1) c_{(l_{1},l_{2},l_{3}+1)} 
+ s l_{1} \sum_{i=1}^{m} i b_{i} c_{(l_{1},l_{2}+i,l_{3})}  
\end{equation}
which follows from (\ref{logs}). The devastating index shift we have discussed at the end of §\ref{subsec:neg} affects these coefficients in much the 
same way. Because the coefficients of $\og$ vanish for negative indices, the outcome is actually the same, whether there is an additional index 
attached to the coefficients or not. 

\subsection{ODE for QED} 
One might ask whether there is an analogue of the Kilroy ODE (\ref{kilroytrans}) in QED. The answer is that it is possible to formulate it, namely
\begin{equation}\label{QEDtrans}
 \gamma(\alpha) - \Rn \gamma(\alpha) = \gamma(\alpha) - \gamma(\alpha) (\alpha \partial_{\alpha} -1)\gamma(\alpha) = P(\alpha),
\end{equation}
where practically nothing is known about $P(\alpha)$, a function which is not to be confused with the instanton polynomial used in this paper.

We have studied this equation in some detail in \cite{KlaK13} and found some criteria concerning 
a Landau pole, which depends on the behaviour of $P(\alpha)$ (see also \cite{BaKUY09,BaKUY10} for a different approach). 
In particular, the toy situation $P(\alpha)=\alpha$, a direct analogue of the Kilroy ODE, is exactly solvable: the family 
\begin{equation}
 \gamma(\alpha) = \alpha + \alpha W \left(\xi e^{-1/\alpha}\right)
\end{equation}
indexed by $\xi \in \R$ has all possible solutions \cite{BaKUY09,Y11}, where $W(x)$ is the Lambert W function, defined by the transcendental equation 
$x = W(x) \exp W(x)$. We can use this function's convergent Taylor series 
\begin{equation}
 W(x) = \sum_{n \geq 1} \frac{(-n)^{n-1}}{n!} x^{n}
\end{equation}
and find the transseries expansion in $\alpha = z^{-1}$ given by 
\begin{equation}
 \wg(z) = z^{-1} + \sum_{n \geq 1} \frac{(-n)^{n-1}}{n!} \xi^{n} e^{-nz} z^{-1} = z^{-1} + \xi e^{-z} z^{-1} - \xi^{2} e^{-2z} z^{-1} + 
 \frac{3}{2} \xi^{3} e^{-3z} z^{-1} + \dotsc    
\end{equation}
which is very interesting since the perturbative part is totally trivial. Revisiting (\ref{pertsec}) shows why it happens if we use our transseries 
ansatz: set $s=1$ in (\ref{pertsec}) and use (\ref{QEDtrans}) with $P(\alpha)=\alpha$ to see it for the perturbative sector. For the nonperturbative sector, one 
finds $b_{1}=1$ and $\xi=c_{(1,0)}$ by means of (\ref{1stsec}). We can therefore not expect $P(\alpha)$ to be that simple and QED 
to have a direct analogue of the Kilroy ODE (\ref{kilroytrans}).

\section{Conclusion}\label{sec:Conclu}				                          
As we explained, there should be little doubt about the divergence of perturbation theory in our models. We 
interprete this as an indication for the possibility that their Dyson-Schwinger equations (DSEs) capture 
nontrivial physics, ie nonperturbative field configurations. Given the tenuous nonperturbative 
status of both theories, these may of course be only fictitious and not correspond to any real-world effects. 
But to describe such most likely high-energy states, the perturbative series need to be upgraded to 
resurgent transseries with additional transmonomials. 

The shortest possible summary of this paper is to say we have proved that for the anomalous dimension there is no transseries representation of the form 
\begin{equation}
 \wg(z) = \sum_{(\sigma,t) \geq (0,0)} c_{(\sigma,t)} z^{-\sigma c} e^{-\sigma P(z)} z^{-t} , \hs{0.5} \text{where } P(z)=b_{1}z + b_{2}z^{2} 
\end{equation}
with $\deg P(z) \in \{ 1,2 \}$ neither in the case of the Kilroy fermion nor of the photon, except the trivial perturbative series 
obtained from evaluating Feynman diagrams and that adding log polynomials
\begin{equation}
 \wg(z) = \sum_{(\sigma,t,j) \geq (0,0,0)} c_{(\sigma,t,j)} z^{-\sigma c} e^{-\sigma P(z)} z^{-t} (\log z)^{j} 
\end{equation}
makes no difference: all nonperturbative sectors vanish. 

Since our approach has been purely algebraic and technical, the only answer available as to why this happened 
is that the DSEs' skeletons involve a multiplication by powers of the coupling constant.
For the Yukawa model, we have seen that increasing the degree $m=\deg P(z)$ of the instanton polynomial 
\begin{equation}
 P(z) = b_{1}z + \dotsc + b_{m} z^{m}
\end{equation}
is no expedient. Although it is not entirely clear, we do not believe that lowering the lower bound of the support of the nonperturbative part provides 
a resolution to the problem. 

Because all proofs were inductive on the sector index $\sigma$, splitting the nonperturbative sectors into several components, eg by using the 
transmonomials
\begin{equation}
 \mathfrak{t}_{1} = e^{-b_{1}z} , \dotsc , \mathfrak{t}_{m}=e^{-b_{m}z^{m}},
\end{equation}
can also offer no way out: we would simply find that all coefficients with indices $(\sigma_{1}, 0, \dotsc , 0, t)$ vanish and then the remainder with indices
$(\sigma_{1}, \sigma_{2}, \dotsc , \sigma_{m}, t)$ as well.

The fact that we have used a one-parameter series is immaterial for our results. The reason is that it does not matter what number one puts in front of the 
instanton polynomial $P(z)$; if such exponential factors are not meant to be participating transmonomials, their coefficients will inexorably vanish.   

However, although we cannot resolve this issue, we have allowed ourselves to indulge in some measure of speculation to argue that 
\emph{renormalisation may very likely act as a game changer}. From the heuristic viewpoint of the semi-classical expansion, it introduces a highly
nontrivial coupling dependence. In perturbation theory, this may manifest itself by increasing the exponential size of the Borel transform, even 
though the original perturbative series may still be of type Gevrey 1. This brings upon us the necessity to consider multisummability with 
Borel-Laplace transforms of higher indices which leads to higher level Stokes effects and hence $\deg P(z) \geq 2$ in our transseries ansatz. 

Since our negative result for the Yukawa model suggests that even this may not be sufficient, \emph{superexponentials} like the example (\ref{super})
have to be reckoned with. Interestingly, Stingl let his imagination fly\footnote{In his own words.} to speculate in \cite{Sti02} on p.60 about 
such (perhaps appealing) monstrosities. 

It is clearly too early and certainly premature to carry on creating uneducated guesswork: 
(multiplicative) renormalisation is a very subtle issue and things may be entirely different, especially 
for nonabelian gauge theories which have been and are still to this day being successfully tackled on 
the lattice.

Nonetheless, the author's humble personal opinion is that one should keep these possibilities in the back of 
one's mind when investigating nonperturbative aspects of renormalised quantum field theories. Besides, given their extreme smallness, 
superexponentials are very unlikely to be detected by lattice calculations any time soon: 
just evaluate the transmonomial $\mathfrak{n}_2$ in (\ref{super}) at $z=1/\alpha = 137$ and see what happens.

As mentioned briefly in §\ref{subsec:transQFT}, the results of \cite{APSaW15} (tentatively) suggest that 
bound states such as positronium manifest themselves in the form of transmonomials like 
\begin{equation}
\mathfrak{l}^{(n,m)} = \mathfrak{l}_1^{n}\mathfrak{l}_2^{m} = z^{-n} (\ln z)^{m}  
\end{equation}
which seem not directly related to the Stokes effect as they have no exponentials attached to them. 
They obviously make themselves felt at weak couplings, ie at large $z$, which makes perfect sense since 
such states lie in low-energy regimes. Unfortunately, we have not tested our equations
for such transmonomials but investigations in this direction are underway. 

Finally, we would like to mention that the Mellin transform method utilised in this work does not straightforwardly carry over to lower dimensions. 

\section*{Acknowledgements}

I thank Dirk Kreimer for supporting me and suggesting this interesting topic. I am grateful to David Broadhurst for a clarifying discussion concerning 
his work on the Kilroy ODE and to Robert Delbourgo, Stanley J. Brodsky and John Gracey for helping me trying to track down the original 
source of the self-consistent photon DSE (\ref{phoDSE1}). Furthermore, I owe Gerald Dunne special thanks. He has provided me with many 
references and interesting comments regarding the history of the divergence of perturbation theory in QFT. Thanks also go out to Ilmar Gahramanov and 
Daniel Krefl who made me aware of their contributions and, in particular, to Ricardo Schiappa for his valuable feedback to the first arXiv version.           

Last but not least, I would like to express my appreciation for the critical comments made by the anonymous 
referee.

\end{document}